\newtheorem{theorem}{Theorem}
\theoremstyle{plain}
\newtheorem{definition}{Definition}
\newtheorem{lemma}{Lemma}
\newtheorem{remark}{Remark}
\numberwithin{equation}{section}
\let\pdfoutput=\undefined\fi
\begin{document}
\title[$p$-adic Invariant Quantum Fields\ via \ White Noise \ Analysis]{Construction of $p$-adic Covariant Quantum Fields in the Framework of White
Noise Analysis }
\author{Edilberto Arroyo-Ortiz}
\author{W. A. Z\'{u}\~{n}iga-Galindo}
\address{Centro de Investigaci\'{o}n y de Estudios Avanzados del Instituto
Polit\'{e}cnico Nacional\\
Departamento de Matem\'{a}ticas, Unidad Quer\'{e}taro\\
Libramiento Norponiente \#2000, Fracc. Real de Juriquilla. Santiago de
Quer\'{e}taro, Qro. 76230\\
M\'{e}xico.}
\email{earroyo@math.cinvestav.mx, \ wazuniga@math.cinvestav.edu.mx}
\thanks{The second author was partially supported by Conacyt Grant No. 250845.}
\subjclass[2000]{Primary 60H40, 81E05; Secondary 11Q25, 46S10}
\keywords{White noise, Kondratiev spaces, Hida spaces, Euclidean quantum fields,
non-Archimedean analysis.}

\begin{abstract}
In this article we construct a large class of interacting Euclidean quantum
field theories, over a $p$-adic space time, by using white noise calculus. We
introduce $p$-adic versions of the Kondratiev and Hida spaces in order to use
the Wick calculus on the Kondratiev spaces. The quantum fields introduced here
fulfill all the Osterwalder-Schrader axioms, except the reflection positivity.

\end{abstract}
\maketitle

\section{Introduction}

In this article, we construct interacting Euclidean quantum field theories,
over a $p$-adic spacetime, in arbitrary dimension, which satisfy all the
Osterwal\-der-Schrader axioms \cite{Osterwalder-Schrader} except for
reflection positivity. More precisely, we present a $p$-adic analogue of the
interacting field theories constructed by Grothaus and Streit in
\cite{GS1999}. The basic objects of an Euclidean quantum field theory are
probability measures on distributions spaces, in the classical case, on the
space of tempered distributions $\mathcal{S}^{\prime}(\mathbb{R}^{n})$. In
conventional quantum field theory (QFT) there have been some studies devoted
to the optimal choice of the space of test functions. In \cite{Jaffe}, Jaffe
discussed this topic (see also \cite{Lopu} and \cite{Strocchi}); his
conclusion was that, rather than an optimal choice, there exists a set of
conditions that must be satisfied by the candidate space, and any class of
test functions with these properties should be considered as valid. The main
condition is that the space of test functions must be a nuclear countable
Hilbert one. This fact constitutes the main mathematical motivation the study
of QFT on general nuclear spaces.

A physical motivation for studying QFT in the $p-$adic setting comes from the
conjecture of Volovich stating that spacetime has a non-Archimedean nature at
the Planck scale, \cite{Vol}, see also \cite{Var1}. The existence of the
Planck scale implies that below it the very notion of measurement as well as
the idea of `infinitesimal length' become meaningless, and this fact
translates into the mathematical statement that the Archimedean axiom is no
longer valid, which in turn drives to consider models based on $p$-adic
numbers. In the \ $p$-adic framework, the relevance of constructing quantum
field theories was stressed in \cite{V-V-Z} and \cite{Var2}. In the last 35
years \ $p$-adic QFT has attracted a lot of attention of physicists and
mathematicians, see e.g. \cite{Abd et al}, \cite{Dra01}-\cite{EU66},
\cite{Gubser}, \cite{Koch-Sait}-\cite{Koch}, \cite{Kh1}-\cite{LM89},
\cite{Mendoza-Zuniga}-\cite{Mis2}, \cite{Smirnov}-\cite{Smirnov-2},
\cite{Var1}-\cite{Zuniga-LNM-2016}, and the references therein.

A $p$-adic number is a sequence of the form%
\begin{equation}
x=x_{-k}p^{-k}+x_{-k+1}p^{-k+1}+\ldots+x_{0}+x_{1}p+\ldots,\text{ with }%
x_{-k}\neq0\text{,} \label{p-adic-number}%
\end{equation}
where $p$ denotes a fixed prime number, and the $x_{j}$s \ are $p$-adic
digits, i.e. numbers in the set $\left\{  0,1,\ldots,p-1\right\}  $. There are
natural field operations, sum and multiplication, on series of form
(\ref{p-adic-number}). The set of all possible $p$-adic sequences constitutes
the field of $p$-adic numbers $\mathbb{Q}_{p}$. The field $\mathbb{Q}_{p}$ can
not be ordered. There is also a natural norm in $\mathbb{Q}_{p}$ defined as
$\left\vert x\right\vert _{p}=p^{k}$, for a nonzero $p$-adic number $x$ of the
form (\ref{p-adic-number}). The field of $p$-adic numbers with the distance
induced by $\left\vert \cdot\right\vert _{p}$ is a complete ultrametric space.
The ultrametric property refers to the fact that $\left\vert x-y\right\vert
_{p}\leq\max\left\{  \left\vert x-z\right\vert _{p},\left\vert z-y\right\vert
_{p}\right\}  $ for any $x$, $y$, $z$ in $\mathbb{Q}_{p}$. As a topological
space, $\left(  \mathbb{Q}_{p},\left\vert \cdot\right\vert _{p}\right)  $ is
completely disconnected, i.e. the connected components are points. The field
of $p$-adic numbers has a fractal structure, see e.g. \cite{A-K-S},
\cite{V-V-Z}. All these results can be extended easily to $\mathbb{Q}_{p}^{N}%
$, see Section \ref{Sction_2}.

In \cite{Zuniga-FAA-2017}, see also \cite[Chapter 11]{KKZuniga}, the second
author introduced a class of non-Archimedean massive Euclidean fields, in
arbitrary dimension, which are constructed as solutions of certain covariant
$p$-adic stochastic pseudodifferential equations, by using techniques of white
noise calculus. In particular a new non-Archimedean Gel'fand triple was
introduced. By using this new triple, here we introduce non-Archimedean
versions of the Kondratiev and Hida spaces, see Section \ref{Section_4}. The
non-Archimedean Kondratiev spaces, denoted as $\left(  \mathcal{H}_{\infty
}\right)  ^{1}$, $\left(  \mathcal{H}_{\infty}\right)  ^{-1}$, play a central
role in this article.

Formally an interacting field theory with interaction $V$ \ has associated a
measure of the form%
\begin{equation}
d\mu_{V}=\frac{\exp\left(  -\int_{\mathbb{Q}_{p}^{N}}V\left(  \boldsymbol{\Phi
}\left(  x\right)  \right)  d^{N}x\right)  d\mu}{\int\exp\left(
-\int_{\mathbb{Q}_{p}^{N}}V\left(  \boldsymbol{\Phi}\left(  x\right)  \right)
d^{N}x\right)  d\mu}, \label{Eq_A1}%
\end{equation}
where $\mu$ is the Gaussian white noise measure, $\boldsymbol{\Phi}\left(
x\right)  $ is a random process at the point $x\in\mathbb{Q}_{p}^{N}$. \ In
general $\boldsymbol{\Phi}\left(  x\right)  $ is not an integrable function
rather a distribution, thus a natural problem is how to define $V\left(
\boldsymbol{\Phi}\left(  x\right)  \right)  $. For a review about the
techniques for regularizing $V\left(  \boldsymbol{\Phi}\left(  x\right)
\right)  $ and the construction of the associated measures, the reader may
consult \cite{Glimm-Jaffe}, \cite{GS1999}, \cite{Simon}, \cite{Strocchi} and
the references therein.

Following \cite{GS1999}, we consider the following generalized white
functional:%
\begin{equation}
\boldsymbol{\Phi}_{H}=\exp^{\lozenge}\left(  -\int_{\mathbb{Q}_{p}^{N}%
}H^{\lozenge}\left(  \boldsymbol{\Phi}\left(  x\right)  \right)
d^{N}x\right)  , \label{Eq_A2}%
\end{equation}
where $H$ is analytic function at the origin satisfying $H(0)=0$. The Wick
ana\-lytic function $H^{\lozenge}\left(  \boldsymbol{\Phi}\left(  x\right)
\right)  $ of process $\boldsymbol{\Phi}\left(  x\right)  $ coincides with the
usual Wick ordered function $:H\left(  \boldsymbol{\Phi}\left(  x\right)
\right)  :$ when $H$ is a polynomial function. It turns out that $H^{\lozenge
}\left(  \boldsymbol{\Phi}\left(  x\right)  \right)  $ is a distribution from
the Kondratiev space $\left(  \mathcal{H}_{\infty}\right)  ^{-1}$, \ and
consequently, its integral belong to $\left(  \mathcal{H}_{\infty}\right)
^{-1}$, if it exists. In general we cannot take the exponential of \ $-\int
H^{\lozenge}\left(  \boldsymbol{\Phi}\left(  x\right)  \right)  d^{N}x$,
however, by using the Wick calculus in $\left(  \mathcal{H}_{\infty}\right)
^{-1}$, see Section \ref{Section Wick_analytic_fun}, we can take the Wick
exponential $\exp^{\lozenge}\left(  \cdot\right)  $.

In certain cases, for instance when $H$ is linear or is a polynomial of even
degree, see \cite{Koch-Sait}, and if we integrate only over a compact subset
$K$ of $\mathbb{Q}_{p}^{N}$ (the space cutoff), the function $\boldsymbol{\Phi
}_{H}$ is integrable, and we have a direct correspondence between
(\ref{Eq_A1}) and (\ref{Eq_A2}), i.e.
\[
\boldsymbol{\Phi}_{H}d\mu=\left\{  \frac{\exp\left(  -\int_{K}H^{\lozenge
}\left(  \boldsymbol{\Phi}\left(  x\right)  \right)  d^{N}x\right)  }{\int
\exp\left(  -\int_{K}H^{\lozenge}\left(  \boldsymbol{\Phi}\left(  x\right)
\right)  d^{N}x\right)  d\mu}\right\}  d\mu.
\]
In general the distribution $\boldsymbol{\Phi}_{H}$\ is not necessarily
positive, and for a large class of functions $H$, there are no measures
representing $\boldsymbol{\Phi}_{H}$. It turns out that $\boldsymbol{\Phi}%
_{H}$ can be represented by a measure if and only if $-H(it)+\frac{1}{2}t^{2}%
$, $t\in\mathbb{R}$, is a L\'{e}vy characteristic, see Theorem \ref{Theorem2A}%
. These measures are called generalized white noise measures.

Generalized white measures were considered in \cite{Zuniga-FAA-2017}, in the
$p$-adic framework, and in the Archimedean case in \cite{Albeverio-et-al-3}%
-\cite{Albeverio-et-al-4}. Euclidean random fields over $\mathbb{Q}_{p}^{N}$
were constructed by convolving generalized white noise with the fundamental
solutions of certain $p$-adic pseudodifferential equations. These fundamental
solutions are invariant under the action of a $p$-adic version of the
Euclidean group, see Section \ref{Sect_symmetries}.

For all convoluted generalized white noise measures such that their L\'{e}vy
characteristics have an analytic extension at the origin, we can give an
explicit formula for the generalized density with respect to the white noise
measure, see Theorem \ref{Theorem2}. In addition, there exists a large class
of distributions $\boldsymbol{\Phi}_{H}$ of type (\ref{Eq_A2}) that do not
have an associated measure, see Remark \ref{Nota_Theorem_3}. We also prove
that the Schwinger functions corresponding to convoluted generalized functions
\ satisfy Osterwalder-Schrader axioms (axioms OS1, OS2, OS4, OS5 in the
notation used in \cite{GS1999}) except for reflect positivity, see Lemma
\ref{Lemma1}, Theorems \ref{Theorem2}, \ref{Theorem3}, just like in the
Archimedean case presented in \cite{GS1999}.

The $p$-adic spacetime $\left(  \mathbb{Q}_{p}^{N},\mathfrak{q}\left(
\xi\right)  \right)  $ is a $\mathbb{Q}_{p}$-vector space of dimension $N$
with an elliptic quadratic form $\mathfrak{q}\left(  \xi\right)  $, i.e.
$\mathfrak{q}\left(  \xi\right)  =0\Leftrightarrow\xi=0$. This spacetime
differs from the classical spacetime $\left(  \mathbb{R}^{N},\xi_{1}%
^{2}+\cdots+\xi_{N}^{2}\right)  $ in several aspects. The $p$-adic spacetime
is not an `infinitely divisible continuum', because $\mathbb{Q}_{p}^{N}$ is a
completely disconnected topological space, the connected components (the
points) play the role of `spacetime quanta'. Since $\mathbb{Q}_{p}$ is not an
ordered field, the notions of past and future do not exist, then any $p$-adic
QFT is an acausal theory. The reader may consult the introduction of
\cite{Mendoza-Zuniga} for an in-depth discussion of this matter. Consequently,
the reflection positivity, if it exists in the $p$-adic framework, requires a
particular formulation, that we do not know at the moment. The study of the
$p$-adic Wightman functions via the reconstruction theorem is an open problem.

Another important difference between the classical case and the $p$-adic one
comes from the fact that in the $p$-adic setting there are no elliptic
quadratic forms in dimension $N\geq5$. We replace $\mathfrak{q}\left(
\xi\right)  $ by an elliptic polynomial $\mathfrak{l}\left(  \xi\right)  $,
which is a homogeneous polynomial satisfying $\mathfrak{l}\left(  \xi\right)
=0\Leftrightarrow\xi=0$. For any dimension $N$ there are elliptic polynomials
of degree $d\geq2$. We use $\left\vert \mathfrak{l}\left(  \xi\right)
\right\vert _{p}^{\frac{2}{d}}$ as a replacement of $\left\vert \mathfrak{q}%
\left(  \xi\right)  \right\vert _{p}$. This approach is particularly useful to
define the $p$-adic Laplace equation that the (free) covariance function
$C_{p}\left(  x-y\right)  $ satisfies, this equation has the following form:%
\[
\left(  \boldsymbol{L}_{\alpha}+m^{2}\right)  C_{p}\left(  x-y\right)
=\delta\left(  x-y\right)  \text{, \ }x\text{, }y\in\mathbb{Q}_{p}^{N},
\]
where $\alpha>0$, $m>0$ and $\boldsymbol{L}_{\alpha}$, is the
pseudodifferential operator
\[
\boldsymbol{L}_{\alpha}\varphi\left(  x\right)  =\mathcal{F}_{\xi\rightarrow
x}^{-1}(\left\vert \mathfrak{l}\left(  \xi\right)  \right\vert _{p}^{\alpha
}\mathcal{F}_{x\rightarrow\xi}\varphi),
\]
here $\mathcal{F}$ denotes the Fourier transform. The QFTs presented here are
families depending on several parameters, among them, $p$, $\alpha$, $m$,
$\mathfrak{l}\left(  \xi\right)  $.

The $p$-adic free covariance $C_{p}\left(  x-y\right)  $ may have
singularities at the origin depending on the parameters $\alpha$, $d$, $N$,
and has a `polynomial' decay at infinity, see Section \ref{Section_C_p}. The
$p$-adic cluster property holds under the condition $\alpha d>N$. Under this
hypothesis the covariance function \ does not have singularities at the
origin. Since $\alpha$ is a `free' parameter, this condition can be satisfied
in any dimension. We think that the condition $\alpha d>N$ is completely
necessary to have the cluster property due to the fact that our test functions
do not decay exponentially at infinity, see Remark \ref{Nota_Cluster}.

\section{\label{Sction_2}$p$\textbf{-}Adic Analysis: Essential Ideas}

In this section we collect some basic results about $p$-adic analysis that
will be used in the article. For an in-depth review of the $p$-adic analysis
the reader may consult \cite{A-K-S}, \cite{Taibleson}, \cite{V-V-Z}.

\subsection{The field of $p$-adic numbers}

Along this article $p$ will denote a prime number. The field of $p-$adic
numbers $%
\mathbb{Q}
_{p}$ is defined as the completion of the field of rational numbers
$\mathbb{Q}$ with respect to the $p-$adic norm $|\cdot|_{p}$, which is defined
as
\[
\left\vert x\right\vert _{p}=\left\{
\begin{array}
[c]{lll}%
0 & \text{if} & x=0\\
&  & \\
p^{-\gamma} & \text{if} & x=p^{\gamma}\frac{a}{b}\text{,}%
\end{array}
\right.
\]
where $a$ and $b$ are integers coprime with $p$. The integer $\gamma:=ord(x)
$, with $ord(0):=+\infty$, is called the\textit{\ }$p-$\textit{adic order of}
$x$.

Any $p-$adic number $x\neq0$ has a unique expansion of the form
\[
x=p^{ord(x)}\sum_{j=0}^{\infty}x_{j}p^{j},
\]
where $x_{j}\in\{0,\dots,p-1\}$ and $x_{0}\neq0$. By using this expansion, we
define \textit{the fractional part of }$x\in\mathbb{Q}_{p}$, denoted
$\{x\}_{p}$, as the rational number
\[
\left\{  x\right\}  _{p}=\left\{
\begin{array}
[c]{lll}%
0 & \text{if} & x=0\text{ or }ord(x)\geq0\\
&  & \\
p^{ord(x)}\sum_{j=0}^{-ord_{p}(x)-1}x_{j}p^{j} & \text{if} & ord(x)<0.
\end{array}
\right.
\]
In addition, any non-zero $p-$adic number can be represented uniquely as
$x=p^{ord(x)}ac\left(  x\right)  $ where $ac\left(  x\right)  =\sum
_{j=0}^{\infty}x_{j}p^{j}$, $x_{0}\neq0$, is called the \textit{angular
component} of $x$. Notice that $\left\vert ac\left(  x\right)  \right\vert
_{p}=1$.

We extend the $p-$adic norm to $%
\mathbb{Q}
_{p}^{N}$ by taking
\[
||x||_{p}:=\max_{1\leq i\leq N}|x_{i}|_{p},\text{ for }x=(x_{1},\dots
,x_{N})\in%
\mathbb{Q}
_{p}^{N}.
\]
We define $ord(x)=\min_{1\leq i\leq N}\{ord(x_{i})\}$, then $||x||_{p}%
=p^{-ord(x)}$.\ The metric space $\left(
\mathbb{Q}
_{p}^{N},||\cdot||_{p}\right)  $ is a complete ultrametric space. For
$r\in\mathbb{Z}$, denote by $B_{r}^{N}(a)=\{x\in%
\mathbb{Q}
_{p}^{N};||x-a||_{p}\leq p^{r}\}$ \textit{the ball of radius }$p^{r}$
\textit{with center at} $a=(a_{1},\dots,a_{N})\in%
\mathbb{Q}
_{p}^{N}$, and take $B_{r}^{N}(0):=B_{r}^{N}$. Note that $B_{r}^{N}%
(a)=B_{r}(a_{1})\times\cdots\times B_{r}(a_{N})$, where $B_{r}(a_{i}):=\{x\in%
\mathbb{Q}
_{p};|x_{i}-a_{i}|_{p}\leq p^{r}\}$ is the one-dimensional ball of radius
$p^{r}$ with center at $a_{i}\in%
\mathbb{Q}
_{p}$. The ball $B_{0}^{N}$ equals the product of $N$ copies of $B_{0}%
=\mathbb{Z}_{p}$, \textit{the ring of }$p-$\textit{adic integers of }$%
\mathbb{Q}
_{p}$. We also denote by $S_{r}^{N}(a)=\{x\in\mathbb{Q}_{p}^{N};||x-a||_{p}%
=p^{r}\}$ \textit{the sphere of radius }$p^{r}$ \textit{with center at}
$a=(a_{1},\dots,a_{N})\in%
\mathbb{Q}
_{p}^{N}$, and take $S_{r}^{N}(0):=S_{r}^{N}$. We notice that $S_{0}%
^{1}=\mathbb{Z}_{p}^{\times}$ (the group of units of $\mathbb{Z}_{p}$), but
$\left(  \mathbb{Z}_{p}^{\times}\right)  ^{N}\subsetneq S_{0}^{N}$. The balls
and spheres are both open and closed subsets in $%
\mathbb{Q}
_{p}^{N}$. In addition, two balls in $%
\mathbb{Q}
_{p}^{N}$ are either disjoint or one is contained in the other.

As a topological space $\left(
\mathbb{Q}
_{p}^{N},||\cdot||_{p}\right)  $ is totally disconnected, i.e. the only
connected \ subsets of $%
\mathbb{Q}
_{p}^{N}$ are the empty set and the points. A subset of $%
\mathbb{Q}
_{p}^{N}$ is compact if and only if it is closed and bounded in $%
\mathbb{Q}
_{p}^{N}$, see e.g. \cite[Section 1.3]{V-V-Z}, or \cite[Section 1.8]{A-K-S}.
The balls and spheres are compact subsets. Thus $\left(
\mathbb{Q}
_{p}^{N},||\cdot||_{p}\right)  $ is a locally compact topological space.

We will use $\Omega\left(  p^{-r}||x-a||_{p}\right)  $ to denote the
characteristic function of the ball $B_{r}^{N}(a)$. We will use the notation
$1_{A}$ for the characteristic function of a set $A$. Along the article
$d^{N}x$ will denote a Haar measure on $\left(
\mathbb{Q}
_{p}^{N},+\right)  $ normalized so that $\int_{%
\mathbb{Z}
_{p}^{N}}d^{N}x=1.$

\subsection{Some function spaces}

A complex-valued function $\varphi$ defined on $%
\mathbb{Q}
_{p}^{N}$ is \textit{called locally constant} if for any $x\in%
\mathbb{Q}
_{p}^{N}$ there exist an integer $l(x)\in\mathbb{Z}$ such that
\[
\varphi(x+x^{\prime})=\varphi(x)\text{ for }x^{\prime}\in B_{l(x)}^{N}.
\]
A function $\varphi:%
\mathbb{Q}
_{p}^{N}\rightarrow\mathbb{C}$ is called a \textit{Bruhat-Schwartz function
(or a test function)} if it is locally constant with compact support. The
$\mathbb{C}$-vector space of Bruhat-Schwartz functions is denoted by
$\mathcal{D}:=\mathcal{D}(%
\mathbb{Q}
_{p}^{N})$. Let $\mathcal{D}^{\prime}:=\mathcal{D}^{\prime}(%
\mathbb{Q}
_{p}^{N})$ denote the set of all continuous functional (distributions) on
$\mathcal{D}$.

We will denote by $\mathcal{D}_{\mathbb{R}}:=\mathcal{D}_{\mathbb{R}}(%
\mathbb{Q}
_{p}^{N})$, the $\mathbb{R}$-vector space of test functions, and by
$\mathcal{D}_{\mathbb{R}}^{\prime}:=\mathcal{D}_{\mathbb{R}}^{\prime}(%
\mathbb{Q}
_{p}^{N})$, the $\mathbb{R}$-vector space of distributions.

Given $\rho\in\lbrack0,\infty)$, we denote by $L^{\rho}:=L^{\rho}\left(
\mathbb{Q}
_{p}^{N}\right)  :=L^{\rho}\left(
\mathbb{Q}
_{p}^{N},d^{N}x\right)  ,$ the $%
\mathbb{C}
-$vector space of all the complex valued functions $g$ satisfying $\int_{%
\mathbb{Q}
_{p}^{N}}\left\vert g\left(  x\right)  \right\vert ^{\rho}d^{N}x<\infty$, and
$L^{\infty}\allowbreak:=L^{\infty}\left(
\mathbb{Q}
_{p}^{N}\right)  =L^{\infty}\left(
\mathbb{Q}
_{p}^{N},d^{N}x\right)  $ denotes the $%
\mathbb{C}
-$vector space of all the complex valued functions $g$ such that the essential
supremum of $|g|$ is bounded. The corresponding $\mathbb{R}$-vector spaces are
denoted as $L_{\mathbb{R}}^{\rho}\allowbreak:=L_{\mathbb{R}}^{\rho}\left(
\mathbb{Q}
_{p}^{N}\right)  =L_{\mathbb{R}}^{\rho}\left(
\mathbb{Q}
_{p}^{N},d^{N}x\right)  $, $1\leq\rho\leq\infty$.

Set
\[
\mathcal{C}_{0}(%
\mathbb{Q}
_{p}^{N},\mathbb{C}):=\left\{  f:%
\mathbb{Q}
_{p}^{N}\rightarrow%
\mathbb{C}
;\text{ }f\text{ is continuous and }\lim_{||x||_{p}\rightarrow\infty
}f(x)=0\right\}  ,
\]
where $\lim_{||x||_{p}\rightarrow\infty}f(x)=0$ means that for every
$\epsilon>0$ there exists a compact subset $B(\epsilon)$ such that
$|f(x)|<\epsilon$ for $x\in%
\mathbb{Q}
_{p}^{N}\backslash B(\epsilon).$ We recall that $(\mathcal{C}_{0}(%
\mathbb{Q}
_{p}^{N},\mathbb{C}),||\cdot||_{L^{\infty}})$ is a Banach space. The
corresponding $\mathbb{R}$-vector space will be denoted as $\mathcal{C}_{0}(%
\mathbb{Q}
_{p}^{N},\mathbb{R})$.

\subsection{Fourier transform}

Set $\chi_{p}(y):=\exp(2\pi i\{y\}_{p})$ for $y\in%
\mathbb{Q}
_{p}$. The map $\chi_{p}(\cdot)$ is an additive character on $%
\mathbb{Q}
_{p}$, i.e. a continuous map from $\left(
\mathbb{Q}
_{p},+\right)  $ into $S$ (the unit circle considered as multiplicative group)
satisfying $\chi_{p}(x_{0}+x_{1})=\chi_{p}(x_{0})\chi_{p}(x_{1})$,
$x_{0},x_{1}\in%
\mathbb{Q}
_{p}$. The additive characters of $%
\mathbb{Q}
_{p}$ form an Abelian group which is isomorphic to $\left(
\mathbb{Q}
_{p},+\right)  $, the isomorphism is given by $\xi\rightarrow\chi_{p}(\xi x)$,
see e.g. \cite[Section 2.3]{A-K-S}.

Given $x=(x_{1},\dots,x_{N}),$ $\xi=(\xi_{1},\dots,\xi_{N})\in%
\mathbb{Q}
_{p}^{N}$, we set $x\cdot\xi:=\sum_{j=1}^{N}x_{j}\xi_{j}$. If $f\in L^{1}$ its
Fourier transform is defined by
\[
(\mathcal{F}f)(\xi)=\int_{%
\mathbb{Q}
_{p}^{N}}\chi_{p}(\xi\cdot x)f(x)d^{N}x,\quad\text{for }\xi\in%
\mathbb{Q}
_{p}^{N}.
\]
We will also use the notation $\mathcal{F}_{x\rightarrow\xi}f$ and
$\widehat{f}$\ for the Fourier transform of $f$. The Fourier transform is a
linear isomorphism from $\mathcal{D}(%
\mathbb{Q}
_{p}^{N})$ onto itself satisfying
\begin{equation}
(\mathcal{F}(\mathcal{F}f))(\xi)=f(-\xi), \label{FF(f)}%
\end{equation}
for every $f\in\mathcal{D}(%
\mathbb{Q}
_{p}^{N}),$ see e.g. \cite[Section 4.8]{A-K-S}. If $f\in L^{2},$ its Fourier
transform is defined as
\[
(\mathcal{F}f)(\xi)=\lim_{k\rightarrow\infty}\int_{||x||_{p}\leq p^{k}}%
\chi_{p}(\xi\cdot x)f(x)d^{N}x,\quad\text{for }\xi\in%
\mathbb{Q}
_{p}^{N},
\]
where the limit is taken in $L^{2}.$ We recall that the Fourier transform is
unitary on $L^{2},$ i.e. $||f||_{L^{2}}=||\mathcal{F}f||_{L^{2}}$ for $f\in
L^{2}$ and that (\ref{FF(f)}) is also valid in $L^{2}$, see e.g. \cite[Chapter
III, Section 2]{Taibleson}.

The Fourier transform $\mathcal{F}\left[  W\right]  $ of a distribution
$W\in\mathcal{D}^{\prime}\left(
\mathbb{Q}
_{p}^{N}\right)  $ is defined by%
\[
\left(  \mathcal{F}\left[  W\right]  ,\varphi\right)  =\left(  W,\mathcal{F}%
\left[  \varphi\right]  \right)  \text{ for all }\varphi\in\mathcal{D}(%
\mathbb{Q}
_{p}^{N})\text{.}%
\]
The Fourier transform $W\rightarrow\mathcal{F}\left[  W\right]  $ is a linear
isomorphism from $\mathcal{D}^{\prime}\left(
\mathbb{Q}
_{p}^{N}\right)  $\ onto itself. Furthermore, $W=\mathcal{F}\left[
\mathcal{F}\left[  W\right]  \left(  -\xi\right)  \right]  $. We also use the
notation $\mathcal{F}_{x\rightarrow\xi}W$ and $\widehat{W}$ for the Fourier
transform of $W.$

\section{$p$-Adic White Noise}

In this section we review some basic aspects of the white noise calculus in
the $p$-adic setting. For a in-depth exposition on the white noise calculus on
arbitrary nuclear spaces the reader may consult \cite{Ber-Kon},
\cite{Gelfan-Vilenkin}, \cite{Hida et al}, \cite{Huang-Yang}, \cite{Obata}. We
will use white noise calculus on the nuclear spaces $\mathcal{H}_{\infty}$
introduced by Z\'{u}\~{n}iga-Galindo in \cite{Zuniga-FAA-2017}, see also
\cite[Chapters 10, 11]{KKZuniga}.

\subsection{A class of non-Archimedean nuclear spaces}

\subsubsection{\label{Section1}$\mathcal{H}_{\infty}$, a non-Archimedean
analog of the Schwartz space}

We denote the set on non-negative integers by $\mathbb{N}$, and set $\left[
\xi\right]  _{p}:=[\max(1,\Vert\xi\Vert_{p})]$ for $\xi\in\mathbb{Q}_{p}^{N}$.
We define for $\varphi$, $\theta\in\mathcal{D}(\mathbb{Q}_{p}^{N})$, and
$l\in\mathbb{N}$, the following scalar product:%
\[
\left\langle \varphi,\theta\right\rangle _{l}=\int_{\mathbb{Q}_{p}^{N}}\left[
\xi\right]  _{p}^{l}\overline{\widehat{\varphi}\left(  \xi\right)  }%
\widehat{\theta}\left(  \xi\right)  d^{N}\xi\text{,}%
\]
where the overbar denotes the complex conjugate. We also set $\left\Vert
\varphi\right\Vert _{l}:=\left\langle \varphi,\varphi\right\rangle _{l}$.
Notice that $\left\Vert \cdot\right\Vert _{l}\leq\left\Vert \cdot\right\Vert
_{m}$ for $l\leq m$. We denote by $\mathcal{H}_{l}(\mathbb{C}):=$
$\mathcal{H}_{l}(\mathbb{Q}_{p}^{N},\mathbb{C})$ the complex Hilbert space
obtained by completing $\mathcal{D}(\mathbb{Q}_{p}^{N})$ with respect to
$\left\langle \cdot,\cdot\right\rangle _{l}$. Then $\mathcal{H}_{m}%
(\mathbb{C})\hookrightarrow\mathcal{H}_{l}(\mathbb{C})$ for $l\leq m$. Now we
set
\[
\mathcal{H}_{\infty}(\mathbb{C}):=\mathcal{H}_{\infty}(\mathbb{Q}_{p}%
^{N},\mathbb{C})=%
{\displaystyle\bigcap\nolimits_{l\in\mathbb{N}}}
\mathcal{H}_{l}(\mathbb{C}).
\]
Notice that $\mathcal{H}_{\infty}(\mathbb{C})\subset L^{2}$. With the topology
induced by the family of seminorms $\left\{  \left\Vert \cdot\right\Vert
_{l}\right\}  _{l\in\mathbb{N}}$, $\mathcal{H}_{\infty}(\mathbb{C})$ becomes a
locally convex space, which is metrizable. Indeed,%
\[
d(f,g):=\max_{l\in\mathbb{N}}\left\{  2^{-l}\frac{\left\Vert f-g\right\Vert
_{l}}{1+\left\Vert f-g\right\Vert _{l}}\right\}  \text{, with }f\text{, }%
g\in\mathcal{H}_{\infty}(\mathbb{C})\text{, }%
\]
is a metric for the topology of $\mathcal{H}_{\infty}(\mathbb{C})$. The
projective topology $\tau_{P}$ of $\mathcal{H}_{\infty}(\mathbb{C})$ coincides
with the topology induced by the family of seminorms $\left\{  \left\Vert
\cdot\right\Vert _{l}\right\}  _{l\in\mathbb{N}}$. The space $\mathcal{H}%
_{\infty}(\mathbb{C})$ endowed with the topology $\tau_{P}$ is a countably
Hilbert space in the sense of Gel'fand-Vilenkin. Furthermore, $\left(
\mathcal{H}_{\infty}(\mathbb{C}),\tau_{P}\right)  $ is metrizable and complete
and hence a Fr\'{e}chet space, cf. \cite[Lemma 10.3]{KKZuniga}, see also
\cite{Zuniga-FAA-2017}.

The space $(\mathcal{H}_{\infty}(\mathbb{C}),d)$ is the completion of
$(\mathcal{D}(\mathbb{Q}_{p}^{N}),d)$ with respect to $d$, and since
$\mathcal{D}(\mathbb{Q}_{p}^{N})$ is nuclear, then $\mathcal{H}_{\infty
}(\mathbb{C})$ is a nuclear space, which is continuously embedded in
$C_{0}(\mathbb{Q}_{p}^{N},\mathbb{C})$, the space of complex-valued bounded
functions vanishing at infinity. In addition, $\mathcal{H}_{\infty}%
(\mathbb{C})\subset L^{1}\cap L^{2}$, cf. \cite[Theorem 10.15]{KKZuniga}.

\begin{remark}
(i) We denote by $\mathcal{H}_{l}(\mathbb{R}):=\mathcal{H}_{l}(\mathbb{Q}%
_{p}^{N},\mathbb{R})$ the real Hilbert space obtained by completing
$\mathcal{D}_{\mathbb{R}}(\mathbb{Q}_{p}^{N})$ with respect to $\left\langle
\cdot,\cdot\right\rangle _{l}$. We also set $\mathcal{H}_{\infty}%
(\mathbb{Q}_{p}^{N},\mathbb{R}):=\mathcal{H}_{\infty}(\mathbb{R})=\cap
_{l\in\mathbb{N}}\mathcal{H}_{l}(\mathbb{R})$. In the case in which the ground
field ($\mathbb{R}$ or $\mathbb{C)}$ is clear, we shall use the simplified
notation $\mathcal{H}_{l}$, $\mathcal{H}_{\infty}$. All the above announced
results \ for the spaces $\mathcal{H}_{l}(\mathbb{C})$, $\mathcal{H}_{\infty
}(\mathbb{C})$ are valid for the spaces $\mathcal{H}_{l}(\mathbb{R})$,
$\mathcal{H}_{\infty}(\mathbb{R})$. In particular, $\mathcal{H}_{\infty
}(\mathbb{R})$ is a nuclear countably Hilbert space.

(ii) The following characterization of the space $\mathcal{H}_{\infty
}(\mathbb{C})$ is very useful:%
\begin{align*}
\mathcal{H}_{\infty}(\mathbb{C})  &  =\left\{  f\in L^{2}\left(
\mathbb{Q}_{p}^{N}\right)  ;\left\Vert f\right\Vert _{l}<\infty\text{ for any
}l\in\mathbb{N}\right\} \\
&  =\left\{  W\in\mathcal{D}^{\prime}\left(  \mathbb{Q}_{p}^{N}\right)
;\left\Vert W\right\Vert _{l}<\infty\text{ for any }l\in\mathbb{N}\right\}  ,
\end{align*}
cf. \cite[Lemma 10.8]{KKZuniga}. An analog result is valid for $\mathcal{H}%
_{\infty}(\mathbb{R})$.

(iii) The spaces $\mathcal{H}_{l}(\mathbb{R})$, $\mathcal{H}_{l}(\mathbb{C})$,
for any $l\in\mathbb{N}$, are nuclear and consequently they are separable, cf.
\cite[Chapter I, Section 3.4]{Gelfan-Vilenkin}.
\end{remark}

The spaces $\mathcal{H}_{\infty}(\mathbb{Q}_{p}^{N},\mathbb{C})\mathcal{\ }%
$and $\mathcal{H}_{\infty}(\mathbb{Q}_{p}^{N},\mathbb{R})$ were introduced in
\cite{Zuniga-FAA-2017}, see also \cite{KKZuniga}. These spaces are invariant
under the action of a large class of pseudodifferential operators.

\subsubsection{The dual space of $\mathcal{H}_{\infty}$}

For $m\in\mathbb{N}$, and $W\in\mathcal{D}^{\prime}\left(  \mathbb{Q}_{p}%
^{N}\right)  $ such that $\widehat{W}$ is a measurable function, we set%
\[
\left\Vert W\right\Vert _{-m}^{2}:=\int_{\mathbb{Q}_{p}^{N}}\left[
\xi\right]  _{p}^{-m}\left\vert \widehat{W}\left(  \xi\right)  \right\vert
^{2}d^{N}\xi\text{.}%
\]
Then%
\begin{equation}
\mathcal{H}_{-m}(\mathbb{C}):=\mathcal{H}_{-m}(\mathbb{Q}_{p}^{N}%
,\mathbb{C})=\left\{  W\in\mathcal{D}^{\prime}\left(  \mathbb{Q}_{p}%
^{N}\right)  ;\left\Vert W\right\Vert _{-m}<\infty\right\}  \label{Eq_A}%
\end{equation}
is a complex Hilbert space. If $\mathcal{X}$ is a locally convex, we denote by
$\mathcal{X}^{\ast}$ the dual space endowed with the strong dual topology or
the topology of the bounded convergence. We denote by $\mathcal{H}_{m}^{\ast
}(\mathbb{C})$ the dual of $\mathcal{H}_{m}(\mathbb{C})$ for $m\in\mathbb{N}$,
we identify $\mathcal{H}_{m}^{\ast}(\mathbb{C})$ with $\mathcal{H}%
_{-m}(\mathbb{C})$, by using the bilinear form:%
\begin{equation}
\left\langle W,g\right\rangle =\int_{\mathbb{Q}_{p}^{N}}\overline{\widehat
{W}\left(  \xi\right)  }\widehat{g}\left(  \xi\right)  d^{N}\xi\text{ for
}W\in\mathcal{H}_{-m}(\mathbb{C})\text{ and }g\in\mathcal{H}_{m}%
(\mathbb{C})\text{.} \label{pairing}%
\end{equation}
Then%
\begin{align*}
\mathcal{H}_{\infty}^{\ast}(\mathbb{Q}_{p}^{N},\mathbb{C})  &  :=\mathcal{H}%
_{\infty}^{\ast}(\mathbb{C})=\bigcup\limits_{m\in\mathbb{N}}\mathcal{H}%
_{-m}(\mathbb{C})\\
&  =\left\{  W\in\mathcal{D}^{\prime}\left(  \mathbb{Q}_{p}^{N}\right)
;\left\Vert W\right\Vert _{-m}<\infty\text{ for some }m\in\mathbb{N}\right\}
.
\end{align*}
We consider $\mathcal{H}_{\infty}^{\ast}(\mathbb{C})$ endowed with the strong
topology. We use (\ref{pairing}) as pairing between $\mathcal{H}_{\infty
}^{\ast}(\mathbb{C})$ and $\mathcal{H}_{\infty}(\mathbb{C})$. By a similar
construction one obtains the space $\mathcal{H}_{\infty}^{\ast}(\mathbb{R}%
):=\mathcal{H}_{\infty}^{\ast}(\mathbb{Q}_{p}^{N},\mathbb{R})$. The above
announced results are also valid for $\mathcal{H}_{\infty}^{\ast}(\mathbb{R}%
)$. If there is no danger of confusion we use $\mathcal{H}_{\infty}^{\ast}$
instead of $\mathcal{H}_{\infty}^{\ast}(\mathbb{C})$ or $\mathcal{H}_{\infty
}^{\ast}(\mathbb{R})$.

\begin{remark}
\label{Nota1}(i) For complex and real spaces, $\left\Vert \cdot\right\Vert
_{\pm l}$ denotes the norm on $\mathcal{H}_{l}$ and $\mathcal{H}_{-l}$. We
denote by $\left\langle \cdot,\cdot\right\rangle $ the dual pairings between
$\mathcal{H}_{-l}$ and $\mathcal{H}_{l}$ and between $\mathcal{H}_{\infty}$
and $\mathcal{H}_{\infty}^{\ast}$. We preserve this notation for the norm and
pairing on tensor powers of these spaces.

(ii) If $\left\{  \mathcal{X}_{l}\right\}  _{l\in A}$ is a family of locally
convex spaces, we denote by $\underleftarrow{\lim}_{l\in\mathbb{N}}%
\mathcal{X}_{l}$ the projective limit of the family, and by $\underrightarrow
{\lim}_{l\in\mathbb{N}}\mathcal{X}_{l}$ the inductive limit of the family.

(iii) If $\mathcal{N}$ is a nuclear space, which is the projective limit of
the Hilbert spaces $H_{l}$, $l\in\mathbb{N}$,the $n$-th symmetric tensor
product of $\mathcal{N}$ is defined as $\mathcal{N}^{\widehat{\otimes}%
n}=\underleftarrow{\lim}_{l\in\mathbb{N}}H_{l}^{\widehat{\otimes}n}$. This is
a nuclear space. The dual space is $\mathcal{N}^{\ast\widehat{\otimes}%
n}=\underrightarrow{\lim}_{l\in\mathbb{N}}H_{-l}^{\widehat{\otimes}n}$.
\end{remark}

\subsection{Non-Archimedean Gaussian measures}

The spaces
\[
\mathcal{H}_{\infty}(\mathbb{R})\hookrightarrow L_{\mathbb{R}}^{2}\left(
\mathbb{Q}_{p}^{N}\right)  \hookrightarrow\mathcal{H}_{\infty}^{\ast
}(\mathbb{R})
\]
form a Gel'fand triple, that is, $\mathcal{H}_{\infty}(\mathbb{R})$ is a
nuclear countably Hilbert space which is densely and continuously embedded in
$L_{\mathbb{R}}^{2}$ and $\left\Vert g\right\Vert _{0}^{2}=\left\langle
g,g\right\rangle _{0}$ for $g\in\mathcal{H}_{\infty}(\mathbb{R})$. This triple
was introduced in \cite{Zuniga-FAA-2017}, see also \cite[Chapter 10]%
{KKZuniga}. The inner product and the norm of $\left(  L_{\mathbb{R}}%
^{2}\left(  \mathbb{Q}_{p}^{N}\right)  \right)  ^{\otimes m}\simeq
L_{\mathbb{R}}^{2}\left(  \mathbb{Q}_{p}^{Nm}\right)  $ are denoted by
$\left\langle \cdot,\cdot\right\rangle _{0}$ and $\left\Vert \cdot\right\Vert
_{0}$. From now on, we consider $\mathcal{H}_{\infty}^{^{\widehat{\otimes}n}%
}\left(  \mathbb{R}\right)  $ as subspace of $\mathcal{H}_{\infty}^{^{\otimes
n}}\left(  \mathbb{R}\right)  $, then $\left\langle \cdot,\cdot\right\rangle
_{\mathcal{H}_{\infty}^{^{\widehat{\otimes}n}}\left(  \mathbb{R}\right)
}=n!\left\langle \cdot,\cdot\right\rangle _{0}$.

We denote by $\mathcal{B}:=\mathcal{B}(\mathcal{H}_{\infty}^{\ast}%
(\mathbb{R}))$ the $\sigma$-algebra generated by the cylinder subsets of
$\mathcal{H}_{\infty}^{\ast}(\mathbb{R})$. The mapping
\[%
\begin{array}
[c]{cccc}%
\mathcal{C}: & \mathcal{H}_{\infty}(\mathbb{R}) & \rightarrow & \mathbb{C}\\
& f & \rightarrow & e^{-\frac{1}{2}\left\Vert f\right\Vert _{0}^{2}}%
\end{array}
\]
defines a characteristic functional, i.e. $\mathcal{C}$ is continuous,
positive definite and $\mathcal{C}\left(  0\right)  =1$. By the Bochner-Minlos
theorem, see e.g. \cite{Ber-Kon}, \cite{Hida et al}, there exists a
probability measure $\mu$, called \textit{the canonical Gaussian measure} on
$\left(  \mathcal{H}_{\infty}^{\ast}(\mathbb{R}),\mathcal{B}\right)  $, given
by its characteristic functional as%

\[
\int_{\mathcal{H}_{\infty}^{\ast}(\mathbb{R})}e^{i\langle W,f\rangle}%
d\mu(W)=e^{-\frac{1}{2}\left\Vert f\right\Vert _{^{0}}^{2}}\text{,}%
\ \ f\in\mathcal{H}_{\infty}(\mathbb{R})\text{.}%
\]

We set $\left(  L_{\mathbb{C}}^{2}\right)  :=L^{2}\left(  \mathcal{H}_{\infty
}^{\ast}(\mathbb{R}),\mu;\mathbb{C}\right)  $ to denote the complex vector
space of measu\-rable functions $\Psi:\mathcal{H}_{\infty}^{\ast}%
(\mathbb{R})\rightarrow\mathbb{C}$ satisfying%
\[
\left\Vert \Psi\right\Vert _{\left(  L_{\mathbb{C}}^{2}\right)  }^{2}%
=\int_{\mathcal{H}_{\infty}^{\ast}(\mathbb{R})}\left\vert \Psi\left(
W\right)  \right\vert ^{2}d\mu(W)<\infty\text{.}%
\]
The space $\left(  L_{\mathbb{R}}^{2}\right)  :=L^{2}\left(  \mathcal{H}%
_{\infty}^{\ast}(\mathbb{R}),\mu;\mathbb{R}\right)  $ is defined in a similar
way. The pairing $\mathcal{H}_{\infty}^{\ast}(\mathbb{R})\times\mathcal{H}%
_{\infty}(\mathbb{R})$ can be extended to $\mathcal{H}_{\infty}^{\ast
}(\mathbb{R})\times L^{2}(\mathbb{Q}_{p}^{N})$ as an $\left(  L_{\mathbb{C}%
}^{2}\right)  $-function on $\mathcal{H}_{\infty}^{\ast}(\mathbb{R})$, this
fact follows from
\begin{equation}
\int_{\mathcal{H}_{\infty}^{\ast}(\mathbb{R})}\left\vert \left\langle
W,g\right\rangle \right\vert ^{2}d\mu(W)=\left\Vert g\right\Vert _{0}^{2},
\label{Eq_3}%
\end{equation}
see e.g. \cite[Lemma 2.1.5]{Obata}. If $g\in L_{\mathbb{R}}^{2}$, then
$W\rightarrow\left\langle W,g\right\rangle $ belongs to $\left(
L_{\mathbb{R}}^{2}\right)  $.

Let $f\in\mathcal{H}_{\infty}(\mathbb{R})$ and $W_{f}(J):=\left\langle
J,f\right\rangle $, $J\in$ $\mathcal{H}_{\infty}^{\ast}(\mathbb{R})$. Then
$W_{f}$ is a Gaussian random variable on $\left(  \mathcal{H}_{\infty}^{\ast
}(\mathbb{R}),\mu\right)  $ satisfying
\[
\mathbb{E}_{\mu}(W_{f})=0\text{, \ }\mathbb{E}_{\mu}(W_{f}^{2})=\left\Vert
f\right\Vert _{0}^{2}.
\]
Then the linear map
\[%
\begin{array}
[c]{ccc}%
\mathcal{H}_{\infty}(\mathbb{R}) & \rightarrow & \left(  L_{\mathbb{R}}%
^{2}\right) \\
&  & \\
f & \rightarrow & W_{f}%
\end{array}
\]
can be extended to a linear isometry from $L^{2}(\mathbb{Q}_{p}^{N})$\ to
$\left(  L_{\mathbb{C}}^{2}\right)  $.

\subsection{Wick-ordered polynomials}

Let $\mathcal{P}_{n}(\mathbb{R})$, respectively $\mathcal{P}_{n}(\mathbb{C})$,
be the vector space of finite linear combinations of functions of the form%
\[
W\rightarrow\left\langle W,f\right\rangle ^{n}=\left\langle W^{\otimes
n},f^{\otimes n}\right\rangle \text{, with }W\in\mathcal{H}_{\infty}^{\ast
}(\mathbb{R})\text{,}%
\]
where $f$ runs over $\mathcal{H}_{\infty}(\mathbb{R})$, respectively
$\mathcal{H}_{\infty}(\mathbb{C})$. Notice that $\mathcal{P}_{n}%
(\mathbb{C})=\mathcal{P}_{n}(\mathbb{R})+i\mathcal{P}_{n}(\mathbb{R})$. An
element of the direct algebraic sums%
\[
\mathcal{P}(\mathbb{R}):=\bigoplus\nolimits_{n=0}^{\infty}\mathcal{P}%
_{n}(\mathbb{R})\text{, \ }\mathcal{P}(\mathbb{C}):=\bigoplus\nolimits_{n=0}%
^{\infty}\mathcal{P}_{n}(\mathbb{C})\text{\ }%
\]
is called a \textit{polynomial }on the Gaussian space $\mathcal{H}_{\infty
}^{\ast}(\mathbb{R})$. These functions are not very useful because they do not
satisfy orthogonality relations. This is the main motivation to introduce and
utilize the Wick-ordered polynomials.

For $W\in\mathcal{H}_{\infty}^{\ast}\left(  \mathbb{R}\right)  $ and
$f\in\mathcal{H}_{\infty}$, we define \textit{the Wick-ordered monomial} as%
\begin{align*}
\left\langle :W^{\otimes n}:,f^{\otimes n}\right\rangle  &  =\sum
\limits_{k=0}^{\left[  \frac{n}{2}\right]  }\frac{n!}{k!\left(  n-2k\right)
!}\left(  \frac{-1}{2}\left\langle f,f\right\rangle _{0}\right)
^{k}\left\langle W,f\right\rangle ^{n-2k}\\
&  =\left\Vert f\right\Vert _{0}^{n}\boldsymbol{H}_{n}\left(  \left\Vert
f\right\Vert _{0}^{-1}\left\langle W,f\right\rangle \right)  ,
\end{align*}
where $\boldsymbol{H}_{n}$ denotes the $n$-th Hermite polynomial. Then
$:W^{\otimes n}:\in\mathcal{H}_{\infty}^{\ast\widehat{\otimes}n}$, in
addition, any polynomial $\Phi\in\mathcal{P}(\mathbb{R})$, respectively
$\mathcal{P}(\mathbb{C})$, is expressed as%
\begin{equation}
\Phi\left(  W\right)  =\sum\limits_{n=0}^{\infty}\left\langle :W^{\otimes
n}:,\phi_{n}\right\rangle , \label{Eq_4}%
\end{equation}
where $\phi_{n}$ belong to the symmetric $n$-fold algebraic tensor product
$\left(  \mathcal{H}_{\infty}(\mathbb{R})\right)  ^{\widehat{\otimes}n}$\ of
$\mathcal{H}_{\infty}(\mathbb{R})$, respectively of $\mathcal{H}_{\infty
}(\mathbb{C})$, and the sum symbol involves only a finite number of non-zero
terms. \ A function of type (\ref{Eq_4}) \ is called a \textit{Wick-ordered
polynomial}. For two polynomials $\Phi$, $\Psi\in\mathcal{P}(\mathbb{C})$
given respectively by (\ref{Eq_4}) with $\phi_{n}\in\left(  \mathcal{H}%
_{\infty}(\mathbb{C})\right)  ^{\widehat{\otimes}n}$, and by%
\begin{equation}
\Psi\left(  W\right)  =\sum\limits_{n=0}^{\infty}\left\langle :W^{\otimes
n}:,\psi_{n}\right\rangle ,\text{ with }\psi_{n}\in\left(  \mathcal{H}%
_{\infty}(\mathbb{C})\right)  ^{\widehat{\otimes}n}\text{,} \label{Eq_5}%
\end{equation}
it holds that
\[
\int_{\mathcal{H}_{\infty}^{\ast}(\mathbb{R})}\Phi\left(  W\right)
\Psi\left(  W\right)  d\mu(W)=\sum\nolimits_{n=0}^{\infty}n!\left\langle
\varphi_{n},\psi_{n}\right\rangle _{0},
\]
where $\left\langle \cdot,\cdot\right\rangle _{0}$ denotes the scalar product
in $\left(  L^{2}\left(  \mathbb{Q}_{p}^{N}\right)  \right)  ^{\otimes n}$. In
parti\-cular,
\[
\left\Vert \Phi\right\Vert _{\left(  L_{\mathbb{C}}^{2}\right)  }^{2}%
=\sum\nolimits_{n=0}^{\infty}n!\left\Vert \phi_{n}\right\Vert _{0}^{2},
\]
where $\left\Vert \cdot\right\Vert _{0}$ denotes the norms of $\left(
L^{2}\left(  \mathbb{Q}_{p}^{N}\right)  \right)  ^{\otimes n}$, see e.g.
\cite[Proposition 2.2.10]{Obata}. Consequently, each $\Psi\in\mathcal{P}%
(\mathbb{C})$ is uniquely expressed as a Wick-ordered polynomial.

\begin{remark}
\label{Nota2}We denote by $I_{n}(f_{n})$ the linear extension to $\left(
L^{2}\left(  \mathbb{Q}_{p}^{N}\right)  \right)  ^{\widehat{\otimes}n}$ of the
map $f_{n}\rightarrow\left\langle :W^{\otimes n}:,f_{n}\right\rangle $,
$W\in\mathcal{H}_{\infty}^{\ast}\left(  \mathbb{R}\right)  $, then%
\[
I_{n}(f^{\otimes n})=\left\Vert f\right\Vert _{0}^{n}\boldsymbol{H}%
_{n}(\left\Vert f\right\Vert _{0}^{-1}W_{f})\text{, \ }f\in L^{2},
\]
and
\[
\int_{\mathcal{H}_{\infty}^{\ast}(\mathbb{R})}I_{n}(f_{n})I_{m}(g_{m}%
)d\mu=\delta_{nm}n!\left\langle f_{n},g_{m}\right\rangle _{0}\text{, \ }%
f_{n}\in L^{2\widehat{\otimes}n}\text{, }g_{m}\in L^{2\widehat{\otimes}m}.
\]
We shall also use $\left\langle :W^{\otimes n}:,f_{n}\right\rangle $ to denote
$I_{n}(f_{n})$ formally. In this case the symbol $\left\langle \cdot
,\cdot\right\rangle $ should not be confused with the bilinear form on
$\mathcal{H}_{\infty}^{\ast}\times\mathcal{H}_{\infty}$.
\end{remark}

\subsection{Wiener-It\^{o}-Segal isomorphism}

Let $\Gamma\left(  L^{2}\left(  \mathbb{Q}_{p}^{N}\right)  \right)  $ be the
space of sequences $\boldsymbol{f}=\left\{  f_{n}\right\}  _{n\in\mathbb{N}}$,
$f_{n}\in\left(  L^{2}\left(  \mathbb{Q}_{p}^{N}\right)  \right)
^{\widehat{\otimes}n}$, such that
\[
\left\Vert \boldsymbol{f}\right\Vert _{\Gamma\left(  L^{2}\left(
\mathbb{Q}_{p}^{N}\right)  \right)  }^{2}:=\sum\nolimits_{n=0}^{\infty
}n!\left\Vert f_{n}\right\Vert _{0}^{2}<\infty.
\]
The Hilbert space $\Gamma\left(  L^{2}\left(  \mathbb{Q}_{p}^{N}\right)
\right)  $ is called \textit{the Boson Fock Space on} $L^{2}\left(
\mathbb{Q}_{p}^{N}\right)  $. The Wiener-It\^{o}-Segal theorem asserts that
for each $\Phi\in\left(  L_{\mathbb{C}}^{2}\right)  $ there exists a sequence
$\boldsymbol{\phi}=\left\{  \phi_{n}\right\}  _{n\in\mathbb{N}}$ in
$\Gamma\left(  L^{2}\left(  \mathbb{Q}_{p}^{N}\right)  \right)  $ such that
(\ref{Eq_4}) holds in the $\left(  L_{\mathbb{C}}^{2}\right)  $-sense, but
with $\phi_{n}\in\left(  L^{2}\left(  \mathbb{Q}_{p}^{N}\right)  \right)
^{\widehat{\otimes}n}$, see Remark \ref{Nota2}. Conversely, for any
$\boldsymbol{\phi}=\left\{  \phi_{n}\right\}  _{n\in\mathbb{N}}\in
\Gamma\left(  L_{\mathbb{C}}^{2}\left(  \mathbb{Q}_{p}^{N}\right)  \right)  $,
(\ref{Eq_4}) defines a function in $\left(  L_{\mathbb{C}}^{2}\right)  $. In
this case
\[
\left\Vert \Phi\right\Vert _{\left(  L_{\mathbb{C}}^{2}\right)  }^{2}%
=\sum\limits_{n=0}^{\infty}n!\left\Vert \phi_{n}\right\Vert _{0}%
^{2}=\left\Vert \boldsymbol{\phi}\right\Vert _{\Gamma\left(  L^{2}\left(
\mathbb{Q}_{p}^{N}\right)  \right)  }^{2},
\]
see e.g. \cite[Theorem 2.3.5]{Obata}, \cite{Segal}

\section{\label{Section_4}Non-Archimedean Kondratiev Spaces of Test Functions
and Distributions}

In this section we introduce non-Archimedean versions of Kondratiev-type
spaces of test functions and distributions.

\subsection{Kondratiev-type spaces of test functions}

We define for $l$, $k\in\mathbb{N}$, and $\beta\in\left[  0,1\right]  $ fixed,
the following norm on $\left(  L_{\mathbb{C}}^{2}\right)  $:%
\[
\left\Vert \Phi\right\Vert _{l,k,\beta}^{2}=\sum\limits_{n=0}^{\infty}\left(
n!\right)  ^{1+\beta}2^{nk}\left\Vert \phi_{n}\right\Vert _{l}^{2},
\]
where $\Phi$ is given in (\ref{Eq_4}), and $\left\Vert \cdot\right\Vert _{l}$
denotes the norm on $\mathcal{H}_{l}^{\widehat{\otimes}n}$.

We now define%
\[
\mathcal{H}_{l,k,\beta}=\left\{  \Phi\left(  W\right)  =\sum\limits_{n=0}%
^{\infty}\left\langle :W^{\otimes n}:,\phi_{n}\right\rangle \in\left(
L_{\mathbb{C}}^{2}\right)  ;\left\Vert \Phi\right\Vert _{l,k,\beta}^{2}%
<\infty\right\}  .
\]
The space $\mathcal{H}_{l,k,\beta}$ is a Hilbert space with inner product%
\[
\left\langle \Phi,\Psi\right\rangle _{l,k,\beta}=\sum\limits_{n=0}^{\infty
}\left(  n!\right)  ^{1+\beta}2^{nk}\left\langle \phi_{n},\psi_{n}%
\right\rangle _{l},
\]
where $\Phi$, $\Psi\in\left(  L_{\mathbb{C}}^{2}\right)  $ are as in
(\ref{Eq_4})-(\ref{Eq_5}), and $\left\langle \cdot,\cdot\right\rangle _{l}$
denotes the inner product on $\mathcal{H}_{l}^{\widehat{\otimes}n}$.

The Kondratiev space of test functions $\left(  \mathcal{H}_{\infty}\right)
^{\beta}$ is defined to be the projective limit of the spaces $\mathcal{H}%
_{l,k,\beta}$:%
\[
\left(  \mathcal{H}_{\infty}\right)  ^{\beta}=\underleftarrow{\lim}%
_{l,k\in\mathbb{N}}\mathcal{H}_{l,k,\beta}.
\]
As a vector space $\left(  \mathcal{H}_{\infty}\right)  ^{\beta}=\cap
_{l,k\in\mathbb{N}}\mathcal{H}_{l,k,\beta}$. The space of test functions
$\left(  \mathcal{H}_{\infty}\right)  ^{\beta}$ is a nuclear countable Hilbert
space, which is continuously and densely embedded in $\left(  L_{\mathbb{C}%
}^{2}\right)  $. Moreover, $\left(  \mathcal{H}_{\infty}\right)  ^{\beta}$ and
its topology do not depend on the family of Hilbertian norms $\left\{
\left\Vert \cdot\right\Vert _{l}\right\}  _{l\in\mathbb{N}}$, see e.g.
\cite[Theorem 1]{KLS96}, \cite[Chapter IV, Theorem 1.4]{Huang-Yang}.

The construction used to obtain the spaces $\left(  \mathcal{H}_{\infty
}\right)  ^{\beta}$ can be carried out starting with an arbitrary nuclear
space $\mathcal{N}$. For $0\leq\beta\leq1$, the spaces $\left(  \mathcal{N}%
\right)  ^{\beta}$ were studied by Kondratiev, Leukert and Streit in
\cite{KphD}, \cite{KS93}, \cite{KLS96}, see also \cite[Chapter IV]%
{Huang-Yang}. In the case $\beta=0$ and $\mathcal{N=S}$, the Schwartz space in
$\mathbb{R}^{n}$, the space $\left(  \mathcal{N}\right)  ^{0}$ is the Hida
space of test functions, see e.g. \cite{Hida et al}.

\subsection{Kondratiev-type spaces of distributions}

Let $\mathcal{H}_{-l,-k,-\beta}$ be the dual with respect to $(L_{\mathbb{C}%
}^{2})$ of $\mathcal{H}_{l,k,\beta}$ and let \ $(\mathcal{H}_{\infty}%
)^{-\beta}$ be the dual with respect to\ $(L_{\mathbb{C}}^{2})$ of
$(\mathcal{H}_{\infty})^{\beta}$. We denote by $\left\langle \left\langle
\cdot,\cdot\right\rangle \right\rangle $\ the corresponding dual pairing which
is given by the extension of the scalar product on $(L_{\mathbb{C}}^{2})$. We
define the expectation of a distribution $\boldsymbol{\Phi}\in(\mathcal{H}%
_{\infty})^{-\beta}$ as $\mathbb{E}_{\mu}(\boldsymbol{\Phi})=\left\langle
\left\langle \boldsymbol{\Phi},1\right\rangle \right\rangle $.

The dual space of \ $(\mathcal{H}_{\infty})^{-\beta}$ is given by%
\[
(\mathcal{H}_{\infty})^{-\beta}=\underset{l,k\in\mathbb{N}}{{\LARGE \cup}%
}\mathcal{H}_{-l,-k,-\beta},
\]
see \cite[Chapter IV, Theorem 1.5]{Huang-Yang}. We will consider
$(\mathcal{H}_{\infty})^{-\beta}$ with the inductive limit topology. In
particular, we know that every distribution is of finite order, i.e. for any
$\boldsymbol{\Phi}\in(\mathcal{H}_{\infty})^{-\beta}$ there exist
$l,k\in\mathbb{N}$ such that $\boldsymbol{\Phi}\in\mathcal{H}_{-l,-k,-\beta}$.
The chaos decomposition introduces a natural decomposition of
$\boldsymbol{\Phi}\in(\mathcal{H}_{\infty})^{-\beta}$ into generalized kernels
$\Phi_{n}\in(\mathcal{H}_{\infty}^{\ast}\mathbb{(C}))^{\widehat{\otimes}n}$.
Let $\Phi_{n}\in(\mathcal{H}_{\infty}^{\ast}\mathbb{(C}))^{\widehat{\otimes}%
n}$ be given. Then there is a distribution, denoted as $\left\langle \Phi
_{n},:W^{\otimes n}:\right\rangle $, in $(\mathcal{H}_{\infty})^{-\beta}$
acting on $\Psi\in$\ $(\mathcal{H}_{\infty})^{\beta}$ ($\Psi=\sum
\limits_{n=0}^{\infty}\left\langle :\cdot^{\otimes n}:,\psi_{n}\right\rangle
,$ with $\psi_{n}\in\left(  \mathcal{H}_{\infty}(\mathbb{C})\right)
^{\widehat{\otimes}n}$)\ as
\[
\left\langle \left\langle \left\langle \Phi_{n},:W^{\otimes n}:\right\rangle
,\Psi\right\rangle \right\rangle =n!\left\langle \Phi_{n},\psi_{n}%
\right\rangle .
\]

Any $\boldsymbol{\Phi}\in(\mathcal{H}_{\infty})^{-\beta}$ has a unique
decomposition of the form%
\[
\boldsymbol{\Phi}=\overset{\infty}{\underset{n=0}{\sum}}\left\langle \Phi
_{n},:W^{\otimes n}:\right\rangle \text{, }\Phi_{n}\in(\mathcal{H}_{\infty
}^{\ast}\mathbb{(C}))^{\widehat{\otimes}n}\text{,}%
\]
where the series converges in $(\mathcal{H}_{\infty})^{-\beta}$, in addition,
we have
\[
\left\langle \left\langle \boldsymbol{\Phi},\Psi\right\rangle \right\rangle
=\overset{\infty}{\underset{n=0}{\sum}}n!\left\langle \Phi_{n},\psi
_{n}\right\rangle \text{,\ }\Psi\in(\mathcal{H}_{\infty})^{\beta}.
\]

Now, $\mathcal{H}_{-l,-k,-\beta}$\ is a Hilbert space, that can be described
as follows:%
\[
\mathcal{H}_{-l,-k,-\beta}=\left\{  \boldsymbol{\Phi}\in(\mathcal{H}_{\infty
})^{-\beta};\text{ }\left\Vert \boldsymbol{\Phi}\right\Vert _{-l,-k,-\beta
}<\infty\right\}  ,
\]
where
\begin{equation}
\left\Vert \boldsymbol{\Phi}\right\Vert _{-l,-k,-\beta}^{2}=\underset
{n=0}{\overset{\infty}{\sum}}\left(  n!\right)  ^{1-\beta}2^{-nk}\left\Vert
\Phi_{n}\right\Vert _{-l}^{2}, \label{Eq_6}%
\end{equation}
see \ \cite[Chapter IV, Theorem 1.5]{Huang-Yang}.

\begin{remark}
Notice that%
\begin{align*}
(\mathcal{H}_{\infty})^{1}  &  \subset\cdots\subset(\mathcal{H}_{\infty
})^{\beta}\subset\cdots\subset(\mathcal{H}_{\infty})^{0}\subset(L_{\mathbb{C}%
}^{2})\\
&  \subset(\mathcal{H}_{\infty})^{-0}\subset\cdots\subset(\mathcal{H}_{\infty
})^{-\beta}\subset\cdots\subset(\mathcal{H}_{\infty})^{-1}.
\end{align*}
Following Kondratiev, Leukert and Streit, in this article we work with the
Gel'fand triple $(\mathcal{H}_{\infty})^{1}\subset(L_{\mathbb{C}}^{2}%
)\subset(\mathcal{H}_{\infty})^{-1}$.
\end{remark}

\subsection{The $S$-transform and the characterization of $(\mathcal{H}%
_{\infty})^{-1}$}

\subsubsection{The $S$-transform}

We first consider the Wick exponential:
\[
:\exp\left\langle W,g\right\rangle :=\exp\left(  \left\langle W,g\right\rangle
-\frac{1}{2}\left\Vert g\right\Vert _{0}^{2}\right)  =\overset{\infty
}{\underset{n=0}{\sum}}\frac{1}{n!}\left\langle :W^{\otimes n}:,g^{\otimes
n}\right\rangle \text{, }%
\]
\ for $W\in\mathcal{H}_{\infty}^{\ast}\left(  \mathbb{R}\right)  $,
$g\in\mathcal{H}_{\infty}\left(  \mathbb{C}\right)  $. Then $:\exp\left\langle
W,g\right\rangle :\in(L_{\mathbb{C}}^{2})$ and its $l$, $k$, $1$-norm is given
by%
\[
\left\Vert :\exp\left\langle \cdot,g\right\rangle :\right\Vert _{l,k,1}%
^{2}=\underset{n=0}{\overset{\infty}{\sum}}(n!)^{2}2^{nk}\left\Vert \frac
{1}{n!}g^{\otimes n}\right\Vert _{l}^{2}=\underset{n=0}{\overset{\infty}{\sum
}}\left(  2^{k}\left\Vert g\right\Vert _{l}^{2}\right)  ^{n}.
\]
This norm is finite if and only if $2^{k}\left\Vert g\right\Vert _{l}^{2}<1$,
i.e. $:\exp\left\langle W,g\right\rangle :\in\mathcal{H}_{l,k,\beta}$ if and
only if $g$ belongs to the following neighborhood of zero:%
\[
\mathcal{U}_{l,k}=\left\{  f\in\mathcal{H}_{\infty}\left(  \mathbb{C}\right)
;\left\Vert f\right\Vert _{l}<\frac{1}{2^{\frac{k}{2}}}\right\}  .
\]

Therefore the Wick exponential does not belong to $(\mathcal{H}_{\infty})^{1}%
$, i.e. it is not a test function, in contrast to usual white noise analysis.

Let $\boldsymbol{\Phi}\in(\mathcal{H}_{\infty})^{-1},$ then there \ exist
$l,k$ such that $\boldsymbol{\Phi}\in$ $\mathcal{H}_{-l,-k,-1}$. For all
$f\in\mathcal{U}_{l,k}$, we define the (local) $S$-transform of
$\boldsymbol{\Phi}$ as
\begin{equation}
{\LARGE S}\boldsymbol{\Phi}\left(  f\right)  =\left\langle \left\langle
\boldsymbol{\Phi},:\exp\left\langle \cdot,f\right\rangle :\right\rangle
\right\rangle =\underset{n=0}{\overset{\infty}{\sum}}\left\langle \Phi
_{n},f^{\otimes n}\right\rangle . \label{Eq_7}%
\end{equation}
Hence, for $\boldsymbol{\Phi}\in$ $\mathcal{H}_{-l,-k,-1}$, (\ref{Eq_7})
defines the $S$-transform for all $f$ $\in\mathcal{U}_{l,k}$.

\subsubsection{Holomorphic functions on $\mathcal{H}_{\infty}(\mathbb{C})$}

Let $\mathcal{V}_{l,\epsilon}=\left\{  f\in\mathcal{H}_{\infty}\left(
\mathbb{C}\right)  ;\left\Vert f\right\Vert _{l}<\epsilon\right\}  $ be a
neighborhood of zero in $\mathcal{H}_{\infty}\left(  \mathbb{C}\right)  $. A
map $F:\mathcal{V}_{l,\epsilon}\rightarrow\mathbb{C}$ is called
\textit{holomorphic} in $\mathcal{V}_{l,\epsilon}$, if it satisfies the
following two conditions: (i) for each $g_{0}\in\mathcal{V}_{l,\epsilon}$,
$g\in\mathcal{H}_{\infty}\left(  \mathbb{C}\right)  $ there exists a
neighborhood $V_{g_{0},g}$ in $\mathbb{C}$ around the origin such that the map
$z\rightarrow F\left(  g_{0}+zg\right)  $ is holomorphic in $V_{g_{0},g}$.
(ii) For each $g\in\mathcal{V}_{l,\epsilon}$ there exists an open set
$\mathcal{U}\subset\mathcal{V}_{l,\epsilon}$ containing $g$ such that
$F\left(  \mathcal{U}\right)  $\ is bounded.

By identifying two maps $F_{1}$ and $F_{2}$ coinciding in a neighborhood of
zero, we define $Hol_{0}(\mathcal{H}_{\infty}(\mathbb{C}))$ as the space of
germs of holomorphic maps around the origin.

\subsubsection{\label{Section_Characterization}Characterization of
$(\mathcal{H}_{\infty})^{-1}$}

A key result is the following: the mapping
\[%
\begin{array}
[c]{cccc}%
{\LARGE S}: & (\mathcal{H}_{\infty})^{-1} & \rightarrow & Hol_{0}%
(\mathcal{H}_{\infty}(\mathbb{C}))\\
& \boldsymbol{\Phi} & \rightarrow & S\boldsymbol{\Phi}%
\end{array}
\]
is a well-defined bijection, see \cite[Theorem 3]{KLS96}, \ \cite[Chapter IV,
Theorem 2.13]{Huang-Yang}.

\subsubsection{Integration of distributions}

Let $\left(  \mathfrak{L},\mathcal{A},\nu\right)  $ be a measure space, and
\[%
\begin{array}
[c]{ccc}%
\mathfrak{L} & \rightarrow & (\mathcal{H}_{\infty})^{-1}\\
\mathfrak{l} & \rightarrow & \boldsymbol{\Phi}_{\mathfrak{l}}%
\end{array}
.
\]
Assume that there exists an open neighborhood $\mathcal{V}\subset
\mathcal{H}_{\infty}(\mathbb{C})$ of zero such that (i) $S\boldsymbol{\Phi
}_{\mathfrak{l}}$, $\mathfrak{l}\in\mathfrak{L}$, is holomorphic in
$\mathcal{V}$; (ii) the mapping $\mathfrak{l}\rightarrow{\LARGE S}%
\boldsymbol{\Phi}_{\mathfrak{l}}\left(  g\right)  $ is measurable for every
$g\in\mathcal{V}$; and (iii) there exists a function $C(\mathfrak{l})\in
L^{1}\left(  \mathfrak{L},\mathcal{A},\nu\right)  $ such that $\left\vert
{\LARGE S}\boldsymbol{\Phi}_{\mathfrak{l}}\left(  g\right)  \right\vert \leq
C(\mathfrak{l})$ for all $g\in\mathcal{V}$ and for $\nu$-almost $\mathfrak{l}%
\in\mathfrak{L}$. Then there exist $l_{0}$, $k_{0}$ $\in\mathbb{N}$ such that
$\int_{\mathfrak{L}}\boldsymbol{\Phi}_{\mathfrak{l}}d\nu\left(  \mathfrak{l}%
\right)  $ exists as a Bochner integral in $\mathcal{H}_{-l_{0},-k_{0},-1}$,
in particular,
\begin{equation}
{\LARGE S}\left(  \int\nolimits_{\mathfrak{L}}\boldsymbol{\Phi}_{\mathfrak{l}%
}d\nu\left(  \mathfrak{l}\right)  \right)  \left(  g\right)  =\int
\nolimits_{\mathfrak{L}}{\LARGE S}\boldsymbol{\Phi}_{\mathfrak{l}}\left(
g\right)  d\nu\left(  \mathfrak{l}\right)  \text{, for any }g\in
\mathcal{V}\text{,} \label{Eq_C}%
\end{equation}
cf. \cite[Theorem 6]{KLS96}, \cite[Chapter IV, Theorem 2.15]{Huang-Yang}.

\subsubsection{The Wick product}

Given $\boldsymbol{\Phi}$, $\boldsymbol{\Psi}\in(\mathcal{H}_{\infty})^{-1}$,
we define the \textit{Wick product} of them as
\[
\boldsymbol{\Phi}\Diamond\boldsymbol{\Psi}=S^{-1}\left(  {\LARGE S}%
\boldsymbol{\Phi}{\LARGE S}\boldsymbol{\Psi}\right)  .
\]
This product is well-defined because $Hol_{0}(\mathcal{H}_{\infty}%
(\mathbb{C}))$\ is an algebra. The map%
\[%
\begin{array}
[c]{ccc}%
(\mathcal{H}_{\infty})^{-1}\times(\mathcal{H}_{\infty})^{-1} & \rightarrow &
(\mathcal{H}_{\infty})^{-1}\\
\left(  \boldsymbol{\Phi},\boldsymbol{\Psi}\right)  & \rightarrow &
\boldsymbol{\Phi}\Diamond\boldsymbol{\Psi}%
\end{array}
\]
is well-defined and continuous. Furthermore, if $\boldsymbol{\Phi}%
\in\mathcal{H}_{-l_{1},-k_{1},-1}$, $\boldsymbol{\Psi}\in\mathcal{H}%
_{-l_{2},-k_{2},-1}$, and $l:=\max\left\{  l_{1},l_{2}\right\}  $,
$k:=k_{1}+k_{2}+1$, then
\[
\left\Vert \boldsymbol{\Phi}\Diamond\boldsymbol{\Psi}\right\Vert
_{-l,-k,-1}\leq\left\Vert \boldsymbol{\Phi}\right\Vert _{-l_{1},-k_{1}%
,-1}\left\Vert \boldsymbol{\Psi}\right\Vert _{-l_{2},-k_{2},-1},
\]
cf. \cite[Proposition 11]{KLS96}. The Wick product leaves $(\mathcal{H}%
_{\infty})$ invariant. By induction on $n$, we can define the Wick powers:
\[
\boldsymbol{\Phi}^{\Diamond n}={\LARGE S}^{-1}(\left(  {\LARGE S}%
\boldsymbol{\Phi}\right)  ^{n})\in(\mathcal{H}_{\infty})^{-1}.
\]
Consequently $\sum_{n=0}^{m}a_{n}\boldsymbol{\Phi}^{\Diamond n}\in
(\mathcal{H}_{\infty})^{-1}$.

\subsubsection{\label{Section Wick_analytic_fun}Wick analytic functions in
$(\mathcal{H}_{\infty})^{-1}$}

Assume that $F$ is an analytic function in a neighborhood of \ the point
$z_{0}=\mathbb{E}_{\mu}\left(  \boldsymbol{\Phi}\right)  $ in $\mathbb{C}$,
with $\boldsymbol{\Phi}\in(\mathcal{H}_{\infty})^{-1}$. Then $F^{\Diamond
}(\boldsymbol{\Phi})=S^{-1}(F(S\boldsymbol{\Phi}))$ exists in $(\mathcal{H}%
_{\infty})^{-1}$, cf. \cite[Theorem 12]{KLS96}. \ In addition, if $F$ is
analytic in $z_{0}=\mathbb{E}_{\mu}\left(  \boldsymbol{\Phi}\right)  $, with
power series $F(z)=\sum_{n=0}^{\infty}c_{n}\left(  z-z_{0}\right)  ^{n}$, then
the Wick series $\sum_{n=0}^{\infty}c_{n}\left(  \boldsymbol{\Phi}%
-z_{0}\right)  ^{\Diamond n}$ converges in $(\mathcal{H}_{\infty})^{-1}$ and
$F^{\Diamond}(\boldsymbol{\Phi})=\sum_{n=0}^{\infty}c_{n}\left(
\boldsymbol{\Phi}-z_{0}\right)  ^{\Diamond n}$.

\section{Schwinger Functions and Euclidean Quantum Field Theory}

\subsection{Schwinger functions}

\begin{definition}
Let $f_{1},\ldots,f_{n}\in\mathcal{H}_{\infty}\left(  \mathbb{R}\right)  $,
$n\in\mathbb{N}$. The $n$-th Schwinger function corresponding to
$\boldsymbol{\Phi}\in(\mathcal{H}_{\infty})^{-1}$, with $\mathbb{E}_{\mu
}\left(  \boldsymbol{\Phi}\right)  =1$, is defined as%
\begin{equation}
\mathcal{S}_{n}^{\boldsymbol{\Phi}}\left(  f_{1}\otimes\cdots\otimes
f_{n}\right)  \left(  W\right)  =\left\{
\begin{array}
[c]{lll}%
1 & \text{if} & n=0\\
&  & \\
\left\langle \left\langle \boldsymbol{\Phi},\left\langle W,f_{1}\right\rangle
\cdots\left\langle W,f_{n}\right\rangle \right\rangle \right\rangle  &
\text{if} & n\geq1,
\end{array}
\right.  \label{Eq_8}%
\end{equation}
for $W\in\mathcal{H}_{\infty}^{\ast}\left(  \mathbb{R}\right)  $.
\end{definition}

The pairing in (\ref{Eq_8}) is well-defined because the Wick polynomials
$\mathcal{P}\left(  \mathcal{H}_{\infty}^{\ast}\left(  \mathbb{R}\right)
\right)  $ are dense in $\left(  \mathcal{H}_{\infty}\right)  ^{1}$.

The ${\LARGE T}$-transform of a distribution is defined as
\begin{equation}
{\LARGE T}\boldsymbol{\Phi}\left(  g\right)  =\exp\left(  \frac{-1}%
{2}\left\Vert g\right\Vert _{0}^{2}\right)  S\boldsymbol{\Phi}\left(
ig\right)  \label{Eq_T}%
\end{equation}
for $\boldsymbol{\Phi}\in(\mathcal{H}_{\infty})^{-1}$ and $g\in\mathcal{U}$,
where $\mathcal{U}$ is neighborhood of zero in $\mathcal{H}_{\infty}\left(
\mathbb{C}\right)  $. The Schwinger functions can be computed by using the $T$-transform:

\begin{lemma}
[{\cite[Proposition III.3]{GS1999}}]\label{Lemma0}Let $f_{1},\ldots,f_{n}%
\in\mathcal{H}_{\infty}\left(  \mathbb{R}\right)  $, $n\in\mathbb{N}$. The
$n$-th Schwinger function corresponding to $\boldsymbol{\Phi}\in
(\mathcal{H}_{\infty})^{-1}$ is given by%
\[
\mathcal{S}_{n}^{\boldsymbol{\Phi}}\left(  f_{1}\otimes\cdots\otimes
f_{n}\right)  =(-i)^{n}\frac{\partial^{n}}{\partial t_{1}\cdots\partial t_{n}%
}{\LARGE T}\boldsymbol{\Phi}\left(  t_{1}f_{1}+\cdots+t_{n}f_{n}\right)
{\LARGE \mid}_{t_{1}=\cdots=t_{n}=0}.
\]

\end{lemma}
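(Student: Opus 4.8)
The plan is to reduce the statement to the single identity ${\LARGE T}\boldsymbol{\Phi}(g)=\left\langle \left\langle \boldsymbol{\Phi},\exp(i\langle\cdot,g\rangle)\right\rangle \right\rangle$, valid for $g$ in the neighborhood of zero on which $S\boldsymbol{\Phi}$ is defined, and then to differentiate this along the directions $f_{1},\ldots,f_{n}$. To prove the identity, start from the definition (\ref{Eq_T}): ${\LARGE T}\boldsymbol{\Phi}(g)=\exp(-\tfrac{1}{2}\Vert g\Vert_{0}^{2})\,{\LARGE S}\boldsymbol{\Phi}(ig)=\exp(-\tfrac{1}{2}\Vert g\Vert_{0}^{2})\,\langle\langle\boldsymbol{\Phi},:\exp\langle\cdot,ig\rangle:\rangle\rangle$. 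Computing $\Vert ig\Vert_{0}^{2}$ via the bilinear (not Hermitian) continuation of $\langle\cdot,\cdot\rangle_{0}$ to $\mathcal{H}_{\infty}(\mathbb{C})$, so that $\Vert ig\Vert_{0}^{2}=-\Vert g\Vert_{0}^{2}$, one gets $:\exp\langle W,ig\rangle:=\exp(i\langle W,g\rangle+\tfrac{1}{2}\Vert g\Vert_{0}^{2})$, and the two exponential prefactors cancel. If $\boldsymbol{\Phi}\in\mathcal{H}_{-l,-k,-1}$, then, up to the nonzero scalar $\exp(-\tfrac{1}{2}\Vert g\Vert_{0}^{2})$, the chaos kernels of $\exp(i\langle\cdot,g\rangle)$ coincide with those of $:\exp\langle\cdot,ig\rangle:$, so the norm computation carried out just before the statement shows $\exp(i\langle\cdot,g\rangle)\in\mathcal{H}_{l,k,1}$ precisely when $g\in\mathcal{U}_{l,k}$; hence the pairing above is meaningful there.

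Next I would set $g=t_{1}f_{1}+\cdots+t_{n}f_{n}$ with $f_{1},\ldots,f_{n}\in\mathcal{H}_{\infty}(\mathbb{R})$. The same norm estimate shows that the Wick exponential map $h\mapsto :\exp\langle\cdot,h\rangle:$ is holomorphic from $\mathcal{U}_{l,k}$ into the Hilbert space $\mathcal{H}_{l,k,1}$: its power series $\sum_{m}\tfrac{1}{m!}\langle :W^{\otimes m}:,h^{\otimes m}\rangle$ is, term by term, a continuous $m$-homogeneous $\mathcal{H}_{l,k,1}$-valued polynomial of norm $(2^{k/2}\Vert h\Vert_{l})^{m}$, and the series converges in $\Vert\cdot\Vert_{l,k,1}$ locally uniformly on $\mathcal{U}_{l,k}$. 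Consequently $g\mapsto\exp(i\langle\cdot,g\rangle)=\exp(-\tfrac{1}{2}\Vert g\Vert_{0}^{2})\,:\exp\langle\cdot,ig\rangle:$ is holomorphic, hence so is the restriction $(t_{1},\ldots,t_{n})\mapsto\exp(i\langle\cdot,t_{1}f_{1}+\cdots+t_{n}f_{n}\rangle)$, which is $\mathcal{H}_{l,k,1}$-valued-holomorphic on a polydisc $D$ about the origin in $\mathbb{C}^{n}$ (shrink $D$ so that $\sum_{j}|t_{j}|\,\Vert f_{j}\Vert_{l}<2^{-k/2}$). Since $\boldsymbol{\Phi}$ acts as a continuous linear functional on $\mathcal{H}_{l,k,1}$, the scalar function $(t_{1},\ldots,t_{n})\mapsto{\LARGE T}\boldsymbol{\Phi}(t_{1}f_{1}+\cdots+t_{n}f_{n})=\langle\langle\boldsymbol{\Phi},\exp(i\langle\cdot,t_{1}f_{1}+\cdots+t_{n}f_{n}\rangle)\rangle\rangle$ is holomorphic on $D$, and its Taylor coefficients are obtained by applying $\boldsymbol{\Phi}$ to those of the vector-valued map; in particular $\partial^{n}/(\partial t_{1}\cdots\partial t_{n})$ at $t=0$ commutes with $\langle\langle\boldsymbol{\Phi},\cdot\rangle\rangle$.

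It then remains to compute the relevant derivative of the vector-valued map. Since $\exp(i\langle W,t_{1}f_{1}+\cdots+t_{n}f_{n}\rangle)=\prod_{j=1}^{n}\exp(it_{j}\langle W,f_{j}\rangle)$, only the monomial $t_{1}\cdots t_{n}$ contributes at $t=0$, so $\partial^{n}/(\partial t_{1}\cdots\partial t_{n})\exp(i\langle\cdot,\sum_{j}t_{j}f_{j}\rangle)|_{t=0}=i^{n}\langle\cdot,f_{1}\rangle\cdots\langle\cdot,f_{n}\rangle$ in $\mathcal{H}_{l,k,1}$. Applying $\boldsymbol{\Phi}$, multiplying by $(-i)^{n}$, and using $(-i)^{n}i^{n}=1$ together with the definition (\ref{Eq_8}) gives
\begin{align*}
(-i)^{n}\frac{\partial^{n}}{\partial t_{1}\cdots\partial t_{n}}{\LARGE T}\boldsymbol{\Phi}(t_{1}f_{1}+\cdots+t_{n}f_{n})\Big|_{t=0}
&=\langle\langle\boldsymbol{\Phi},\langle W,f_{1}\rangle\cdots\langle W,f_{n}\rangle\rangle\rangle\\
&=\mathcal{S}_{n}^{\boldsymbol{\Phi}}(f_{1}\otimes\cdots\otimes f_{n})
\end{align*}
for $n\geq1$; for $n=0$ both sides equal $1$, since ${\LARGE T}\boldsymbol{\Phi}(0)=\langle\langle\boldsymbol{\Phi},1\rangle\rangle=\mathbb{E}_{\mu}(\boldsymbol{\Phi})=1$.

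The main obstacle I anticipate is the commutation of differentiation with the dual pairing in the second paragraph, i.e.\ verifying that $t\mapsto\exp(i\langle\cdot,\sum_{j}t_{j}f_{j}\rangle)$ is genuinely holomorphic as a curve into the fixed Hilbert space $\mathcal{H}_{l,k,1}$ (not merely pointwise in $W$); this rests on the locally uniform convergence, in $\Vert\cdot\Vert_{l,k,1}$, of the power series above. A secondary subtlety, easily missed, is the use of the bilinear rather than Hermitian extension of $\langle\cdot,\cdot\rangle_{0}$ in the first paragraph, which is exactly what makes the $\exp(\pm\tfrac{1}{2}\Vert g\Vert_{0}^{2})$ factors cancel.
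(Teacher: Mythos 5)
Your proof is correct. Note that the paper supplies no proof of this lemma at all---it is imported verbatim as \cite[Proposition III.3]{GS1999}---and your argument (cancel the prefactor $\exp(-\tfrac12\Vert g\Vert_0^2)$ against the Wick-exponential to identify $T\boldsymbol{\Phi}(g)=\left\langle\left\langle \boldsymbol{\Phi},\exp\left(i\left\langle\cdot,g\right\rangle\right)\right\rangle\right\rangle$ on $\mathcal{U}_{l,k}$, then differentiate under the dual pairing using the locally uniform convergence of the $\mathcal{H}_{l,k,1}$-valued power series) is essentially the standard one given there, with the two genuine subtleties---the bilinear rather than Hermitian extension of $\left\langle\cdot,\cdot\right\rangle_0$, and vector-valued holomorphy justifying the interchange of $\partial^n/\partial t_1\cdots\partial t_n$ with $\left\langle\left\langle\boldsymbol{\Phi},\cdot\right\rangle\right\rangle$---correctly identified and handled.
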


\begin{lemma}
\label{Lemma1}For each distribution $\boldsymbol{\Phi}\in(\mathcal{H}_{\infty
})^{-1}$, with $\mathbb{E}_{\mu}\left(  \boldsymbol{\Phi}\right)  =1 $, the
Schwinger functions $\left\{  \mathcal{S}_{n}^{\boldsymbol{\Phi}}\right\}
_{n\in\mathbb{N}}$\ satisfy the following conditions:

\begin{enumerate}
\item[(OS1)] the sequence $\left\{  \mathcal{S}_{n}^{\boldsymbol{\Phi}%
}\right\}  _{n\in\mathbb{N}}$, with $\mathcal{S}_{n}^{\boldsymbol{\Phi}}%
\in\left(  \mathcal{H}_{\infty}^{\ast}\left(  \mathbb{C}\right)  \right)
^{\otimes n}$, satisfies%
\[
\left\vert \mathcal{S}_{n}^{\boldsymbol{\Phi}}\left(  f_{1}\otimes
\cdots\otimes f_{n}\right)  \right\vert \leq KC^{n}n!%
{\textstyle\prod\nolimits_{i=1}^{n}}
\left\Vert f_{i}\right\Vert _{l},
\]
for some $l$, $k$ $\in\mathbb{N}$, where $K=\sqrt{I_{0}(2^{-k})}\left\Vert
\Phi\right\Vert _{-l.-k-1}$, here $I_{0}$ is the modified Bessel function of
order zero, which satisfies $I_{0}(2^{-k})<1.3$, $C=e2^{\frac{k}{2}}$, and for
any $f_{1},\cdots,f_{n}\in\mathcal{H}_{\infty}\left(  \mathbb{R}\right)  $;

\item[(OS4)] for $n\geq2$ and all $\sigma\in\mathfrak{S}_{n}$, the permutation
group of order $n$, it holds that
\[
\mathcal{S}_{n}^{\boldsymbol{\Phi}}\left(  f_{1}\otimes\cdots\otimes
f_{n}\right)  =\mathcal{S}_{n}^{\boldsymbol{\Phi}}\left(  f_{\sigma\left(
1\right)  }\otimes\cdots\otimes f_{\sigma\left(  n\right)  }\right)  ,
\]
for any $f_{1},\cdots,f_{n}\in\mathcal{H}_{\infty}\left(  \mathbb{R}\right)  $.
\end{enumerate}
\end{lemma}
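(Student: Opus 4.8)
The plan is to prove the two assertions (OS1) and (OS4) separately, both by reducing everything to properties of the $T$-transform via Lemma \ref{Lemma0}.

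\textbf{Proof of (OS4).} First I would note that by Lemma \ref{Lemma0} the Schwinger function $\mathcal{S}_{n}^{\boldsymbol{\Phi}}(f_{1}\otimes\cdots\otimes f_{n})$ equals $(-i)^{n}$ times the iterated mixed partial of $T\boldsymbol{\Phi}(t_{1}f_{1}+\cdots+t_{n}f_{n})$ evaluated at the origin. Since $T\boldsymbol{\Phi}$, like $S\boldsymbol{\Phi}$, is holomorphic in a neighborhood of zero in $\mathcal{H}_{\infty}(\mathbb{C})$, the function $(t_{1},\ldots,t_{n})\mapsto T\boldsymbol{\Phi}(t_{1}f_{1}+\cdots+t_{n}f_{n})$ is a genuine holomorphic function of $n$ complex variables near $0\in\mathbb{C}^{n}$. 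For such a function all mixed partial derivatives commute, so the value of $\partial^{n}/\partial t_{1}\cdots\partial t_{n}$ at $0$ is invariant under any permutation $\sigma$ of the indices; relabeling $t_{i}\mapsto t_{\sigma(i)}$ shows that the result equals the same partial derivative of $T\boldsymbol{\Phi}(t_{1}f_{\sigma(1)}+\cdots+t_{n}f_{\sigma(n)})$ at $0$. Multiplying back by $(-i)^{n}$ gives (OS4).

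\textbf{Proof of (OS1).} Here I would use the chaos decomposition $\boldsymbol{\Phi}=\sum_{n\geq 0}\langle\Phi_{n},:W^{\otimes n}:\rangle$ with $\boldsymbol{\Phi}\in\mathcal{H}_{-l,-k,-1}$, so that $\|\boldsymbol{\Phi}\|_{-l,-k,-1}^{2}=\sum_{n}(n!)^{0}2^{-nk}\|\Phi_{n}\|_{-l}^{2}<\infty$. Starting from the definition \eqref{Eq_8}, expand $\langle W,f_{1}\rangle\cdots\langle W,f_{n}\rangle$ into Wick-ordered monomials (each product of $n$ factors is a Wick polynomial of degree $\le n$, and its chaos kernels are symmetrizations of tensor products of the $f_{i}$ paired with lower-order contractions $\langle f_{i},f_{j}\rangle_{0}$). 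Applying the dual pairing $\langle\langle\cdot,\cdot\rangle\rangle$ term by term, $\mathcal{S}_{n}^{\boldsymbol{\Phi}}(f_{1}\otimes\cdots\otimes f_{n})$ becomes a finite sum of terms of the form $m!\,\langle\Phi_{m},(\text{contraction of }f_{i}\text{'s})\rangle$ with $m\le n$ and $m\equiv n\bmod 2$. Each such term is estimated by Cauchy-Schwarz on the $\mathcal{H}_{-l},\mathcal{H}_{l}$ pairing: $|\langle\Phi_{m},g_{m}\rangle|\le\|\Phi_{m}\|_{-l}\|g_{m}\|_{l}\le 2^{mk/2}\|\boldsymbol{\Phi}\|_{-l,-k,-1}\|g_{m}\|_{l}$, and the norm of the contracted kernel is bounded by $\prod_{i}\|f_{i}\|_{l}$ (since $\|\cdot\|_{0}\le\|\cdot\|_{l}$ controls the contractions and tensor-product norms multiply). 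Collecting the combinatorial factors — the number of pairings giving degree $m=n-2k$ is $n!/(k!\,(n-2k)!\,2^{k})$ — and summing over $k$ produces the Hermite-type bound; summing $\sum_{k}n!/(k!(n-2k)!2^{k})$ against the factor $2^{mk/2}$ from the pairing estimate yields a bound of the form $K C^{n}n!\prod_{i}\|f_{i}\|_{l}$. A cleaner route, which I expect to use for the sharp constants, is to write $\mathcal{S}_{n}^{\boldsymbol{\Phi}}$ via the $T$-transform as in Lemma \ref{Lemma0}: since $T\boldsymbol{\Phi}(g)=\exp(-\tfrac12\|g\|_{0}^{2})S\boldsymbol{\Phi}(ig)$ and $|S\boldsymbol{\Phi}(g)|\le\|\boldsymbol{\Phi}\|_{-l,-k,-1}\sum_{n}2^{nk/2}\|g\|_{l}^{n}=\|\boldsymbol{\Phi}\|_{-l,-k,-1}(1-2^{k/2}\|g\|_{l})^{-1/2}$ on $\mathcal{U}_{l,k}$, one gets a bound on $|T\boldsymbol{\Phi}(g)|$ for $g$ in a polydisc; then a multidimensional Cauchy estimate on $(t_{1},\ldots,t_{n})\mapsto T\boldsymbol{\Phi}(\sum t_{i}f_{i})$ over the polydisc $|t_{i}|\le r/\|f_{i}\|_{l}$ (with $r$ chosen proportional to $2^{-k/2}$) bounds the $n$-th mixed derivative at $0$ by $n!\,(\sup|T\boldsymbol{\Phi}|)\prod_{i}\|f_{i}\|_{l}/r^{n}$, which after optimizing $r$ gives exactly the claimed $K=\sqrt{I_{0}(2^{-k})}\|\boldsymbol{\Phi}\|_{-l,-k,-1}$ and $C=e2^{k/2}$, the factor $I_{0}(2^{-k})$ and the Euler constant $e$ emerging from the Gaussian prefactor $\exp(-\tfrac12\|g\|_{0}^{2})$ expanded on the circle and from the $n!/n^{n}\sim e^{-n}$ Stirling factor in the Cauchy estimate.

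\textbf{Main obstacle.} The routine parts are the permutation argument for (OS4) and the structure of the chaos expansion; the delicate point in (OS1) is getting the \emph{precise} constants $K$ and $C$ — in particular recognizing the modified Bessel function $I_{0}(2^{-k})=\sum_{j\ge0}(2^{-2k}/4)^{j}/(j!)^{2}$ as exactly the series that arises when one integrates $\exp(-\tfrac12\|g\|_{0}^{2})$ against the $S$-transform bound around a circle of the optimal radius, and the factor $e$ as the one from the Cauchy-estimate optimization $\sup_{r}\, C_{0}/r^{n}$ with the constraint linking $r$ to the domain of holomorphy. I would carry out the $T$-transform computation carefully on a single circle $t\mapsto te^{i\vartheta}$ first, track where each constant comes from, and only then pass to the polydisc in $n$ variables; the bookkeeping of $\|g\|_{0}\le\|g\|_{l}$ versus the radius of $\mathcal{U}_{l,k}$ is where sign- and constant-errors are most likely to creep in.
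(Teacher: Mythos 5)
The paper's own proof of this lemma is essentially a citation: (OS1) is quoted from the proof of Theorem 2 in \cite{KSW95}, and (OS4) is dismissed with the remark that the Schwinger functions are symmetric by definition. Your first route for (OS1) --- expanding $\langle W,f_{1}\rangle\cdots\langle W,f_{n}\rangle$ into Wick monomials whose kernels are contractions of the $f_{i}$, pairing term by term with the chaos kernels $\Phi_{m}$, and using $\left\vert \langle\Phi_{m},g\rangle\right\vert \leq 2^{mk/2}\left\Vert \boldsymbol{\Phi}\right\Vert _{-l,-k,-1}\left\Vert g\right\Vert _{l}$ --- is precisely the argument of the cited theorem, so in substance you are reproducing the paper's (outsourced) proof rather than replacing it. One clarification on where the constants live: the factor $I_{0}(2^{-k})=\sum_{j\geq0}(2^{-2k}/4)^{j}/(j!)^{2}$ arises from the $\left\Vert \cdot\right\Vert _{l,k,1}$-norm of the coordinate product, namely from summing $((n-2j)!)^{2}2^{(n-2j)k}$ against the squared pairing counts $\bigl(n!/(j!\,(n-2j)!\,2^{j})\bigr)^{2}$, not from the Gaussian prefactor of the $T$-transform; carried through, this route even yields $C=2^{k/2}$, so the stated $C=e2^{k/2}$ has slack. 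Two cautions. First, your alternative $T$-transform/Cauchy-estimate route contains an incorrect intermediate bound: Cauchy--Schwarz against $\left\Vert \boldsymbol{\Phi}\right\Vert _{-l,-k,-1}$ gives $\left\vert S\boldsymbol{\Phi}(g)\right\vert \leq\left\Vert \boldsymbol{\Phi}\right\Vert _{-l,-k,-1}\left(  1-2^{k}\left\Vert g\right\Vert _{l}^{2}\right)  ^{-1/2}$, not $\left(  1-2^{k/2}\left\Vert g\right\Vert _{l}\right)  ^{-1/2}$, and it is far from clear that optimizing polydisc radii reproduces exactly $\sqrt{I_{0}(2^{-k})}$; I would not rely on that route for the sharp constants, and as written that part is a gap rather than a proof. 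Second, for (OS4) your holomorphy-and-commuting-partials argument is correct but unnecessary: the defining formula (\ref{Eq_8}) pairs $\boldsymbol{\Phi}$ with the product $\langle W,f_{1}\rangle\cdots\langle W,f_{n}\rangle$, which is literally unchanged under any permutation of the $f_{i}$, so the symmetry is immediate from the definition, exactly as the paper observes.
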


\begin{proof}
Estimation (OS1) is given in the proof of Theorem 2 in \cite{KSW95}. The
Schwinger functions $\left(  \mathcal{S}_{n}^{\boldsymbol{\Phi}}\right)  $ are
symmetric by definition.
\end{proof}

\subsection{\label{Section_white_noise_process}A white-noise process}

For $t\in\mathbb{Q}_{p}$, $\overrightarrow{x}\in\mathbb{Q}_{p}^{N-1}$, we set
$x=\left(  t,\overrightarrow{x}\right)  $. We denote by $\delta_{x}%
:=\delta_{\left(  t,\overrightarrow{x}\right)  }$, the Dirac distribution at
$\left(  t,\overrightarrow{x}\right)  $.

\begin{lemma}
$\delta_{\left(  t,\overrightarrow{x}\right)  }\in\left(  \mathcal{H}_{\infty
}\right)  ^{-1}$.
\end{lemma}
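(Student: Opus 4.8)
The plan is to show that the Dirac distribution $\delta_{(t,\overrightarrow{x})}$, viewed as a zeroth-chaos-only element of the Gaussian space (i.e.\ with chaos decomposition having a single nonzero kernel $\Phi_1 = \delta_{(t,\overrightarrow{x})} \in \mathcal{H}_\infty^\ast(\mathbb{C})$ and all other $\Phi_n = 0$), lies in some Hilbert space $\mathcal{H}_{-l,-k,-1}$. By the description of $(\mathcal{H}_\infty)^{-1}$ as the inductive limit $\cup_{l,k}\mathcal{H}_{-l,-k,-1}$, it suffices to find $l,k\in\mathbb{N}$ with $\|\delta_{(t,\overrightarrow{x})}\|_{-l,-k,-1}<\infty$. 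Since the chaos expansion has only the $n=1$ term, formula (\ref{Eq_6}) collapses to $\|\delta_{(t,\overrightarrow{x})}\|_{-l,-k,-1}^2 = (1!)^{1-1}2^{-k}\|\delta_{(t,\overrightarrow{x})}\|_{-l}^2 = 2^{-k}\|\delta_{(t,\overrightarrow{x})}\|_{-l}^2$, so everything reduces to showing $\delta_{(t,\overrightarrow{x})}\in\mathcal{H}_{-l}(\mathbb{Q}_p^N,\mathbb{C})$ for some $l\in\mathbb{N}$; then $k$ may be chosen arbitrarily (say $k=0$).

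Next I would compute the norm $\|\delta_{(t,\overrightarrow{x})}\|_{-l}$ explicitly using the Fourier-analytic definition of $\mathcal{H}_{-l}$ from (\ref{Eq_A}). The Fourier transform of the Dirac distribution at a point $a=(t,\overrightarrow{x})\in\mathbb{Q}_p^N$ is the character $\widehat{\delta_a}(\xi)=\chi_p(a\cdot\xi)$, which is a measurable function of modulus $1$. Therefore
\[
\|\delta_{a}\|_{-l}^2 = \int_{\mathbb{Q}_p^N}[\xi]_p^{-l}\,\bigl|\chi_p(a\cdot\xi)\bigr|^2\,d^N\xi = \int_{\mathbb{Q}_p^N}[\xi]_p^{-l}\,d^N\xi,
\]
which is independent of the base point $a$ (this also explains why the statement makes no assumption on $t,\overrightarrow{x}$). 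The task is then to show this integral converges for $l$ large enough. Splitting $\mathbb{Q}_p^N$ into the unit ball $\mathbb{Z}_p^N$ (where $[\xi]_p=1$, contributing $1$) and the shells $S_j^N=\{\|\xi\|_p=p^j\}$ for $j\ge 1$ (where $[\xi]_p=p^j$ and $\mathrm{vol}(S_j^N)=p^{jN}-p^{(j-1)N}\le p^{jN}$), one gets the bound
\[
\int_{\mathbb{Q}_p^N}[\xi]_p^{-l}\,d^N\xi \le 1 + \sum_{j=1}^{\infty}p^{-jl}\,p^{jN} = 1 + \sum_{j=1}^{\infty}p^{j(N-l)},
\]
which is a convergent geometric series as soon as $l>N$, i.e.\ $l\ge N+1$ suffices.

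Thus choosing $l=N+1$ and $k=0$ gives $\delta_{(t,\overrightarrow{x})}\in\mathcal{H}_{-(N+1),0,-1}\subset(\mathcal{H}_\infty)^{-1}$, which is the claim. I expect no serious obstacle here: the only slightly delicate points are (i) justifying that $\delta_a$ indeed sits in $\mathcal{D}'(\mathbb{Q}_p^N)$ with a measurable Fourier transform so that the norm $\|\cdot\|_{-l}$ is defined — immediate since $\widehat{\delta_a}=\chi_p(a\cdot\,\cdot)$ is continuous — and (ii) confirming the chaos-decomposition bookkeeping so that $\|\delta_a\|_{-l,-k,-1}$ reduces to $\|\delta_a\|_{-l}$ as above, which is just reading off (\ref{Eq_6}) at $n=1$, $\beta=1$. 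A remark worth including is that the estimate is uniform in the base point $(t,\overrightarrow{x})$, which will presumably be used later when integrating the process over $\mathbb{Q}_p^N$ via the integration criterion (\ref{Eq_C}).
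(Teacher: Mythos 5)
Your proof is correct and follows essentially the same route as the paper: identify $\delta_{(t,\overrightarrow{x})}$ with the chaos expansion whose only nonzero kernel is $\Phi_{1}=\delta_{(t,\overrightarrow{x})}$, and reduce everything to $\left\Vert \delta_{(t,\overrightarrow{x})}\right\Vert _{-l}^{2}=\int_{\mathbb{Q}_{p}^{N}}\left[  \xi\right]  _{p}^{-l}d^{N}\xi<\infty$ for $l>N$ (your shell-by-shell estimate just makes explicit the convergence the paper asserts). One cosmetic slip: you call this a ``zeroth-chaos-only'' element, but it lives in the \emph{first} chaos; your parenthetical and the norm computation via (\ref{Eq_6}) at $n=1$ are nonetheless correct.
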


\begin{proof}
We first notice that%
\[
\left\Vert \delta_{\left(  t,\overrightarrow{x}\right)  }\right\Vert _{-l}%
^{2}=\int\nolimits_{\mathbb{Q}_{p}^{N}}\frac{d^{N}\xi}{\left[  \xi\right]
_{p}^{l}}<\infty\text{ for }l>N\text{,}%
\]
which implies that $\delta_{\left(  t,\overrightarrow{x}\right)  }%
\in\mathcal{H}_{-l}(\mathbb{C})$ for all $l>N$, see (\ref{Eq_A}). Now, we
define $\left\{  \Phi_{n}\right\}  _{n\in\mathbb{N}}$, with $\Phi_{n}%
\in\left(  \mathcal{H}_{\infty}^{\ast}\left(  \mathbb{C}\right)  \right)
^{\widehat{\otimes}n}$, as $\Phi_{n}=0$ if $n\neq1$ and $\Phi_{1}%
=\delta_{\left(  t,\overrightarrow{x}\right)  }$. Then
\[
\sum_{n}\left\langle \Phi_{n},:W^{\otimes n}:\right\rangle =\left\langle
\delta_{\left(  t,\overrightarrow{x}\right)  },:W:\right\rangle \in\left(
\mathcal{H}_{\infty}\right)  ^{-1}.
\]
In addition, for $\psi\in\mathcal{H}_{\infty}\left(  \mathbb{C}\right)  $, we
have
\begin{align*}
\left\langle \left\langle \left\langle \delta_{\left(  t,\overrightarrow
{x}\right)  },:W:\right\rangle ,\psi\right\rangle \right\rangle  &
=\left\langle \delta_{\left(  t,\overrightarrow{x}\right)  },\psi\right\rangle
=\int\nolimits_{\mathbb{Q}_{p}^{N}}\chi_{p}\left(  -\xi\cdot x\right)
\widehat{\psi}\left(  x\right)  d^{N}\xi\\
&  =\psi\left(  t,\overrightarrow{x}\right)  ,
\end{align*}
where we used that $\psi$ is a continuous function in $L^{1}\cap L^{2}$, see
Section \ref{Section1}\ and \cite[Theorem 10.15]{KKZuniga}.
\end{proof}

We now set%
\[
\boldsymbol{\Phi}\left(  t,\overrightarrow{x}\right)  :=\left\langle
\delta_{\left(  t,\overrightarrow{x}\right)  },:W:\right\rangle \in\left(
\mathcal{H}_{\infty}\right)  ^{-1}.
\]
Then $\boldsymbol{\Phi}\left(  t,\overrightarrow{x}\right)  $\ is a
white-noise process with $\mathbb{E}_{\mu}\left(  \boldsymbol{\Phi}\left(
t,\overrightarrow{x}\right)  \right)  =0$.

Assume that%
\[
H(z)=\sum\nolimits_{k=0}^{\infty}\frac{1}{k!}H_{k}z^{k}\text{, }z\in
U\subset\mathbb{C}\text{,}%
\]
is a holomorphic function \ in $U$, an open neighborhood of $0=\mathbb{E}%
_{\mu}\left(  \boldsymbol{\Phi}\left(  t,\overrightarrow{x}\right)  \right)
$. By \cite[Theorem 12]{KLS96}, see also Section
\ref{Section Wick_analytic_fun}, we can define%
\begin{align*}
H^{\lozenge}(\boldsymbol{\Phi}\left(  t,\overrightarrow{x}\right)  )  &
=\sum\nolimits_{k=0}^{\infty}\frac{1}{k!}H_{k}\boldsymbol{\Phi}\left(
t,\overrightarrow{x}\right)  ^{\lozenge k}\\
&  =\sum\nolimits_{k=0}^{\infty}\frac{1}{k!}H_{k}\left\langle \delta_{\left(
t,\overrightarrow{x}\right)  }^{\otimes k},:W^{\otimes k}:\right\rangle
\in\left(  \mathcal{H}_{\infty}\right)  ^{-1}.
\end{align*}
Our next goal is the construction of the potential%
\begin{equation}
\int\nolimits_{\mathbb{Q}_{p}^{N}}H^{\lozenge}(\boldsymbol{\Phi}\left(
x\right)  )d^{N}x \label{Eq_12}%
\end{equation}
as a white-noise distribution. This goal is accomplished through the following result:

\begin{theorem}
\label{Theorem1}(i) Let $H$ be a holomorphic function at zero such that
$H(0)=0$. Then (\ref{Eq_12}) exists as a Bochner integral in a suitable
subspace of $\left(  \mathcal{H}_{\infty}\right)  ^{-1}$.

\noindent(ii) The distribution%
\[
\boldsymbol{\Phi}_{H}:=\exp^{\lozenge}\left(  -\int\nolimits_{\mathbb{Q}%
_{p}^{N}}H^{\lozenge}(\boldsymbol{\Phi}\left(  x\right)  )d^{N}x\right)
\]
is an element of $\left(  \mathcal{H}_{\infty}\right)  ^{-1}$.

\noindent(iii) The ${\LARGE T}$-transform of $\boldsymbol{\Phi}_{H}$\ is given
by
\[
{\LARGE T}\boldsymbol{\Phi}_{H}\left(  g\right)  =\exp\left(  -\int
\nolimits_{\mathbb{Q}_{p}^{N}}H(ig\left(  x\right)  )+\frac{1}{2}\left(
g\left(  x\right)  \right)  ^{2}\text{ }d^{N}x\right)
\]
for all $g$ in a neighborhood $\mathcal{U\subset H}_{\infty}\mathcal{(}%
\mathbb{C}\mathcal{)}$ of the zero. In particular, $\mathbb{E}_{\mathbb{\mu}%
}(\boldsymbol{\Phi}_{H})=1$.
\end{theorem}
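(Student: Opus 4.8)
The plan is to verify everything at the level of the $S$-transform (equivalently the $T$-transform) and then transport the conclusions back through the characterization of $(\mathcal{H}_\infty)^{-1}$ recalled in Section~\ref{Section_Characterization}. First I would compute the $S$-transform of the integrand $H^{\lozenge}(\boldsymbol{\Phi}(x))$. Since $\boldsymbol{\Phi}(x)=\langle\delta_x,{:}W{:}\rangle$, formula (\ref{Eq_7}) gives ${\LARGE S}\boldsymbol{\Phi}(x)(g)=\widehat{g}$ evaluated appropriately; more precisely, because the only nonzero chaos kernel is $\Phi_1=\delta_x$, one gets ${\LARGE S}\boldsymbol{\Phi}(x)(g)=g(x)$ for $g\in\mathcal{H}_\infty(\mathbb{C})$ (using that $g$ is continuous, as in the preceding lemma). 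By the Wick-analytic function calculus of Section~\ref{Section Wick_analytic_fun}, ${\LARGE S}\big(H^{\lozenge}(\boldsymbol{\Phi}(x))\big)(g)=H\big(g(x)\big)$. Now I would apply the integration theorem for distributions (the result around (\ref{Eq_C})): I must exhibit a neighborhood $\mathcal{V}=\mathcal{V}_{l,\epsilon}$ of zero in $\mathcal{H}_\infty(\mathbb{C})$ on which $x\mapsto H(g(x))$ is (a) holomorphic in $g$, (b) measurable in $x$, and (c) dominated by an $L^1(\mathbb{Q}_p^N,d^Nx)$ function uniformly in $g\in\mathcal{V}$. Holomorphy is inherited from that of $H$; measurability is clear. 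The domination is where the real work lies: since $H(0)=0$ and $H$ is holomorphic near $0$, for $|w|$ small we have $|H(w)|\le c\,|w|$, so $|H(g(x))|\le c\,|g(x)|$ provided $\|g(x)\|$ stays in the disk of holomorphy; and here I would invoke the continuous embedding $\mathcal{H}_\infty(\mathbb{C})\hookrightarrow L^1\cap L^2\cap C_0$ (cf.\ \cite[Theorem~10.15]{KKZuniga}) to get both the uniform bound $\|g\|_{L^\infty}\le C\|g\|_{l}$ (forcing $g(x)$ into the disk for $\epsilon$ small) and the integrable dominating function $c\,|g(x)|\le c\,\sup_{\|h\|_l<\epsilon}|h(x)|$, which must itself be checked to lie in $L^1$; this last point is the main obstacle and I expect it to be handled exactly as the $\alpha d>N$ / norm-estimate arguments elsewhere in the paper, i.e.\ by bounding $\sup_{h\in\mathcal{V}}|h(x)|$ through a Sobolev-type inequality in the $\|\cdot\|_l$ norm with $l$ large. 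This establishes (i), and then (\ref{Eq_C}) gives ${\LARGE S}\big(\int_{\mathbb{Q}_p^N}H^{\lozenge}(\boldsymbol{\Phi}(x))d^Nx\big)(g)=\int_{\mathbb{Q}_p^N}H(g(x))\,d^Nx$.

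For part (ii), write $\boldsymbol{\Psi}:=-\int_{\mathbb{Q}_p^N}H^{\lozenge}(\boldsymbol{\Phi}(x))d^Nx\in(\mathcal{H}_\infty)^{-1}$, with $\mathbb{E}_\mu(\boldsymbol{\Psi})={\LARGE S}\boldsymbol{\Psi}(0)=-\int H(0)\,d^Nx=0$. The Wick-exponential is the Wick-analytic function $\exp^{\lozenge}$ applied to $\boldsymbol{\Psi}$; since the entire function $\exp$ is analytic at $z_0=\mathbb{E}_\mu(\boldsymbol{\Psi})=0$, Section~\ref{Section Wick_analytic_fun} (\cite[Theorem~12]{KLS96}) immediately yields $\boldsymbol{\Phi}_H=\exp^{\lozenge}(\boldsymbol{\Psi})=S^{-1}\big(\exp({\LARGE S}\boldsymbol{\Psi})\big)\in(\mathcal{H}_\infty)^{-1}$, together with the Wick-series representation. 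So (ii) is essentially a citation once (i) is in place.

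For part (iii), I would just chain the identities. By definition of the $T$-transform (\ref{Eq_T}), ${\LARGE T}\boldsymbol{\Phi}_H(g)=\exp\!\big(-\tfrac12\|g\|_0^2\big)\,{\LARGE S}\boldsymbol{\Phi}_H(ig)$. Now ${\LARGE S}\boldsymbol{\Phi}_H(ig)=\exp\big({\LARGE S}\boldsymbol{\Psi}(ig)\big)=\exp\!\big(-\int_{\mathbb{Q}_p^N}H(ig(x))\,d^Nx\big)$ by part (i) applied at the argument $ig$ (legitimate after shrinking the neighborhood so that $ig$ still lies in $\mathcal{V}$). Since $\|g\|_0^2=\|g\|_{L^2}^2=\int_{\mathbb{Q}_p^N}g(x)^2\,d^Nx$ for real-valued $g$ (and by analytic continuation in general), we obtain
\[
{\LARGE T}\boldsymbol{\Phi}_H(g)=\exp\!\left(-\int_{\mathbb{Q}_p^N}H(ig(x))+\tfrac12\,(g(x))^2\,d^Nx\right),
\]
as claimed. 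Finally, setting $g=0$ gives ${\LARGE T}\boldsymbol{\Phi}_H(0)=\exp(0)=1$, and since ${\LARGE T}\boldsymbol{\Phi}_H(0)={\LARGE S}\boldsymbol{\Phi}_H(0)=\mathbb{E}_\mu(\boldsymbol{\Phi}_H)$, we conclude $\mathbb{E}_\mu(\boldsymbol{\Phi}_H)=1$. The only genuine analytic content is the domination estimate in part (i); everything else is an application of the $S$-transform characterization and the Wick-analytic calculus already set up in Section~\ref{Section_4}.
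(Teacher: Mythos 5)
Your strategy is the paper's: compute ${\LARGE S}\bigl(H^{\lozenge}(\boldsymbol{\Phi}(x))\bigr)(g)=H(g(x))$ pointwise, feed this into the integration result around (\ref{Eq_C}) to get part (i), and then obtain (ii) and (iii) as applications of the Wick-analytic calculus of Section \ref{Section Wick_analytic_fun} and of the definition (\ref{Eq_T}). Your parts (ii) and (iii) are correct and coincide with the proof in the text (you are in fact more careful than the paper about the identity $\left\Vert g\right\Vert _{0}^{2}=\int g(x)^{2}\,d^{N}x$, which holds literally only for real $g$).

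The gap is in the one step you yourself single out as the genuine analytic content: the $L^{1}$ domination in part (i). You propose to dominate $\left\vert H(g(x))\right\vert \leq c\left\vert g(x)\right\vert$ uniformly over $g\in\mathcal{V}$ by the function $x\mapsto c\,\sup_{\left\Vert h\right\Vert _{l}<\epsilon}\left\vert h(x)\right\vert$ and to show this lies in $L^{1}$ via a Sobolev-type inequality. That cannot work: the inequality $\left\Vert h\right\Vert _{L^{\infty}}\leq C(N)\left\Vert h\right\Vert _{l}$ only bounds that supremum by the constant $C(N)\epsilon$, uniformly in $x$; and since $\left\Vert \cdot\right\Vert _{l}$ is translation invariant (translating $h$ multiplies $\widehat{h}$ by a unimodular character), translates of a fixed nonzero test function show that $\sup_{\left\Vert h\right\Vert _{l}<\epsilon}\left\vert h(x)\right\vert$ is bounded below by a positive constant at every point, hence is not integrable over the non-compact space $\mathbb{Q}_{p}^{N}$; no such $g$-uniform integrable majorant exists. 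The paper's resolution is different and is the one you should adopt: after the pointwise bound $\left\Vert g\right\Vert _{L^{\infty}}\leq C(N)\left\Vert g\right\Vert _{l}$ forces $g(x)$ into the disk of convergence of $H$ on the neighborhood $\mathcal{U}_{0}=\{g:\left\Vert g\right\Vert _{l}<r/C(N)\}$, one keeps the $g$-dependent majorant $\left\vert {\LARGE S}H^{\lozenge}(\boldsymbol{\Phi}(x))(g)\right\vert \leq\left\vert g(x)\right\vert \sum_{k\geq1}\frac{1}{k!}\left\vert H_{k}\right\vert r^{k-1}$, which lies in $L^{1}(\mathbb{Q}_{p}^{N})$ for each fixed $g$ precisely because of the embedding $\mathcal{H}_{\infty}(\mathbb{C})\subset L^{1}(\mathbb{Q}_{p}^{N})$ — i.e.\ the integrability of $g$ itself, not of a supremum over the neighborhood. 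Combined with the continuity of that embedding this also gives the local boundedness of $g\mapsto\int_{\mathbb{Q}_{p}^{N}}H(g(x))\,d^{N}x$ required by the characterization theorem. With your proposed majorant replaced by this one, the rest of your argument goes through unchanged.
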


\begin{proof}
(i) The result follows from the discussion presented in Section
\ref{Section_Characterization}, see also \cite[Theorem 6]{KLS96}, as follows.
Let $r>0$ be the radius of convergence of the Taylor series of $H$ at the
origin. We set $C(N):=\sqrt{\int_{\mathbb{Q}_{p}^{N}}\frac{d^{N}\xi}{\left[
\xi\right]  _{p}^{l}}}$, for a fixed $l>N$, and%
\[
\mathcal{U}_{0}:=\left\{  g\in\mathcal{H}_{\infty}\left(  \mathbb{C}\right)
;\left\Vert g\right\Vert _{l}<\frac{r}{C(N)}\right\}  .
\]
Then, for $g\in\mathcal{U}_{0}$ we have%
\begin{align}
{\LARGE S}H^{\lozenge}(\boldsymbol{\Phi}\left(  x\right)  )\left(  g\right)
&  =\sum_{k=1}^{\infty}\frac{1}{k!}H_{k}\left\langle \delta_{x}^{\otimes
k},g^{\otimes k}\right\rangle =\sum_{k=1}^{\infty}\frac{1}{k!}H_{k}%
g(x)^{k}\label{Eq_13}\\
&  =\sum_{k=1}^{\infty}\frac{1}{k!}H_{k}\left\{  \frac{g(x)}{r}\right\}
^{k}r^{k}.\nonumber
\end{align}
By Claim A, $\left\vert \frac{g(x)}{r}\right\vert <1$, and from (\ref{Eq_13})
we obtain that%
\begin{equation}
\left\vert {\LARGE S}H^{\lozenge}(\boldsymbol{\Phi}\left(  x\right)  )\left(
g\right)  \right\vert \leq\left\vert g(x)\right\vert \sum_{k=1}^{\infty}%
\frac{1}{k!}\left\vert H_{k}\right\vert r^{k-1}\in L^{1}\left(  \mathbb{Q}%
_{p}^{N}\right)  , \label{Eq_14}%
\end{equation}
because $\mathcal{H}_{\infty}\left(  \mathbb{C}\right)  \subset L^{1}\left(
\mathbb{Q}_{p}^{N}\right)  $, cf. \cite[Theorem 10.15]{KKZuniga}. Estimation
(\ref{Eq_14}) implies the holomorphy of ${\LARGE S}H^{\lozenge}%
(\boldsymbol{\Phi}\left(  x\right)  )\left(  g\right)  $ for any
$g\in\mathcal{U}_{0}$. Since ${\LARGE S}H^{\lozenge}(\boldsymbol{\Phi}\left(
x\right)  )\left(  g\right)  $ is measurable by \cite[Theorem 6]{KLS96}, we
conclude that (\ref{Eq_12}) is an element of $\left(  \mathcal{H}_{\infty
}\right)  ^{-1}$.

\textbf{Claim A.} $\ \mathcal{U}_{0}\subset\mathcal{U}:=\left\{
g\in\mathcal{H}_{\infty}\left(  \mathbb{C}\right)  ;\left\Vert g\right\Vert
_{L^{\infty}}<r\right\}  .$

The Claim follows from the fact that
\[
\left\Vert g\right\Vert _{L^{\infty}}\leq C(N)\left\Vert g\right\Vert
_{l}\text{, for\ }g\in\mathcal{H}_{\infty}\left(  \mathbb{C}\right)  .
\]
\ This last fact is verified as follows: by using that $g\in L^{1}\left(
\mathbb{Q}_{p}^{N}\right)  \cap L^{2}\left(  \mathbb{Q}_{p}^{N}\right)  $, and
the Cauchy-Schwarz inequality, we have%
\begin{align*}
\left\vert g\left(  x\right)  \right\vert  &  =\left\vert \int
\nolimits_{\mathbb{Q}_{p}^{N}}\chi_{p}\left(  -\xi\cdot x\right)  \widehat
{g}\left(  \xi\right)  d^{N}\xi\right\vert \leq\int\nolimits_{\mathbb{Q}%
_{p}^{N}}\left\vert \widehat{g}\left(  \xi\right)  \right\vert d^{N}\xi\\
&  =\int\nolimits_{\mathbb{Q}_{p}^{N}}\frac{1}{\left[  \xi\right]  _{p}%
^{\frac{l}{2}}}\text{ }\left\{  \left[  \xi\right]  _{p}^{\frac{l}{2}%
}\left\vert \widehat{g}\left(  \xi\right)  \right\vert \right\}  d^{N}\xi\leq
C(N)\left\Vert g\right\Vert _{l}.
\end{align*}

(ii) Since $\exp$ is analytic in a neighborhood of $0=\mathbb{E}_{\mu}\left(
\boldsymbol{\Phi}\left(  t,\overrightarrow{x}\right)  \right)  $, then
\[
\exp^{\lozenge}\left(  -\int\nolimits_{\mathbb{Q}_{p}^{N}}H^{\lozenge
}(\boldsymbol{\Phi}\left(  x\right)  )d^{N}x\right)  ={\LARGE S}^{-1}\left(
\exp\left(  {\LARGE S}\left(  -\int\nolimits_{\mathbb{Q}_{p}^{N}}H^{\lozenge
}(\boldsymbol{\Phi}\left(  x\right)  )d^{N}x\right)  \right)  \right)  ,
\]
and by (i), $-\int\nolimits_{\mathbb{Q}_{p}^{N}}H^{\lozenge}(\boldsymbol{\Phi
}\left(  x\right)  )d^{N}x\in\left(  \mathcal{H}_{\infty}\right)  ^{-1}$, and
then its $S$-transform is analytic at the origin, and its composition with
$\exp$ gives again an analytic function at the origin, whose inverse
$S$-transform gives an element of $\left(  \mathcal{H}_{\infty}\right)  ^{-1}%
$, cf. \cite[Theorem 12]{KLS96}.

(iii) The calculation of the ${\LARGE T}$-transform uses (\ref{Eq_T}),
$\exp^{\lozenge}\left(  \cdot\right)  ={\LARGE S}^{-1}\left(  \exp\left(
{\LARGE S}\left(  \cdot\right)  \right)  \right)  $, and (\ref{Eq_13}) as
follows:%
\begin{align*}
\left(  {\LARGE T}\boldsymbol{\Phi}_{H}\right)  \left(  g\right)   &
=\exp\left(  -\frac{1}{2}\left\Vert g\right\Vert _{0}^{2}\right)  \exp\left(
{\LARGE S}\left(  -\int_{\mathbb{Q}_{p}^{N}}H^{\lozenge}\left(
\boldsymbol{\Phi}\left(  x\right)  \right)  d^{N}x\right)  \left(  ig\right)
\right) \\
&  =\exp\left(  -\frac{1}{2}\left\Vert g\right\Vert _{0}^{2}\right)
\exp\left(  -\int_{\mathbb{Q}_{p}^{N}}\left\langle \left\langle H^{\lozenge
}\left(  \boldsymbol{\Phi}\left(  x\right)  \right)  ,:\exp\left\langle
\cdot,ig\right\rangle :\right\rangle \right\rangle d^{N}x\right) \\
&  =\exp\left(  -\frac{1}{2}\left\Vert g\right\Vert _{0}^{2}\right)
\exp\left(  -\int_{\mathbb{Q}_{p}^{N}}{\LARGE S}H^{\lozenge}\left(
\boldsymbol{\Phi}\left(  x\right)  \right)  \left(  ig\right)  d^{N}x\right)
\\
&  =\exp\left(  -\int_{\mathbb{Q}_{p}^{N}}H\left(  ig\left(  x\right)
\right)  +\frac{1}{2}g\left(  x\right)  ^{2}\text{ }d^{N}x\right)  .
\end{align*}

In particular $\mathbb{E}_{\mu}(\boldsymbol{\Phi}_{H})={\LARGE T}%
\boldsymbol{\Phi}_{H}\left(  0\right)  =1$.
\end{proof}

\subsection{\label{Sect4}Pseudodifferential Operators and Green Functions}

A non-constant homogeneous polynomial $\mathfrak{l}\left(  \xi\right)
\in\mathbb{Z}_{p}\left[  \xi_{1},\cdots,\xi_{N}\right]  $ of degree $d$ is
called \textit{elliptic\ }if it\textit{ }satisfies $\mathfrak{l}\left(
\xi\right)  =0\Leftrightarrow\xi=0$. There are infi\-nitely many elliptic
polynomials, cf. \cite[Lemma 24]{Zuniga-LNM-2016}. A such polynomial
satisfies
\begin{equation}
C_{0}\left(  \alpha\right)  \left\Vert \xi\right\Vert _{p}^{\alpha d}%
\leq\left\vert \mathfrak{l}\left(  \xi\right)  \right\vert _{p}^{\alpha}\leq
C_{1}\left(  \alpha\right)  \left\Vert \xi\right\Vert _{p}^{\alpha d},
\label{basic_estimate}%
\end{equation}
for some positive constants $C_{0}\left(  \alpha\right)  $, $C_{1}\left(
\alpha\right)  $, cf. \cite[Lemma 25]{Zuniga-LNM-2016}. We define an
\textit{elliptic pseudodifferential operator with symbol }$\left\vert
\mathfrak{l}\left(  \xi\right)  \right\vert _{p}^{\alpha}$, with $\alpha>0$,
as
\begin{equation}
\left(  \mathbf{L}_{\alpha}h\right)  \left(  x\right)  =\mathcal{F}%
_{\xi\rightarrow x}^{-1}\left(  \left\vert \mathfrak{l}\left(  \xi\right)
\right\vert _{p}^{\alpha}\mathcal{F}_{x\rightarrow\xi}h\right)  ,
\label{ellipticoperaator}%
\end{equation}
for $h\in\mathcal{D}(\mathbb{Q}_{p}^{N})$. We define $G:=G\left(
x;m,\alpha\right)  \in\mathcal{D}^{\prime}(\mathbb{Q}_{p}^{N})$, with
$\alpha>0$, $m>0$, to be the solution of
\[
\left(  \boldsymbol{L}_{\alpha}+m^{2}\right)  G=\delta\text{ in \ }%
\mathcal{D}^{\prime}(\mathbb{Q}_{p}^{N}).
\]
We will say that the \textit{Green function} $G\left(  x;m,\alpha\right)  $ is
a \textit{fundamental solution} of the equation%
\begin{equation}
\left(  \boldsymbol{L}_{\alpha}+m^{2}\right)  u=h,\;\text{with }%
h\in\mathcal{D}(\mathbb{Q}_{p}^{N}),\text{ }m>0. \label{equation1}%
\end{equation}
As a distribution \ from $\mathcal{D}^{\prime}(\mathbb{Q}_{p}^{N})$, the Green
function $G\left(  x;m,\alpha\right)  $\ is given by
\begin{equation}
G\left(  x;\alpha,m\right)  =\mathcal{F}_{\xi\rightarrow x}^{-1}\left(
\frac{1}{\left\vert \mathfrak{l}\left(  \xi\right)  \right\vert _{p}^{\alpha
}+m^{2}}\right)  . \label{GreenFunc}%
\end{equation}
Notice that by (\ref{basic_estimate}), we have
\[
\frac{1}{\left\vert \mathfrak{l}\left(  \xi\right)  \right\vert _{p}^{\alpha
}+m^{2}}\in L^{1}\left(  \mathbb{Q}_{p}^{N},d^{N}\xi\right)  \text{\ \ for
\ }\alpha d>N,
\]
and in this case, $G\left(  x;\alpha,m\right)  $ is an $L^{\infty}$-function.

There exists a Green function $G\left(  x;\alpha,m\right)  $ for the operator
$\boldsymbol{L}_{\alpha}+m^{2}$, which is continuous and non-negative on
$\mathbb{Q}_{p}^{n}\smallsetminus\left\{  0\right\}  $, and tends to zero at
infinity. The equation%
\begin{equation}
\left(  \boldsymbol{L}_{\alpha}+m^{2}\right)  u=g\text{, } \label{equation3A}%
\end{equation}
\ with $g\in\mathcal{H}_{\mathbb{\infty}}\left(  \mathbb{R}\right)  $, has a
unique solution $u(x)=G\left(  x;\alpha,m\right)  \ast g(x)\in\mathcal{H}%
_{\mathbb{\infty}}\left(  \mathbb{R}\right)  $, cf. \cite[Theorem
11.2]{KKZuniga}.

As a consequence one obtains that \ the mapping%
\begin{equation}%
\begin{array}
[c]{cccc}%
\mathcal{G}_{\alpha,m}: & \mathcal{H}_{\mathbb{\infty}}\left(  \mathbb{R}%
\right)  & \rightarrow & \mathcal{H}_{\mathbb{\infty}}\left(  \mathbb{R}%
\right) \\
& g\left(  x\right)  & \rightarrow & G\left(  x;\alpha,m\right)  \ast g(x),
\end{array}
\label{Mapping_G}%
\end{equation}
is continuous, cf. \cite[Corollary 11.3]{KKZuniga}.

\begin{remark}
\label{Nota3}For $\alpha>0$, $\beta>0$, $m>0$, we set%
\[
\left(  \mathbf{L}_{\alpha,\beta,m}h\right)  \left(  x\right)  =\mathcal{F}%
_{\xi\rightarrow x}^{-1}\left(  \left(  \left\vert \mathfrak{l}\left(
\xi\right)  \right\vert _{p}^{\alpha}+m^{2}\right)  ^{\beta}\mathcal{F}%
_{x\rightarrow\xi}h\right)  ,
\]
for $h\in\mathcal{D}(\mathbb{Q}_{p}^{N})$. We denote by $G\left(
x;\alpha,\beta,m\right)  $ the associated Green function. By using the fact
that%
\[
C_{0}\left(  \alpha,\beta,m\right)  \left[  \xi\right]  _{p}^{\alpha\beta
d}\leq\left(  \left\vert \mathfrak{l}\left(  \xi\right)  \right\vert
_{p}^{\alpha}+m^{2}\right)  ^{\beta}\leq C_{1}\left(  \alpha,\beta,m\right)
\left[  \xi\right]  _{p}^{\alpha\beta d},
\]
all the results presented in this section for operators $\boldsymbol{L}%
_{\alpha}+m^{2}$ can be extended to operators $\mathbf{L}_{\alpha,\beta,m}$.
In particular,%
\begin{equation}%
\begin{array}
[c]{cccc}%
\mathcal{G}_{\alpha,\beta,m}: & \mathcal{H}_{\mathbb{\infty}}\left(
\mathbb{R}\right)  & \rightarrow & \mathcal{H}_{\mathbb{\infty}}\left(
\mathbb{R}\right) \\
& g\left(  x\right)  & \rightarrow & G\left(  x;\alpha,\beta,m\right)  \ast
g(x),
\end{array}
\label{Mapping_G_2}%
\end{equation}
gives rise to a continuous mapping. \ As operators on $\mathcal{H}%
_{\mathbb{\infty}}\left(  \mathbb{R}\right)  $, we can identify $\mathcal{G}%
_{\alpha,\beta,m}$ with the operator $\left(  \boldsymbol{L}_{\alpha}%
+m^{2}\right)  ^{-\beta}$, which is a pseudodifferential operator with symbol
$\left(  \left\vert \mathfrak{l}\left(  \xi\right)  \right\vert _{p}^{\alpha
}+m^{2}\right)  ^{-\beta}$.
\end{remark}

\begin{remark}
\label{Nota_tensor_2}The mapping
\[%
\begin{array}
[c]{cccc}%
\mathcal{G}_{\alpha,m}^{\otimes2}-1: & \mathcal{H}_{\mathbb{\infty}}%
^{\otimes2} & \rightarrow & \mathcal{H}_{\mathbb{\infty}}^{\otimes2}\\
& f\otimes g & \rightarrow & \mathcal{G}_{\alpha,m}\left(  f\right)
\otimes\mathcal{G}_{\alpha,m}\left(  g\right)  -f\otimes g
\end{array}
\]
is well-defined and continuous. By using \cite[Proposition 1.3.6]{Obata}, any
element $h$ of $\mathcal{H}_{\mathbb{\infty}}^{\otimes2}$ can be represented
as an absolutely convergent series of the form $h=\sum_{i}f_{i}\otimes g_{i}$,
consequently, $\sum_{i}\mathcal{G}_{\alpha,m}\left(  f_{i}\right)
\otimes\mathcal{G}_{\alpha,m}\left(  g_{i}\right)  $ is an element of
$\mathcal{H}_{\mathbb{\infty}}^{\otimes2}$, which implies that $\mathcal{G}%
_{\alpha,m}^{\otimes2}-1$ is a well-defined mapping. On the other hand, the
space $\mathcal{H}_{\mathbb{\infty}}^{\otimes2}$ is locally convex, the
topology is defined by the seminorms%
\[
\left\Vert h\right\Vert _{l,k}=\inf\sum_{i}\left\Vert f_{i}\right\Vert
_{l}\otimes\left\Vert g_{i}\right\Vert _{k}\text{, \ }h\in\mathcal{H}%
_{\mathbb{\infty}}\otimes_{\text{alg}}\mathcal{H}_{\mathbb{\infty}},
\]
where the infimum is taken over all the pairs $\left(  f_{i},g_{j}\right)  $
satisfying $h=\sum_{j}f_{j}\otimes g_{j}$. The continuity of $\mathcal{G}%
_{\alpha,m}^{\otimes2}-1$ is equivalent to%
\[
\left\Vert \left(  \mathcal{G}_{\alpha,m}^{\otimes2}-1\right)  h\right\Vert
_{l,k}\leq C\left\Vert h\right\Vert _{l^{\prime},k^{\prime}},
\]
where the indices $l^{\prime}$, $k^{\prime}$ depend on $l$, $k$. This
condition can be verified easily using the continuity of \ $\mathcal{G}%
_{\alpha,m}$.
\end{remark}

\begin{remark}
\label{Nota_Trace}We denote by $Tr$ (the trace), which is\ the unique element
of $\ \mathcal{H}_{\mathbb{\infty}}^{\ast\widehat{\otimes}2}$ determined by
the formula
\[
\left\langle Tr,f\otimes g\right\rangle =\left\langle f,g\right\rangle
_{0}\text{, for }f,g\in\mathcal{H}_{\mathbb{\infty}}\text{.}%
\]
We define $\left(  \mathcal{G}_{\alpha,m}^{\otimes2}-1\right)  Tr\in
\mathcal{H}_{\mathbb{\infty}}^{\ast\widehat{\otimes}2}$ as
\[
\left\langle \left(  \mathcal{G}_{\alpha,m}^{\otimes2}-1\right)  Tr,f\otimes
g\right\rangle =\left\langle Tr,\left(  \mathcal{G}_{\alpha,m}^{\otimes
2}-1\right)  \left(  f\otimes g\right)  \right\rangle ,
\]
where $\left\langle \cdot,\cdot\right\rangle $ is the pairing between
$\mathcal{H}_{\mathbb{\infty}}^{\ast\widehat{\otimes}2}$ and $\mathcal{H}%
_{\mathbb{\infty}}^{\widehat{\otimes}2}$. For a general construction of this
type of operators the reader may consult \cite[Theorem 9.11]{Kuo}.
\end{remark}

\subsection{L\'{e}vy characteristics}

We recall that an infinitely divisible probability distribution $P$ is a
probability distribution having the property that for each $n\in
\mathbb{N\smallsetminus}\left\{  0\right\}  $ there exists a probability
distribution $P_{n}$ such that $P=P_{n}\ast\cdots\ast P_{n}$ ($n$-times). By
the L\'{e}vy-Khinchine Theorem, see e.g. \cite{Lukacs}, the characteristic
function $C_{P}$ of $P$ satisfies%
\begin{equation}
C_{P}(t)=\int\nolimits_{\mathbb{R}}e^{ist}dP(s)=e^{\digamma\left(  t\right)
}\text{, }t\in\mathbb{R}\text{,} \label{char_function}%
\end{equation}
where $\digamma:\mathbb{R}\rightarrow\mathbb{C}$ is a continuous function,
called the \textit{L\'{e}vy characteristic of} $P$, which is uniquely
represented as follows:%
\[
\digamma\left(  t\right)  =iat-\frac{\sigma^{2}t^{2}}{2}+\int
\nolimits_{\mathbb{R\smallsetminus}\left\{  0\right\}  }\left(  e^{ist}%
-1-\frac{ist}{1+s^{2}}\right)  dM(s)\text{, }t\in\mathbb{R}\text{,}%
\]
where $a$, $\sigma\in\mathbb{R}$, with $\sigma\geq0$, and the measure $dM(s)$
satisfies%
\begin{equation}
\int\nolimits_{\mathbb{R\smallsetminus}\left\{  0\right\}  }\min\left(
1,s^{2}\right)  dM(s)<\infty. \label{dM(s)}%
\end{equation}
On the other hand, given a triple $\left(  a,\sigma,dM\right)  $\ with
$a\in\mathbb{R}$, $\sigma\geq0$, and $dM$ a measure on
$\mathbb{R\smallsetminus}\left\{  0\right\}  $ satisfying (\ref{dM(s)}), there
exists a unique infinitely divisible probability distribution $P$ such that
its L\'{e}vy characteristic is given by (\ref{char_function}).

Let $\digamma$ be a L\'{e}vy characteristic defined by (\ref{char_function}).
Then there exists a unique probability measure \textrm{P}$_{\digamma}$ on
$\left(  \mathcal{H}_{\infty}^{^{\ast}}\left(  \mathbb{R}\right)
,\mathcal{B}\right)  $ such that the `Fourier transform' of \textrm{P}%
$_{\digamma}$ satisfies%
\begin{equation}
\int\nolimits_{\mathcal{H}_{\infty}^{^{\ast}}\left(  \mathbb{R}\right)
}e^{i\left\langle W,f\right\rangle }\mathrm{dP}_{\digamma}\left(  W\right)
=\exp\left\{  \int\nolimits_{\mathbb{Q}_{p}^{N}}\digamma\left(  f\left(
x\right)  \right)  d^{N}x\right\}  \text{, }f\in\mathcal{H}_{\infty}\left(
\mathbb{R}\right)  , \label{Eq_the_zuniga}%
\end{equation}
cf. \cite[Theorem 5.2]{Zuniga-FAA-2017}, alternatively \cite[Theorem
11.6]{KKZuniga}.

We will say that a distribution $\boldsymbol{\Theta}\in\left(  \mathcal{H}%
_{\infty}\right)  ^{-1}$ is \textit{represented by a probability measure}
$\mathrm{P}$ on $\left(  \mathcal{H}_{\mathbb{R}}^{^{\ast}}\left(
\infty\right)  ,\mathcal{B}\right)  $ if
\begin{equation}
\left\langle \left\langle \boldsymbol{\Theta},\Psi\right\rangle \right\rangle
=%
{\textstyle\int\nolimits_{\mathcal{H}_{\mathbb{\infty}}^{^{\ast}}\left(
\mathbb{R}\right)  }}
\Psi\left(  W\right)  d\mathrm{P}\left(  W\right)  \text{ for any }\Psi
\in\left(  \mathcal{H}_{\infty}\right)  ^{1}. \label{Eq_definition}%
\end{equation}
We will denote this fact as $d\mathrm{P}=\boldsymbol{\Theta}d\mu$. In this
case $\boldsymbol{\Theta}$ may be regarded as the generalized Radon-Nikodym
derivative $\frac{d\mathrm{P}}{d\mu}$ of $\mathrm{P}$ with respect to $\mu$.

By using this result, Theorem \ref{Theorem1}-(iii), and assuming that
\begin{equation}
\digamma\left(  t\right)  =-H(it)-\frac{1}{2}t^{2}\text{, }t\in\mathbb{R}
\label{Levy_char}%
\end{equation}
is a L\'{e}vy characteristic, there exists a probability measure
$\mathrm{P}_{H}$ on $\left(  \mathcal{H}_{\mathbb{R}}^{^{\ast}}\left(
\infty\right)  ,\mathcal{B}\right)  $ such that%
\begin{equation}
{\LARGE T}\boldsymbol{\Phi}_{H}\left(  f\right)  =%
{\textstyle\int\nolimits_{\mathcal{H}_{\mathbb{\infty}}^{^{\ast}}\left(
\mathbb{R}\right)  }}
\exp\left(  i\left\langle W,f\right\rangle \right)  d\mathrm{P}_{H}\left(
W\right)  \text{, \ }f\in\mathcal{H}_{\infty}\left(  \mathbb{R}\right)
\text{.} \label{Eq_the_zuniga_2}%
\end{equation}

\begin{theorem}
\label{Theorem2A}Assume that $H$ is a holomorphic function at the origin
satisfying $H(0)=0$. Then $d\mathrm{P}_{H}=\boldsymbol{\Phi}_{H}d\mu$ if and
only if $\digamma\left(  t\right)  $ is a L\'{e}vy characteristic.
\end{theorem}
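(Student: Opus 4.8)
The plan is to prove both implications at the level of $S$-transforms (equivalently $T$-transforms), which are injective on $\left( \mathcal{H}_{\infty}\right)^{-1}$ by the characterization in Section \ref{Section_Characterization}. The basic input is that, by the proof of Theorem \ref{Theorem1}(iii) together with (\ref{Eq_13}) and (\ref{Eq_C}),
\[
{\LARGE S}\boldsymbol{\Phi}_{H}\left( g\right) =\exp\left( -\int_{\mathbb{Q}_{p}^{N}}H\left( g\left( x\right) \right) d^{N}x\right) ,\quad{\LARGE T}\boldsymbol{\Phi}_{H}\left( g\right) =\exp\left( \int_{\mathbb{Q}_{p}^{N}}\digamma\left( g\left( x\right) \right) d^{N}x\right) ,
\]
for $g$ in a neighborhood of zero in $\mathcal{H}_{\infty}\left( \mathbb{C}\right) $, with $\mathbb{E}_{\mu}\left( \boldsymbol{\Phi}_{H}\right) =1$; moreover $\digamma$ is automatically holomorphic at the origin since $H$ is.

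Assume first that $\digamma$ is a L\'{e}vy characteristic. Then (\ref{Eq_the_zuniga})--(\ref{Eq_the_zuniga_2}) give a probability measure $\mathrm{P}_{H}$ on $\left( \mathcal{H}_{\infty}^{\ast}\left( \mathbb{R}\right) ,\mathcal{B}\right) $ whose characteristic functional is ${\LARGE T}\boldsymbol{\Phi}_{H}$. Because $\digamma\left( t\right) =-H\left( it\right) -\frac{1}{2}t^{2}$ near $0$ is holomorphic there, $\digamma$ is real-analytic at $0$, and a classical fact about infinitely divisible characteristic functions (cf.\ \cite{Lukacs}) forces the L\'{e}vy measure $dM$ of (\ref{char_function}) to have exponential tails; hence the Laplace transform of $\mathrm{P}_{H}$ is finite and holomorphic on a neighborhood of zero in $\mathcal{H}_{\infty}\left( \mathbb{C}\right) $, namely
\[
\int_{\mathcal{H}_{\infty}^{\ast}\left( \mathbb{R}\right) }e^{\left\langle W,g\right\rangle }d\mathrm{P}_{H}\left( W\right) =\exp\left( \int_{\mathbb{Q}_{p}^{N}}\digamma\left( -ig\left( x\right) \right) d^{N}x\right) =\exp\left( -\int_{\mathbb{Q}_{p}^{N}}H\left( g\left( x\right) \right) d^{N}x+\tfrac{1}{2}\left\Vert g\right\Vert _{0}^{2}\right) .
\]
I would then apply the $p$-adic version (see \cite{Zuniga-FAA-2017}, \cite{KKZuniga}; the analogue of the Kondratiev--Streit criterion of \cite{GS1999}) of the assertion that a probability measure with holomorphic Laplace transform at the origin admits a generalized Radon--Nikodym derivative $\boldsymbol{\Theta}=d\mathrm{P}_{H}/d\mu\in\left( \mathcal{H}_{\infty}\right)^{-1}$ with ${\LARGE S}\boldsymbol{\Theta}\left( g\right) =e^{-\frac{1}{2}\left\Vert g\right\Vert _{0}^{2}}\int e^{\left\langle W,g\right\rangle }d\mathrm{P}_{H}\left( W\right) =\exp\left( -\int H\left( g\left( x\right) \right) d^{N}x\right) ={\LARGE S}\boldsymbol{\Phi}_{H}\left( g\right) $. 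Injectivity of the $S$-transform then yields $\boldsymbol{\Theta}=\boldsymbol{\Phi}_{H}$, that is, $d\mathrm{P}_{H}=\boldsymbol{\Phi}_{H}d\mu$.

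Conversely, assume $d\mathrm{P}_{H}=\boldsymbol{\Phi}_{H}d\mu$ for some probability measure $\mathrm{P}_{H}$ on $\left( \mathcal{H}_{\infty}^{\ast}\left( \mathbb{R}\right) ,\mathcal{B}\right) $. Testing (\ref{Eq_definition}) against the polynomials $W\mapsto\prod_{j=1}^{r}\left\langle W,1_{B_{j}}\right\rangle^{n_{j}}$, where $B_{1},\dots ,B_{r}$ are pairwise disjoint balls inside $\mathbb{Z}_{p}^{N}$ (each $1_{B_{j}}\in\mathcal{D}_{\mathbb{R}}\subset\mathcal{H}_{\infty}\left( \mathbb{R}\right) $, and these are Wick polynomials, so they lie in $\left( \mathcal{H}_{\infty}\right)^{1}$), and using (\ref{Eq_8}), Lemma \ref{Lemma0}, and ${\LARGE T}\boldsymbol{\Phi}_{H}\left( \sum_{j}t_{j}1_{B_{j}}\right) =\prod_{j}\exp\left( \left\vert B_{j}\right\vert \digamma\left( t_{j}\right) \right) $, one identifies the joint moments $\int\prod_{j}\left\langle W,1_{B_{j}}\right\rangle^{n_{j}}d\mathrm{P}_{H}$ with the Taylor coefficients at $0$ of $\prod_{j}\exp\left( \left\vert B_{j}\right\vert \digamma\left( t_{j}\right) \right) $; in particular they factorize. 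Since these are moments of a genuine probability law (the joint law of $\left( \left\langle \cdot,1_{B_{j}}\right\rangle \right)_{j}$ under $\mathrm{P}_{H}$) whose moment generating function converges at $0$, each marginal is moment-determinate with analytic characteristic function, so the joint law is the product of its marginals, i.e.\ the $\left\langle \cdot,1_{B_{j}}\right\rangle $ are independent. Decomposing $\mathbb{Z}_{p}^{N}$ into $p^{kN}$ disjoint balls of radius $p^{-k}$ exhibits $X:=\left\langle \cdot,1_{\mathbb{Z}_{p}^{N}}\right\rangle $ as a sum of $p^{kN}$ i.i.d.\ terms for every $k$, so the law of $X$ is infinitely divisible; its characteristic function equals $e^{\digamma\left( t\right) }={\LARGE T}\boldsymbol{\Phi}_{H}\left( t1_{\mathbb{Z}_{p}^{N}}\right) $ near $0$ and therefore coincides near $0$ with the exponential of the L\'{e}vy--Khinchine characteristic of that infinitely divisible law, i.e.\ $\digamma$ is a L\'{e}vy characteristic. (If the converse of the Bochner--Minlos-type theorem behind (\ref{Eq_the_zuniga}) is available, this paragraph collapses to matching the characteristic functional of $\mathrm{P}_{H}$ with ${\LARGE T}\boldsymbol{\Phi}_{H}$.)

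The step I expect to be the crux lies in the first implication: checking that $\mathrm{P}_{H}$ genuinely defines an element of $\left( \mathcal{H}_{\infty}\right)^{-1}$. This is where the cited $p$-adic Kondratiev--Streit characterization enters (after the exponential-tail bound on $dM$), and where one must pass from the dual pairing $\left\langle \left\langle \boldsymbol{\Phi}_{H},\cdot\right\rangle \right\rangle $ --- defined a priori only on $\left( \mathcal{H}_{\infty}\right)^{1}$, which contains neither the Wick exponentials nor $e^{i\left\langle \cdot,f\right\rangle }$ --- to the honest integral against $\mathrm{P}_{H}$; the latter I would handle by approximating $e^{i\left\langle \cdot,f\right\rangle }$, for $f$ small, in the norm of $\mathcal{H}_{l,k,1}$ by truncated Wick exponentials, on which the identity is already known, the required uniform $\mathrm{P}_{H}$-integrability coming from the same exponential-moment estimate.
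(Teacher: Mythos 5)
Your argument is essentially sound, but it takes a genuinely different route from the paper in both implications. For the ``if'' direction the paper does not invoke any representation theorem for measures: starting from (\ref{Eq_the_zuniga_2}) it verifies the defining identity (\ref{Eq_definition}) directly, first reducing to $\Psi$ in the span of the exponentials $\exp\alpha\left\langle \cdot,f\right\rangle $ (dense in $(L_{\mathbb{C}}^{2})$), then to the case $\boldsymbol{\Phi}_{H}\in(L_{\mathbb{C}}^{2})$ by density in $\mathcal{H}_{-l,-k,-1}$, and finally continuing analytically in $\lambda$ from (\ref{Eq_16AA}); you instead import a $p$-adic analogue of the Kondratiev--Streit--Westerkamp criterion (probability measures with Laplace transform holomorphic at the origin admit a generalized density in $(\mathcal{H}_{\infty})^{-1}$ with the stated $S$-transform) and finish by injectivity of the $S$-transform. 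That criterion is not proved in this paper nor in its $p$-adic references, so your forward implication is only as complete as your sketched proof of it (the extension of the pairing to Wick exponentials with uniform integrability); in exchange, you make the exponential integrability of $\mathrm{P}_{H}$ explicit, a point the paper's appeal to an entire analytic extension in $\lambda$ leaves implicit. For the ``only if'' direction the paper pairs $\boldsymbol{\Phi}_{H}$ with $e^{i\left\langle \cdot,f\right\rangle }$ in (\ref{Eq_16BB}), takes $f=t1_{\mathbb{Z}_{p}^{N}}$, and stops once the characteristic function of $\left\langle \cdot,1_{\mathbb{Z}_{p}^{N}}\right\rangle $ is identified with $e^{\digamma\left(  t\right)  }$, leaving the infinite divisibility implicit; your moment argument (testing (\ref{Eq_definition}) on genuine Wick polynomials via Lemma \ref{Lemma0}, Cram\'er determinacy, independence over disjoint balls, divisibility along $n=p^{kN}$) supplies exactly that missing step and avoids pairing $\boldsymbol{\Phi}_{H}$ with $e^{i\left\langle \cdot,f\right\rangle }$, which is not an element of $(\mathcal{H}_{\infty})^{1}$, so on this half you are more careful than the paper, at the cost of invoking classical moment-determinacy and null-array facts. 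In both write-ups keep in mind that the identity $T\boldsymbol{\Phi}_{H}(f)=\exp\left(  \int_{\mathbb{Q}_{p}^{N}}\digamma\left(  f\left(  x\right)  \right)  d^{N}x\right)  $ is available only for $f$ in a neighborhood of zero, so the precise conclusion of the converse is that $\digamma$ agrees near $0$ with a L\'evy characteristic.
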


\begin{proof}
Assume that $\digamma\left(  t\right)  $ is a L\'{e}vy characteristic. By
(\ref{Eq_the_zuniga_2}), we have
\begin{equation}
{\LARGE T}\boldsymbol{\Phi}_{H}\left(  \lambda f\right)  =%
{\textstyle\int\nolimits_{\mathcal{H}_{\mathbb{\infty}}^{^{\ast}}\left(
\mathbb{R}\right)  }}
\exp\left(  \lambda i\left\langle W,f\right\rangle \right)  d\mathrm{P}%
_{H}\left(  W\right)  =\left\langle \left\langle \boldsymbol{\Phi}_{H}%
,\exp\left(  \lambda i\left\langle W,f\right\rangle \right)  \right\rangle
\right\rangle , \label{Eq_16AA}%
\end{equation}
for any $\lambda\in\mathbb{R}$.

In order to establish (\ref{Eq_definition}), it is sufficient to show that
(\ref{Eq_definition}) holds for $\Psi$ in a dense subspace of \ $\left(
L_{\mathbb{C}}^{2}\right)  $, we can choose the linear span of the exponential
functions of the form $\exp\alpha\left\langle W,f\right\rangle $ for
$\alpha\in\mathbb{C}$, $f\in\mathcal{H}_{\infty}\left(  \mathbb{R}\right)  $,
cf. \cite[Proposition 1.9]{Hida et al}. On the other hand, since
$\boldsymbol{\Phi}_{H}\in\mathcal{H}_{-l,-k.-1}(\mathbb{C})$ for some $l$,
$k\in\mathbb{N}$, and $\left(  L_{\mathbb{C}}^{2}\right)  $ is dense in
$\mathcal{H}_{-l,-k.-1}(\mathbb{C})$, it is sufficient to establish
(\ref{Eq_definition}) when $\boldsymbol{\Phi}_{H}\in\left(  L_{\mathbb{C}}%
^{2}\right)  $. \ Now the result follows from (\ref{Eq_16AA}) by using the
fact that
\[
\lambda\rightarrow{\LARGE T}\boldsymbol{\Phi}_{H}\left(  \lambda f\right)  =%
{\textstyle\int\nolimits_{\mathcal{H}_{\mathbb{\infty}}^{^{\ast}}\left(
\mathbb{R}\right)  }}
\exp\left(  \lambda i\left\langle W,f\right\rangle \right)  d\mathrm{\mu
}\left(  W\right)  \text{, }\lambda\in\mathbb{R}\text{,}%
\]
has an entire analytic extension, cf. \cite[Proposition 2.2]{Hida et al}.

Conversely, assume that $d\mathrm{P}_{H}=\boldsymbol{\Phi}_{H}d\mu$, then by
Theorem \ref{Theorem1}-(iii), we have%
\begin{gather}%
{\textstyle\int\nolimits_{\mathcal{H}_{\mathbb{\infty}}^{^{\ast}}\left(
\mathbb{R}\right)  }}
e^{i\left\langle W,f\right\rangle }d\mathrm{P}_{H}\left(  W\right)  =%
{\textstyle\int\nolimits_{\mathcal{H}_{\mathbb{\infty}}^{^{\ast}}\left(
\mathbb{R}\right)  }}
e^{i\left\langle W,f\right\rangle }\boldsymbol{\Phi}_{H}\left(  W\right)
d\mu\left(  W\right) \label{Eq_16BB}\\
=\left\langle \left\langle \boldsymbol{\Phi}_{H},e^{i\left\langle
\cdot,f\right\rangle }\right\rangle \right\rangle ={\LARGE T}\boldsymbol{\Phi
}_{H}\left(  f\right)  =\exp\left\{  \int\nolimits_{\mathbb{Q}_{p}^{N}%
}\digamma\left(  f\left(  x\right)  \right)  d^{N}x\right\}  \text{,
}\nonumber
\end{gather}
for $f\in\mathcal{H}_{\infty}\left(  \mathbb{R}\right)  $. We now take
$f\left(  x\right)  =t1_{\mathbb{Z}_{p}^{N}}\left(  x\right)  $, where
$t\in\mathbb{R}$ and $1_{\mathbb{Z}_{p}^{N}}$\ is the characteristic function
of $\mathbb{Z}_{p}^{N}$. By using that $H(0)=0$, we have
\begin{equation}
\exp\left\{  \int\nolimits_{\mathbb{Q}_{p}^{N}}\digamma\left(  f\left(
x\right)  \right)  d^{N}x\right\}  =\exp\digamma(t). \label{Eq_16CC}%
\end{equation}
Now, we consider the random variable:%
\[%
\begin{array}
[c]{cccc}%
\left\langle \cdot,1_{\mathbb{Z}_{p}^{N}}\right\rangle : & \left(
\mathcal{H}_{\infty}^{\ast}\left(  \mathbb{R}\right)  ,\mathcal{B}%
,\mathrm{P}_{H}\right)  & \rightarrow & \left(  \mathbb{R},\mathcal{B}%
(\mathbb{R})\right) \\
&  &  & \\
& W & \rightarrow & \left\langle W,1_{\mathbb{Z}_{p}^{N}}\right\rangle ,
\end{array}
\]
with probability distribution $\nu_{\left\langle \cdot,1_{\mathbb{Z}_{p}^{N}%
}\right\rangle }\left(  A\right)  =\mathrm{P}_{H}\left\{  W\in\mathcal{H}%
_{\infty}^{\ast}\left(  \mathbb{R}\right)  ;\left\langle W,1_{\mathbb{Z}%
_{p}^{N}}\right\rangle \in A\right\}  $, where $A$ is a Borel subset of
$\mathbb{R}$. Then, by (\ref{Eq_16BB})-(\ref{Eq_16CC}),
\begin{equation}%
{\textstyle\int\nolimits_{\mathcal{H}_{\mathbb{\infty}}^{^{\ast}}\left(
\mathbb{R}\right)  }}
e^{it\left\langle W,f\right\rangle }d\mathrm{P}_{H}\left(  W\right)
=\int\nolimits_{\mathbb{R}}e^{itz}d\nu_{\left\langle \cdot,1_{\mathbb{Z}%
_{p}^{N}}\right\rangle }\left(  z\right)  =\exp\digamma(t). \label{Eq_16DD}%
\end{equation}

\end{proof}

We call these measures \textit{generalized white noise measures}. The moments
of the measure $\mathrm{P}_{H}$ are the Schwinger functions $\left\{
\mathcal{S}_{n}^{\boldsymbol{\Phi}_{H}}\right\}  _{n\in\mathbb{N}}$.

Since $\left(  \mathcal{G}_{\alpha,m}f\right)  \left(  x\right)  :=G\left(
x;\alpha,m\right)  \ast f\left(  x\right)  $ gives rise to a continuous
mapping from $\mathcal{H}_{\mathbb{R}}(\infty)$ into itself, then, the
conjugate operator $\widetilde{\mathcal{G}}_{\alpha,m}:\mathcal{H}%
_{\mathbb{R}}^{^{\ast}}(\infty)\rightarrow\mathcal{H}_{\mathbb{R}}^{^{\ast}%
}(\infty)$ is a measurable mapping from $\left(  \mathcal{H}_{\mathbb{R}%
}^{^{\ast}}\left(  \infty\right)  ,\mathcal{B}\right)  $ into itself. For the
sake of simplicity, we use $\mathcal{G}$ instead of $\mathcal{G}_{\alpha,m}$
and $G$ instead of $G\left(  x;\alpha,m\right)  $. We set $\mathrm{P}_{H}^{G}$
to be the image probability measure of $\mathrm{P}_{H}$ under $\widetilde
{\mathcal{G}}$, i.e. $\mathrm{P}_{H}^{G}$ is the measure on $\left(
\mathcal{H}_{\mathbb{R}}^{^{\ast}}\left(  \infty\right)  ,\mathcal{B}\right)
$ defined by
\begin{equation}
\mathrm{P}_{H}^{G}\left(  A\right)  =\mathrm{P}_{H}\left(  \widetilde
{\mathcal{G}}^{-1}\left(  A\right)  \right)  \text{, for }A\in\mathcal{B}%
\text{.} \label{probability}%
\end{equation}

The Fourier transform of $\mathrm{P}_{H}^{G}$ is given by%
\begin{equation}%
{\textstyle\int\nolimits_{\mathcal{H}_{\mathbb{\infty}}^{^{\ast}}\left(
\mathbb{R}\right)  }}
e^{i\left\langle W,f\right\rangle }d\mathrm{P}_{H}^{G}\left(  W\right)
=\exp\left\{  \int\nolimits_{\mathbb{Q}_{p}^{N}}\digamma\left\{
\int\nolimits_{\mathbb{Q}_{p}^{N}}G\left(  x-y;\alpha,m\right)  f\left(
y\right)  d^{N}y\right\}  d^{N}x\right\}  \text{,} \label{Eq_16A}%
\end{equation}
for $f\in\mathcal{H}_{\mathbb{R}}\left(  \infty\right)  $, where $\digamma$ is
given as in (\ref{Levy_char}), cf. \cite[Proposition 6.2]{Zuniga-FAA-2017},
alternatively \cite[Proposition 11.12]{KKZuniga}. Finally, (\ref{Eq_16A}) is
also valid if we replace $G=$ $G\left(  x;\alpha,m\right)  $ by $G\left(
x;\alpha,\beta,m\right)  $.

\subsection{The free Euclidean Bose field}

An important difference between the real and $p$-adic Euclidean quantum field
theories comes from the `ellipticity' of the quadratic form $\mathfrak{q}%
_{N}\left(  \xi\right)  =\xi_{1}^{2}+\cdots+\xi_{N}^{2}$. In the real case
$\mathfrak{q}_{N}\left(  \xi\right)  $\ is elliptic for any $N\geq1$. In the
$p$-adic case, $\mathfrak{q}_{N}\left(  \xi\right)  $ is not elliptic for
$N\geq5$. In the case $N=4$, there is a unique elliptic quadratic form, up to
linear equivalence, which is $\xi_{1}^{2}-s\xi_{2}^{2}-p\xi_{3}^{2}+s\xi
_{4}^{2}$, where $s\in\mathbb{Z\smallsetminus}\left\{  0\right\}  $ is a
quadratic non-residue, i.e. $\left(  \frac{s}{p}\right)  =-1$.

\subsubsection{The Archimedean free covariance function}

The free covariance function $C(x-y;m):=C(x-y)$ is the solution of the Laplace
equation%
\[
\left(  -\Delta+m^{2}\right)  C(x-y)=\delta\left(  x-y\right)  ,
\]
where $\Delta=\sum_{i=1}^{N}\frac{\partial^{2}}{\partial x_{i}^{2}}$. As a
distribution from $\mathcal{S}^{\prime}(\mathbb{R}^{N})$, the free covariance
is given by
\[
C(x-y)=\frac{1}{\left(  2\pi\right)  ^{N}}\int\limits_{\mathbb{R}^{N}}%
\frac{\exp\left(  -ik\cdot\left(  x-y\right)  \right)  }{k^{2}+m^{2}}d^{N}k,
\]
where $k$, $x$, $y\in\mathbb{R}^{N}$, $d^{N}k$ is the Lebesgue measure of
$\mathbb{R}^{N}$, $k^{2}=k\cdot k$, and $k\cdot x=\sum_{i=1}^{N}k_{i}x_{i}$.
Notice that the quadratic form used in the definition of the Fourier transform
$k^{2}$ is the same as the one used in the propagator $\frac{1}{k^{2}+m^{2}}$,
this situation does not occur in the $p$-adic case. In particular the group of
symmetries of $C(x-y)$ is the $SO(N,\mathbb{R})$. The function $C(x-y)$ has
the following properties (see \cite[Proposition 7.2.1]{Glimm-Jaffe}.):

\noindent(i) $C(x-y)$ is positive and analytic for $x-y\neq0$;

\noindent(ii) $C(x-y)\leq\exp\left(  -m\left\Vert x-y\right\Vert \right)  $ as
$\left\Vert x-y\right\Vert \rightarrow\infty$;

\noindent(iii) for $N\geq3$ and $m\left\Vert x-y\right\Vert $ in a
neighborhood of zero,%
\[
C(x-y)\sim\left\Vert x-y\right\Vert ^{-N+2},
\]

\noindent(iv) for $N=2$ and $m\left\Vert x-y\right\Vert $ in a neighborhood of
zero,%
\[
C(x-y)\sim-\ln\left(  m\left\Vert x-y\right\Vert \right)  .
\]

\subsubsection{The Archimedean free Euclidean Bose field}

Take $H_{m}$ to be the Hilbert space defined as the closure of $\mathcal{S}%
(\mathbb{R}^{N})$ with respect to the norm $\left\Vert \cdot\right\Vert _{m}$
induced by the scalar product%
\[
\left(  f,g\right)  _{m}:=\int\nolimits_{\mathbb{R}^{N}}f\left(  x\right)
\left(  -\Delta+m^{2}\right)  ^{-1}g\left(  x\right)  d^{N}x=\left(  f,\left(
-\Delta+m^{2}\right)  ^{-1}g\right)  _{L^{2}(\mathbb{R}^{N})}.
\]
Then $\mathcal{S}(\mathbb{R}^{N})\hookrightarrow H_{m}\hookrightarrow
\mathcal{S}^{\prime}(\mathbb{R}^{N})$ form a Gel'fand triple. The probability
space $\left(  \mathcal{S}^{\prime}(\mathbb{R}^{N}),\mathcal{B},\nu\right)  $,
where $\nu$\ is the centered Gaussian measure on $\mathcal{B}$ (the $\sigma
$-algebra of cylinder sets) with covariance%
\[
\int\nolimits_{\mathcal{S}^{\prime}(\mathbb{R}^{N})}\left\langle
W,f\right\rangle \left\langle W,g\right\rangle d\nu\left(  W\right)  =\left(
f,\left(  -\Delta+m^{2}\right)  ^{-1}g\right)  _{L^{2}(\mathbb{R}^{N})},
\]
for $f$, $g$ $\in\mathcal{S}(\mathbb{R}^{N})$, jointly with the coordinate
process $W\rightarrow\left\langle W,f\right\rangle $, with \ fixed
$f\in\mathcal{S}(\mathbb{R}^{N})$, is called the free Euclidean Bose field of
mass $m$ in $N$ dimensions.

\subsubsection{\label{Section_C_p}The non-Archimedean free covariance
function}

The $p$-adic free covariance $C_{p}(x-y;m):=C_{p}(x-y)$ is the solution of the
pseudodifferential equation%
\[
\left(  \boldsymbol{L}_{\alpha}+m^{2}\right)  C(x-y)=\delta\left(  x-y\right)
,
\]
where $\boldsymbol{L}_{\alpha}$ is the pseudodifferential operator defined in
(\ref{ellipticoperaator}). As a distribution from $\mathcal{D}^{\prime
}(\mathbb{Q}_{p}^{N})$, the free covariance is given by%
\[
C_{p}(x-y)=\int\limits_{\mathbb{Q}_{p}^{N}}\frac{\chi_{p}\left(  -\xi
\cdot\left(  x-y\right)  \right)  }{\left\vert \mathfrak{l}(\xi)\right\vert
_{p}^{\alpha}+m^{2}}d^{N}\xi,
\]
where $k$, $x$, $y\in\mathbb{Q}_{p}^{N}$, $d^{N}\xi$ is the Haar measure of
$\mathbb{Q}_{p}^{N}$, $\mathfrak{l}(k)$ is an elliptic polynomial of degree
$d$, and $k\cdot x=\sum_{i=1}^{N}k_{i}x_{i}$.\ In this case $\mathfrak{l}%
(k)\neq k\cdot k$, and then the symmetries of $C_{p}(x-y)$ form a subgroup of
\ the $p$-adic orthogonal group attached to the quadratic form $k\cdot k$.
There are other possible propagators, for instance
\[
\frac{1}{\left(  \left\vert \mathfrak{l}(k)\right\vert _{p}+m^{2}\right)
^{\alpha}}\text{, }\alpha>0.
\]
For a discussion on the possible scalar propagators, in the $p$-adic setting,
the reader may consult \cite{Smirnov}.

The function $C_{p}(x-y)$ satisfies (see \cite[Proposition 4.1]%
{Zuniga-FAA-2017}, or \cite[Proposition 11.1]{KKZuniga}):

\noindent(i) $C_{p}(x-y)$ is positive and locally constant for $x-y\neq0$;

\noindent(ii) $C_{p}(x-y)\leq C\left\Vert x-y\right\Vert _{p}^{-\alpha d-N}$
as $\left\Vert x-y\right\Vert _{p}\rightarrow\infty$;

\noindent(iii) for $0<\alpha d<N$ and $\left\Vert x-y\right\Vert _{p}\leq1$,%
\[
C_{p}(x-y)\leq C\left\Vert x-y\right\Vert _{p}^{\alpha d-N};
\]

\noindent(iv) for $N=\alpha d$ and $\left\Vert x-y\right\Vert _{p}\leq1$,%
\[
C_{p}(x-y)\leq C_{0}-C_{1}\ln\left\Vert x-y\right\Vert _{p}.
\]

\subsubsection{The non-Archimedean free Euclidean Bose field}

Take $H_{m}$ to be the Hilbert space defined as the closure of $\mathcal{D}%
_{\mathbb{R}}(\mathbb{Q}_{p}^{N})$ with respect to the norm $\left\Vert
\cdot\right\Vert _{m}$ induced by the scalar product%
\[
\left(  f,g\right)  _{m}:=\int\nolimits_{\mathbb{Q}_{p}^{N}}\overline
{\widehat{f}\left(  \xi\right)  }\widehat{g}\left(  \xi\right)  \frac{d^{N}%
\xi}{\left\vert \mathfrak{l}(\xi)\right\vert _{p}^{\alpha}+m^{2}}=\left(
f,\left(  \boldsymbol{L}_{\alpha}+m^{2}\right)  ^{-1}g\right)  _{L_{\mathbb{R}%
}^{2}(\mathbb{Q}_{p}^{N})}.
\]
By using that
\[
C_{0}\left[  \xi\right]  _{p}^{\left\lfloor d\alpha\right\rfloor }%
\leq\left\vert \mathfrak{l}(\xi)\right\vert _{p}^{\alpha}+m^{2}\leq
C_{1}\left[  \xi\right]  _{p}^{\left\lceil d\alpha\right\rceil },
\]
where $\left\lceil t\right\rceil =\min\left\{  m\in\mathbb{Z};m\geq x\right\}
$ and $\left\lfloor t\right\rfloor =\max\left\{  m\in\mathbb{Z};m\leq
x\right\}  $, we have%
\[
\mathcal{H}_{-\left\lfloor d\alpha\right\rfloor }(\mathbb{R})\hookrightarrow
H_{m}\hookrightarrow\mathcal{H}_{-\left\lceil d\alpha\right\rceil }%
(\mathbb{R}).
\]

Then $\mathcal{H}_{\infty}(\mathbb{R})\hookrightarrow H_{m}\hookrightarrow
\mathcal{H}_{\infty}^{\ast}(\mathbb{R})$ from a Gel'fand triple. The
probability space $\left(  \mathcal{H}_{\infty}^{\ast}(\mathbb{R}%
),\mathcal{B},\nu_{d,\alpha}\right)  $, where $\nu_{d,\alpha}$\ is the
centered Gaussian measure on $\mathcal{B}$ (the $\sigma$-algebra of cylinder
sets) with covariance%
\[
\int\nolimits_{\mathcal{H}_{\infty}^{\ast}(\mathbb{R})}\left\langle
W,f\right\rangle \left\langle W,g\right\rangle d\nu_{d,\alpha}(W)=\left(
f,\left(  \boldsymbol{L}_{\alpha}+m^{2}\right)  ^{-1}g\right)  _{L_{\mathbb{R}%
}^{2}(\mathbb{Q}_{p}^{N})},
\]
for $f$, $g$ $\in\mathcal{H}_{\infty}(\mathbb{R})$, jointly with the
coordinate process $W\rightarrow\left\langle W,f\right\rangle $, with \ fixed
$f\in\mathcal{H}_{\infty}(\mathbb{R}))$, is called the non-Archimedean free
Euclidean Bose field of mass $m$ in $N$ dimensions.

If $N=4$ and $d=2$, then there is a unique elliptic quadratic form up to
linear equivalence. If $N\geq5$ and $\mathfrak{l}(\xi)$ is an elliptic
polynomial of degree $d$, then $\left\vert \mathfrak{l}(\xi)\right\vert
_{p}^{\frac{2}{d}}$ is a homogeneous function of degree $2$ that vanishes only
at the origin. We can use this function as the symbol for a pseudodifferential
operator, such operator is a $p$-adic analogue of $-\Delta$ in dimension $N$.

If we use the propagator $\frac{1}{\left(  \left\vert \mathfrak{l}%
(k)\right\vert _{p}+m^{2}\right)  ^{\alpha}}$ instead of $\frac{1}{\left\vert
\mathfrak{l}(k)\right\vert _{p}^{\alpha}+m^{2}}$, similar results are obtained
due to the fact that
\[
\text{and }C_{0}^{\prime}\left[  k\right]  _{p}^{\left\lfloor d\alpha
\right\rfloor }\leq\left(  \left\vert \mathfrak{l}(k)\right\vert _{p}%
+m^{2}\right)  ^{\alpha}\leq C_{1}^{\prime}\left[  k\right]  _{p}^{\left\lceil
d\alpha\right\rceil }.
\]
We prefer using propagator $\frac{1}{\left\vert \mathfrak{l}(k)\right\vert
_{p}^{\alpha}+m^{2}}$ because the corresponding `Laplace equation' has been
studied extensively in the literature. On the other hand, $\frac{\partial
u\left(  x,t\right)  }{\partial t}+\boldsymbol{L}_{\alpha}u\left(  x,t\right)
=0$, with $x\in\mathbb{Q}_{p}^{N}$, $t>0$, behaves like a `heat equation',
i.e. the semigroup associated to\ this equation is a Markov semigroup, see
\cite[Chapter 2]{Zuniga-LNM-2016}, which means that $-\boldsymbol{L}_{\alpha}$
can be considered \ as $p$-adic version of \ the Laplacian.

\subsection{\label{Sect_symmetries}Symmetries}

Given a polynomial $\mathfrak{a}\left(  \xi\right)  \in\mathbb{Q}_{p}\left[
\xi_{1},\cdots,\xi_{n}\right]  $ and $\Lambda\in GL_{N}\left(  \mathbb{Q}%
_{p}\right)  $, we say that $\Lambda$ \textit{preserves} $\mathfrak{a}$ if
$\mathfrak{a}\left(  \xi\right)  =\mathfrak{a}\left(  \Lambda\xi\right)  $,
for all $\xi\in\mathbb{Q}_{p}^{N}$. By simplicity, we use $\Lambda x$ to mean
$\left[  \Lambda_{ij}\right]  x^{T}$, $x=\left(  x_{1},\cdots,x_{N}\right)
\in\mathbb{Q}_{p}^{N}$, where we identify $\Lambda$ with the matrix $\left[
\Lambda_{ij}\right]  $.

Let $\mathfrak{q}_{N}\left(  \xi\right)  =\xi_{1}^{2}+\cdots+\xi_{N}^{2}$ be
the elliptic quadratic form used in the definition of the Fourier transform,
and let $\mathfrak{l}\left(  \xi\right)  $ be the elliptic polynomial that
appears in the symbol of the operator $\boldsymbol{L}_{\alpha}$. We define the
homogeneous Euclidean group of $\mathbb{Q}_{p}^{N}$ relative to $\mathfrak{q}%
\left(  \xi\right)  $\ and $\mathfrak{l}\left(  \xi\right)  $, denoted as
$E_{0}\left(  \mathbb{Q}_{p}^{N}\right)  :=E_{0}\left(  \mathbb{Q}_{p}%
^{N};\mathfrak{q},\mathfrak{l}\right)  $, as the subgroup of $GL_{N}\left(
\mathbb{Q}_{p}\right)  $ whose elements preserve $\mathfrak{q}\left(
\xi\right)  $\ and $\mathfrak{l}\left(  \xi\right)  $ simultaneously. Notice
that if $\mathbb{O}(\mathfrak{q}_{N})$ is the orthogonal group of
$\mathfrak{q}_{N}$, then $E_{0}\left(  \mathbb{Q}_{p}^{N}\right)  $ is a
subgroup of $\mathbb{O}(\mathfrak{q}_{N})$. We define the inhomogeneous
Euclidean group, denoted as $E\left(  \mathbb{Q}_{p}^{N}\right)
\allowbreak:=E\left(  \mathbb{Q}_{p}^{N};\mathfrak{q},\mathfrak{l}\right)  $,
to be the group of transformations of the form $\left(  a,\Lambda\right)
x=a+\Lambda x$, for $a,x\in\mathbb{Q}_{p}^{N}$, $\Lambda\in E_{0}\left(
\mathbb{Q}_{p}^{N}\right)  $.

In the real case $\mathfrak{q}_{N}=\mathfrak{l}\left(  \xi\right)  $ and thus
the homogeneous Euclidean group is $SO(N,\mathbb{R})$. In the $p$-adic case,
$E_{0}\left(  \mathbb{Q}_{p}^{N};\mathfrak{q},\mathfrak{l}\right)  $ \ is a
subgroup of $\mathbb{O}(\mathfrak{q}_{N})$, in addition, it is not a
straightforward matter to decide whether or not $E_{0}\left(  \mathbb{Q}%
_{p}^{N};\mathfrak{q},\mathfrak{l}\right)  $ is non trivial. For this reason,
we approach the Green kernels in a different way than do in \cite{GS1999},
which is based on \cite{Streater and Wightman}.

Notice that $\left(  a,\Lambda\right)  ^{-1}x=\Lambda^{-1}\left(  x-a\right)
$. Let $\left(  a,\Lambda\right)  $ be a transformation in $E\left(
\mathbb{Q}_{p}^{N}\right)  $, the action of $\left(  a,\Lambda\right)  $ on a
function $f\in\mathcal{H}_{\mathbb{\infty}}$ is defined by%
\[
\left(  \left(  a,\Lambda\right)  f\right)  \left(  x\right)  =f\left(
\left(  a,\Lambda\right)  ^{-1}x\right)  \text{, for\ }x\in\mathbb{Q}_{p}%
^{N},
\]
and on a functional $W\in\mathcal{H}_{\mathbb{\infty}}^{^{\ast}}$, by%
\[
\left\langle \left(  a,\Lambda\right)  W,f\right\rangle :=\left\langle
W,\left(  a,\Lambda\right)  ^{-1}f\right\rangle \text{, for }f\in
\mathcal{H}_{\mathbb{\infty}}\left(  \mathbb{R}\right)  .
\]
These definitions can be extended to elements of the spaces $\mathcal{H}%
_{\mathbb{\infty}}^{\otimes n}$ and $\mathcal{H}_{\mathbb{\infty}}%
^{\ast\otimes n}$, by taking
\[
\left(  a,\Lambda\right)  \left(  f_{1}\otimes\cdots\otimes f_{n}\right)
:=\left(  a,\Lambda\right)  ^{-1}f_{1}\otimes\cdots\otimes\left(
a,\Lambda\right)  ^{-1}f_{n}.
\]
In general, if $F:\mathcal{H}_{\mathbb{\infty}}^{\otimes n}\rightarrow
\mathcal{X}$ is linear $\mathcal{X}$-valued functional, where $\mathcal{X}$ is
a vector space, we define%
\[
\left(  \left(  a,\Lambda\right)  F\right)  \left(  f_{1}\otimes\cdots\otimes
f_{n}\right)  =F\left(  \left(  a,\Lambda\right)  \left(  f_{1}\otimes
\cdots\otimes f_{n}\right)  \right)  ,
\]
and we say that $F$ is \textit{Euclidean invariant} if and only if $\left(
a,\Lambda\right)  F=F$ for any $\left(  a,\Lambda\right)  \in E\left(
\mathbb{Q}_{p}^{N}\right)  $.

\begin{definition}
We call a distribution $\boldsymbol{\Phi}=\sum_{n=0}^{\infty}\left\langle
\Phi_{n},:\cdot^{\otimes n}:\right\rangle \in\left(  \mathcal{H}%
_{\mathbb{\infty}}\right)  ^{-1}$, with $\Phi_{n}\in\mathcal{H}%
_{\mathbb{\infty}}^{^{\ast\widehat{\otimes}n}}$, Euclidean invariant if and
only if the functional $\left\langle \Phi_{n},\cdot\right\rangle $ is
Euclidean invariant for any $n\in\mathbb{N}$.
\end{definition}

It follows from this definition that $\boldsymbol{\Phi}\in\left(
\mathcal{H}_{\mathbb{\infty}}\right)  ^{-1}$ is Euclidean invariant if and
only if ${\LARGE S}\boldsymbol{\Phi}$\ and ${\LARGE T}\boldsymbol{\Phi}$ are
Euclidean invariant.

\section{Schwinger functions and convoluted white noise}

We set $G:=$ $G\left(  x;m,\alpha\right)  $ for the Green function
(\ref{GreenFunc}). For $\boldsymbol{\Phi}\in\left(  \mathcal{H}%
_{\mathbb{\infty}}\right)  ^{-1}$, we define $\boldsymbol{\Phi}^{G}$\ as%
\begin{equation}
\left(  {\LARGE T}\boldsymbol{\Phi}^{G}\right)  \left(  g\right)  =\left(
{\LARGE T}\boldsymbol{\Phi}\right)  \left(  G\ast g\right)  \text{, \ }%
g\in\mathcal{U}\text{,}\label{Eq_15}%
\end{equation}
where $\mathcal{U}$ is an open neighborhood of zero. Since $\mathcal{G}%
:\mathcal{H}_{\mathbb{\infty}}\left(  \mathbb{R}\right)  \rightarrow
\mathcal{H}_{\mathbb{\infty}}\left(  \mathbb{R}\right)  $, see
(\ref{Mapping_G}), is linear and continuous, cf. \cite[Corollary
11.3]{KKZuniga}, by the characterization theorem, cf. \cite[Theorem 3]{KLS96},
or Section \ref{Section_Characterization}, $\boldsymbol{\Phi}^{G}$ is a
well-defined and unique element of $\left(  \mathcal{H}_{\mathbb{\infty}%
}\right)  ^{-1}$.

\begin{remark}
\label{Nota5}By using that $\left\langle \delta_{x},G\ast f\right\rangle
=\left\langle G\ast\delta_{x},f\right\rangle $ for any $f\in\mathcal{H}%
_{\mathbb{\infty}}$, we have that the white-noise process introduced in
Section \ref{Section_white_noise_process} satisfies
\[
\left\langle \left\langle G\ast\boldsymbol{\Phi}\left(  x\right)
,\Psi\right\rangle \right\rangle =\left\langle \left\langle \boldsymbol{\Phi
}\left(  x\right)  ,G\ast\Psi\right\rangle \right\rangle ,
\]
because $\left\langle \left\langle \boldsymbol{\Phi}\left(  x\right)
,G\ast\Psi\right\rangle \right\rangle =\left\langle \delta_{x},G\ast\psi
_{1}\right\rangle $, where $\Psi=\sum_{n=0}^{\infty}\left\langle
:\cdot^{\otimes n}:,\psi_{n}\right\rangle $, $\psi_{n}\in\mathcal{H}_{\infty
}^{\widehat{\otimes}n}$.
\end{remark}

We denote by $\left\{  S_{n}^{H,G}\right\}  _{n\in N}$ the Schwinger functions
attached to $\boldsymbol{\Phi}_{H}^{G}$.

\begin{theorem}
\label{Theorem2}With $H$ and $\boldsymbol{\Phi}_{H}$ as in Theorem
\ref{Theorem1}, then distribution $\boldsymbol{\Phi}_{H}^{G}\in\left(
\mathcal{H}_{\infty}\right)  ^{-1}$ is Euclidean invariant and is given by%
\begin{equation}
\boldsymbol{\Phi}_{H}^{G}=\exp^{\lozenge}\left(  -\int\nolimits_{\mathbb{Q}%
_{p}^{N}}H^{\lozenge}\left(  G\ast\boldsymbol{\Phi}\left(  x\right)  \right)
d^{N}x+\frac{1}{2}\left\langle \left(  \mathcal{G}^{\otimes2}-1\right)
Tr,:\cdot^{\otimes2}:\right\rangle \right)  , \label{Eq_16}%
\end{equation}
where $Tr\in\left(  \mathcal{H}_{\mathbb{\infty}}^{^{\ast}}\left(
\mathbb{Q}_{p}^{N},\mathbb{C}\right)  \right)  ^{\widehat{\otimes}2}$ denotes
the trace kernel defined by $\left\langle Tr,f\otimes g\right\rangle
=\left\langle f,g\right\rangle _{0}$, $f$, $g\in\mathcal{H}_{\mathbb{\infty}%
}\left(  \mathbb{Q}_{p}^{N},\mathbb{R}\right)  $. The Schwinger functions
$\left\{  S_{n}^{H,G}\right\}  _{n\in N}$ satisfy the conditions (OS1) and
(OS4) given in Lemma \ref{Lemma1}, and
\begin{equation}
\text{(Euclidean invariance) \ \ \ }S_{n}^{H,G}\left(  \left(  a,\Lambda
\right)  f\right)  =S_{n}^{H,G}\left(  f\right)  \text{, }f\in\left(
\mathcal{H}_{\mathbb{\infty}}\left(  \mathbb{C}\right)  \right)  ^{\otimes n},
\tag{OS2}%
\end{equation}
for any $\left(  a,\Lambda\right)  \in E\left(  \mathbb{Q}_{p}^{N}\right)  $.
\end{theorem}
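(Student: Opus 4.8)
The plan is to prove the four assertions --- membership in $\left(\mathcal{H}_{\infty}\right)^{-1}$, the Euclidean invariance, the Wick-exponential formula (\ref{Eq_16}), and the Osterwalder--Schrader conditions --- essentially in that order, with the ${\LARGE T}$-transform as the main tool throughout. By (\ref{Eq_15}) the distribution $\boldsymbol{\Phi}_{H}^{G}$ is defined by ${\LARGE T}\boldsymbol{\Phi}_{H}^{G}(g)={\LARGE T}\boldsymbol{\Phi}_{H}(G\ast g)$, and since $\mathcal{G}=\mathcal{G}_{\alpha,m}$ is continuous on $\mathcal{H}_{\infty}(\mathbb{R})$ (equivalently $\widehat{G}$ is bounded, so $\left\Vert G\ast g\right\Vert_{l}\leq m^{-2}\left\Vert g\right\Vert_{l}$), the characterization theorem of Section \ref{Section_Characterization} already yields $\boldsymbol{\Phi}_{H}^{G}\in\left(\mathcal{H}_{\infty}\right)^{-1}$, as observed before the statement. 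Combining (\ref{Eq_15}) with Theorem \ref{Theorem1}(iii), one gets the closed form
\[
{\LARGE T}\boldsymbol{\Phi}_{H}^{G}(g)=\exp\left(-\int_{\mathbb{Q}_{p}^{N}}H\bigl(i(G\ast g)(x)\bigr)+\tfrac{1}{2}\bigl((G\ast g)(x)\bigr)^{2}\,d^{N}x\right)
\]
for all $g$ in a neighborhood of zero in $\mathcal{H}_{\infty}(\mathbb{C})$. In particular ${\LARGE T}\boldsymbol{\Phi}_{H}^{G}(0)=1$ because $H(0)=0$, so $\mathbb{E}_{\mu}(\boldsymbol{\Phi}_{H}^{G})=1$ and conditions (OS1), (OS4) follow at once from Lemma \ref{Lemma1}.

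For the Euclidean invariance I would first record that the Green function $G$ is $E_{0}\left(\mathbb{Q}_{p}^{N}\right)$-invariant: if $\Lambda\in E_{0}\left(\mathbb{Q}_{p}^{N}\right)$ then $\Lambda$ preserves $\mathfrak{q}_{N}$ (so $\Lambda^{-1}=\Lambda^{T}$ for the form $\xi\cdot\eta$ and $\left\vert\det\Lambda\right\vert_{p}=1$) and preserves $\mathfrak{l}$, whence a change of variables in (\ref{GreenFunc}) gives $G(\Lambda x)=G(x)$; moreover $\widehat{G}$ is even, so $G$ is even. Consequently convolution is $E\left(\mathbb{Q}_{p}^{N}\right)$-equivariant, $G\ast\bigl((a,\Lambda)g\bigr)=(a,\Lambda)(G\ast g)$, and substituting this into the formula for ${\LARGE T}\boldsymbol{\Phi}_{H}^{G}$ together with the measure-preserving change of variable $u=(a,\Lambda)^{-1}x$ shows that ${\LARGE T}\boldsymbol{\Phi}_{H}^{G}$ is Euclidean invariant. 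By the remark following the definition of Euclidean invariance, $\boldsymbol{\Phi}_{H}^{G}$ is then Euclidean invariant; and differentiating the identity ${\LARGE T}\boldsymbol{\Phi}_{H}^{G}\bigl(\sum_{i}t_{i}(a,\Lambda)f_{i}\bigr)={\LARGE T}\boldsymbol{\Phi}_{H}^{G}\bigl(\sum_{i}t_{i}f_{i}\bigr)$ in $t_{1},\dots,t_{n}$ at the origin, via Lemma \ref{Lemma0}, gives (OS2).

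It remains to identify $\boldsymbol{\Phi}_{H}^{G}$ with the right-hand side of (\ref{Eq_16}). First one checks that this expression belongs to $\left(\mathcal{H}_{\infty}\right)^{-1}$: the integral $\int H^{\lozenge}\bigl(G\ast\boldsymbol{\Phi}(x)\bigr)d^{N}x$ is treated exactly as in the proof of Theorem \ref{Theorem1}(i), with $g$ replaced by $G\ast g$, using ${\LARGE S}H^{\lozenge}\bigl(G\ast\boldsymbol{\Phi}(x)\bigr)(g)=H\bigl((G\ast g)(x)\bigr)$ and the domination $\bigl\vert H((G\ast g)(x))\bigr\vert\leq\left\vert(G\ast g)(x)\right\vert\sum_{k\geq1}\frac{1}{k!}\left\vert H_{k}\right\vert r^{k-1}\in L^{1}\left(\mathbb{Q}_{p}^{N}\right)$ (valid since $\mathcal{H}_{\infty}\hookrightarrow L^{1}$ and $\mathcal{G}$ is continuous), so that (\ref{Eq_C}) applies; the summand $\bigl\langle(\mathcal{G}^{\otimes2}-1)Tr,:\cdot^{\otimes2}:\bigr\rangle$ lies in $\left(\mathcal{H}_{\infty}\right)^{-1}$ because $(\mathcal{G}^{\otimes2}-1)Tr\in\mathcal{H}_{\infty}^{\ast\widehat{\otimes}2}$ by Remarks \ref{Nota_tensor_2} and \ref{Nota_Trace}; their sum is in $\left(\mathcal{H}_{\infty}\right)^{-1}$ and hence so is its Wick exponential, by Section \ref{Section Wick_analytic_fun}. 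Writing $\boldsymbol{\Psi}$ for the right-hand side of (\ref{Eq_16}), and using ${\LARGE T}\boldsymbol{\Psi}(g)=\exp\bigl(-\tfrac{1}{2}\left\Vert g\right\Vert_{0}^{2}\bigr){\LARGE S}\boldsymbol{\Psi}(ig)$, $\exp^{\lozenge}={\LARGE S}^{-1}\circ\exp\circ{\LARGE S}$, the interchange formula (\ref{Eq_C}), and the identities ${\LARGE S}H^{\lozenge}\bigl(G\ast\boldsymbol{\Phi}(x)\bigr)(ig)=H\bigl(i(G\ast g)(x)\bigr)$ and $\bigl\langle(\mathcal{G}^{\otimes2}-1)Tr,(ig)^{\otimes2}\bigr\rangle=-\bigl(\left\Vert G\ast g\right\Vert_{0}^{2}-\left\Vert g\right\Vert_{0}^{2}\bigr)$, I would obtain
\[
{\LARGE T}\boldsymbol{\Psi}(g)=\exp\left(-\int_{\mathbb{Q}_{p}^{N}}H\bigl(i(G\ast g)(x)\bigr)\,d^{N}x-\tfrac{1}{2}\left\Vert G\ast g\right\Vert_{0}^{2}\right),
\]
which equals ${\LARGE T}\boldsymbol{\Phi}_{H}^{G}(g)$ since $\left\Vert G\ast g\right\Vert_{0}^{2}=\int_{\mathbb{Q}_{p}^{N}}\bigl((G\ast g)(x)\bigr)^{2}d^{N}x$. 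Since ${\LARGE T}$ (equivalently ${\LARGE S}$) is injective by the characterization theorem, $\boldsymbol{\Psi}=\boldsymbol{\Phi}_{H}^{G}$, which proves (\ref{Eq_16}).

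The bookkeeping for the changes of variables and the continuity of $\mathcal{G}$ on the Hilbertian scale are routine. I expect the main obstacles to be two. First, verifying the hypotheses of (\ref{Eq_C}) --- holomorphy of ${\LARGE S}H^{\lozenge}(G\ast\boldsymbol{\Phi}(x))(g)$ in $g$, measurability in $x$, and a uniform $L^{1}$ bound on a fixed neighborhood of zero --- so that the ${\LARGE S}$-transform may be pulled through the integral defining the potential. Second, tracking the Gaussian prefactor $\exp\bigl(-\tfrac{1}{2}\left\Vert g\right\Vert_{0}^{2}\bigr)$ against the ${\LARGE S}$-transform of the trace term, so that the two $\tfrac{1}{2}\left\Vert g\right\Vert_{0}^{2}$ contributions cancel and leave precisely $-\tfrac{1}{2}\left\Vert G\ast g\right\Vert_{0}^{2}$; this is exactly what forces the correction term $\tfrac{1}{2}\bigl\langle(\mathcal{G}^{\otimes2}-1)Tr,:\cdot^{\otimes2}:\bigr\rangle$ into (\ref{Eq_16}) and is the step most easily gotten wrong. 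Deriving $G(\Lambda x)=G(x)$ from the two compatibility conditions defining $E_{0}\left(\mathbb{Q}_{p}^{N}\right)$ is the only non-formal point in the invariance argument.
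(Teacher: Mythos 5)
Your proposal is correct and follows essentially the same route as the paper: both identify $\boldsymbol{\Phi}_{H}^{G}$ with the right-hand side of (\ref{Eq_16}) by computing the $T$-transform of each side (via (\ref{Eq_15}), Theorem \ref{Theorem1}(iii), and the $S$-transform of the Wick exponential), using the trace term $\tfrac{1}{2}\left\langle \left(\mathcal{G}^{\otimes2}-1\right)Tr,:\cdot^{\otimes2}:\right\rangle$ to turn $-\tfrac{1}{2}\left\Vert g\right\Vert_{0}^{2}$ into $-\tfrac{1}{2}\left\Vert G\ast g\right\Vert_{0}^{2}$, then invoking injectivity of the $S$/$T$-transform, with (OS1), (OS4) from Lemma \ref{Lemma1} and (OS2) from Lemma \ref{Lemma0}. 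The only divergence is that you explicitly verify the $E_{0}\left(\mathbb{Q}_{p}^{N}\right)$-invariance of $G$ and the equivariance of convolution (and the membership of the right-hand side of (\ref{Eq_16}) in $\left(\mathcal{H}_{\infty}\right)^{-1}$), details the paper's proof leaves implicit; this is added detail, not a different method.
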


\begin{proof}
By definition (\ref{Eq_15}) and Theorem \ref{Theorem1}-(iii), we have%
\begin{equation}
\left(  {\LARGE T}\boldsymbol{\Phi}_{H}^{G}\right)  \left(  g\right)
=\exp\left(  -\int\nolimits_{\mathbb{Q}_{p}^{N}}H(iG\ast g\left(  x\right)
)+\frac{1}{2}\left(  G\ast g\left(  x\right)  \right)  ^{2}\text{ }%
d^{N}x\right)  . \label{Eq_17}%
\end{equation}
On the other hand, by taking the ${\LARGE T}$-transform in (\ref{Eq_16}) and
using (\ref{Eq_13}) and Remarks \ref{Nota_tensor_2}-\ref{Nota5}, we obtain%
\begin{multline*}
\left(  {\LARGE T}\boldsymbol{\Phi}_{H}^{G}\right)  \left(  g\right)
=\exp\left(  \frac{-1}{2}\left\Vert g\right\Vert _{0}^{2}\right)  \times\\
\exp\left\{  -{\LARGE S}\left(  \int\nolimits_{\mathbb{Q}_{p}^{N}}H^{\lozenge
}\left(  G\ast\boldsymbol{\Phi}\left(  x\right)  \right)  d^{N}x\right)
\left(  ig\right)  -\frac{1}{2}{\LARGE S}\left(  \left\langle \left(
\mathcal{G}^{\otimes2}\mathcal{-}1\right)  Tr,:\cdot^{\otimes2}:\right\rangle
\right)  \left(  ig\right)  \right\}
\end{multline*}%
\begin{multline*}
=\exp\left(  \frac{-1}{2}\left\Vert g\right\Vert _{0}^{2}\right)  \times\\
\exp\left\{  -\int\nolimits_{\mathbb{Q}_{p}^{N}}H\left(  iG\ast g\left(
x\right)  \right)  d^{N}x\right\}  \exp\left\{  \frac{1}{2}{\LARGE S}\left(
\left\langle \left(  \mathcal{G}^{\otimes2}\mathcal{-}1\right)  Tr,:\cdot
^{\otimes2}:\right\rangle \right)  \left(  ig\right)  \right\} \\
=\exp\left(  \frac{-1}{2}\left\Vert g\right\Vert _{0}^{2}\right)  \exp\left\{
-\int\nolimits_{\mathbb{Q}_{p}^{N}}H\left(  iG\ast g\left(  x\right)  \right)
d^{N}x-\frac{1}{2}\left\langle \left(  \mathcal{G}^{\otimes2}\mathcal{-}%
1\right)  Tr,g\otimes g\right\rangle \right\} \\
=\exp\left(  \frac{-1}{2}\left\Vert g\right\Vert _{0}^{2}\right)  \exp\left\{
-\int\nolimits_{\mathbb{Q}_{p}^{N}}H\left(  iG\ast g\left(  x\right)  \right)
d^{N}x-\frac{1}{2}\left\langle Tr,G\ast g\otimes G\ast g-g\otimes
g\right\rangle \right\}
\end{multline*}%
\begin{equation}
=\exp\left\{  -\int\nolimits_{\mathbb{Q}_{p}^{N}}H\left(  iG\ast g\left(
x\right)  \right)  d^{N}x-\frac{1}{2}\left\langle G\ast g\otimes G\ast
g\right\rangle _{0}\right\}  \label{Eq_18AA}%
\end{equation}
Formula (\ref{Eq_16}) follows from (\ref{Eq_17})-(\ref{Eq_18AA}). Since
$\mathbb{E}_{\mu}(\boldsymbol{\Phi}_{H}^{G})=1$, conditions (OS1) and (OS4)
follow from Lemma \ref{Lemma1}, and condition (OS2) follows from Lemma
\ref{Lemma0} by using the Euclidean invariance of $\boldsymbol{\Phi}_{H}^{G}$.
\end{proof}

\begin{remark}
(i) Set $\mathcal{G}_{_{\frac{1}{2}}}:=\mathcal{G}_{\alpha,\frac{1}{2}%
,m}=\left(  \boldsymbol{L}_{\alpha}+m^{2}\right)  ^{-\frac{1}{2}}$, and
$\mathcal{G}_{_{\frac{1}{2}}}\left(  f\right)  :=G_{\frac{1}{2}}\ast f$ for
$f\in\mathcal{H}_{\infty}(\mathbb{R})$. By taking $H\equiv0$, we obtain the
free Euclidean field. Indeed, $f\rightarrow\exp\left\{  -\frac{1}%
{2}\left\langle G_{\frac{1}{2}}\ast f,G_{\frac{1}{2}}\ast f\right\rangle
_{0}\right\}  $ defines a characteristic functional. Let denote by
$\nu_{G_{\frac{1}{2}}}$ the probability measure on $\left(  \mathcal{H}%
_{\infty}^{\ast}(\mathbb{R}),\mathcal{B}\right)  $ provided by the
Bochner-Minlos theorem. Then%
\begin{align*}
\left(  {\LARGE T}\boldsymbol{\Phi}_{0}^{G_{\frac{1}{2}}}\right)  \left(
g\right)   &  =\exp\left\{  -\frac{1}{2}\left\langle G_{\frac{1}{2}}\ast
g,G_{\frac{1}{2}}\ast g\right\rangle _{0}\right\}  \\
&  =\left\langle \left\langle \boldsymbol{\Phi}_{0}^{G_{\frac{1}{2}}},\exp
i\left\langle \cdot,g\right\rangle \right\rangle \right\rangle =\int
\nolimits_{\mathcal{H}_{\infty}^{\ast}(\mathbb{R})}\exp i\left\langle
W,g\right\rangle d\nu_{G_{\frac{1}{2}}}(W).
\end{align*}

(ii )Assuming that $\digamma\left(  t\right)  $, see (\ref{Levy_char}), is a
L\'{e}vy characteristic, Theorem \ref{Theorem2} implies that the probability
measure $\mathrm{P}_{H}^{G}$, see (\ref{probability}), admits
$\boldsymbol{\Phi}_{H}^{G}$ as a generalized density with respect to to white
noise measure $\mu$, i.e. $\mathrm{P}_{H}^{G}=\boldsymbol{\Phi}_{H}^{G}\mu$.
Indeed, by (\ref{Eq_16A}) and (\ref{Eq_18AA}), we have
\begin{multline*}
\int\nolimits_{\mathcal{H}_{\mathbb{\infty}}^{^{\ast}}\left(  \mathbb{R}%
\right)  }e^{i\left\langle W,f\right\rangle }d\mathrm{P}_{H}^{G}\left(
W\right)  =\exp\left\{  \int\nolimits_{\mathbb{Q}_{p}^{N}}\digamma\left(
G\left(  x;\alpha,m\right)  \ast f\left(  x\right)  \right)  yd^{N}x\right\}
\\
=\exp\left\{  -\int\nolimits_{\mathbb{Q}_{p}^{N}}H\left(  iG\ast f\left(
x\right)  \right)  d^{N}x-\frac{1}{2}\left\langle G\ast f,G\ast f\right\rangle
_{0}\right\}  =\left(  {\LARGE T}\boldsymbol{\Phi}_{H}^{G}\right)  \left(
f\right) \\
=\left\langle \left\langle \boldsymbol{\Phi}_{H}^{G},\exp i\left\langle
\cdot,f\right\rangle \right\rangle \right\rangle .
\end{multline*}

\end{remark}

\subsection{Truncated Schwinger functions and the cluster property}

We denote by $P^{(n)}$ the collection of all partitions $I$ of $\left\{
1,\ldots,n\right\}  $ into disjoint subsets.

\begin{definition}
Let $\left\{  S_{n}^{H,G}\right\}  _{n\in\mathbb{N}}$ be a sequence of
Schwinger functions, with $S_{0}^{H,G}=1$, and $S_{n}^{H,G}\in\mathcal{H}%
_{\infty}^{\ast}\left(  \mathbb{Q}_{p}^{Nn},\mathbb{C}\right)  $ for $n\geq1$.
The truncated Schwinger functions $\left\{  S_{n,T}^{H,G}\right\}
_{n\in\mathbb{N}}$ are defined recursively by the formula%
\[
S_{n}^{H,G}\left(  f_{1}\otimes\cdots\otimes f_{n}\right)  =\sum\limits_{I\in
P^{(n)}}\prod\limits_{\left\{  j_{1},\ldots,j_{l}\right\}  }S_{l,T}%
^{H,G}\left(  f_{j_{1}}\otimes\cdots\otimes f_{j_{l}}\right)  ,
\]
for $n\geq1$. Here for $\left\{  j_{1},\ldots,j_{l}\right\}  \in I$ we assume
that $j_{1}<\ldots<j_{l}$.
\end{definition}

\begin{remark}
By the kernel theorem, the sequence $\left\{  S_{n}^{H,G}\right\}
_{n\in\mathbb{N}}$ uniquely determines the sequence $\left\{  S_{n,T}%
^{H,G}\right\}  _{n\in\mathbb{N}}$ and vice versa. \ All the $S_{n}^{H,G}$ are
Euclidean (translation) invariant if and only if all the $S_{n,T}^{H,G}$ are
Euclidean (translation) invariant. The same equivalence holds for
`temperedness' ( i.e. membership to $\left(  \mathcal{H}_{\infty}\right)
^{-1}$).
\end{remark}

\begin{definition}
Let $a\in\mathbb{Q}_{p}^{N}$, $a\neq0$, and $\lambda\in\mathbb{Q}_{p}$. Let
$T_{a\lambda}$ denote the representation of the translation by $a\lambda$ on
$\mathcal{H}_{\infty}\left(  \mathbb{Q}_{p}^{Nn},\mathbb{R}\right)  $. Take
$n$, $m\geq1$, $f_{1},\cdots,f_{n}\in\mathcal{H}_{\infty}\left(
\mathbb{Q}_{p}^{N},\mathbb{R}\right)  $.

(OS5)(\textbf{Cluster property}) A sequence of Schwinger functions $\left\{
S_{n}^{H,G}\right\}  _{n\in\mathbb{N}}$ has the cluster property if for all
$n$, $m\geq1$, it verifies that
\begin{align}
&  \lim_{\left\vert \lambda\right\vert _{p}\rightarrow\infty}\left\{
S_{m+n}^{H,G}\left(  f_{1}\otimes\cdots\otimes f_{m}\otimes T_{a\lambda
}\left(  f_{m+1}\otimes\cdots\otimes f_{m+n}\right)  \right)  \right\}
\label{Cluster_1}\\
&  =S_{m}^{H,G}\left(  f_{1}\otimes\cdots\otimes f_{m}\right)  S_{n}%
^{H,G}\left(  f_{m+1}\otimes\cdots\otimes f_{m+n}\right)  .\nonumber
\end{align}

(\textbf{Cluster property of truncated Schwinger functions}) A sequence of
truncated Schwinger functions $\left\{  S_{n,T}^{H,G}\right\}  _{n\in
\mathbb{N}}$ has the cluster property, if for all $n$, $m\geq1$, it verifies
that%
\begin{equation}
\lim_{\left\vert \lambda\right\vert _{p}\rightarrow\infty}S_{m+n,T}%
^{H,G}\left(  f_{1}\otimes\cdots\otimes f_{m}\otimes T_{a\lambda}\left(
f_{m+1}\otimes\cdots\otimes f_{m+n}\right)  \right)  =0. \label{Cluster_2}%
\end{equation}

\end{definition}

\begin{remark}
\label{Nota_Cluster}In the Archimedean case, it is possible to replace
$\lim_{\lambda\rightarrow\infty}\left(  \cdot\right)  $ in (\ref{Cluster_1})
and (\ref{Cluster_2}) by $\lim_{\lambda\rightarrow\infty}\left\vert
\lambda\right\vert ^{m}\left(  \cdot\right)  $ for arbitrary $m$, cf.
\cite[Remark 4.4]{Albeverio-et-al-3}. This is possible because Schwartz
functions decay at infinity faster than any polynomial function. This is not
possible in the $p$-adic case, because the elements of our `$p$-adic Schwartz
space $\mathcal{H}_{\infty}\left(  \mathbb{Q}_{p}^{N},\mathbb{R}\right)  $'
only have a polynomial decay at infinity. For instance, consider \ the
one-dimensional $p$-adic heat kernel \ $Z(x;t)=\mathcal{F}_{\xi\rightarrow
x}^{-1}\left(  e^{-t\left\vert \xi\right\vert _{p}^{\alpha}}\right)  $, \ for
$t>0$, and $\alpha>0$, which is an element of $\mathcal{H}_{\infty}\left(
\mathbb{Q}_{p},\mathbb{R}\right)  $. The Fourier transform $e^{-t\left\vert
\xi\right\vert _{p}^{\alpha}}$\ of \ $Z(x;t)$ decays faster that any
polynomial function in $\left\vert \xi\right\vert _{p}$. However, $Z(x;t)$ has
only a polynomial decay at infinity, \ more precisely,%
\[
Z(x;t)\leq C\frac{t}{\left(  t^{\frac{1}{\alpha}}+\left\vert x\right\vert
_{p}\right)  ^{\alpha+1}}\text{, }t>0\text{, }x\in\mathbb{Q}_{p},
\]
cf. \cite[Lemma 4.1]{Koch}.
\end{remark}

\begin{lemma}
\label{Lemma_4}Let $H(z)=%
{\textstyle\sum\nolimits_{n=0}^{\infty}}
H_{n}z^{n}$, $z\in U\subset\mathbb{C}$, and $G$ as in Theorem \ref{Theorem2},
and $f_{1},\cdots,f_{n}\in\mathcal{H}_{\infty}\left(  \mathbb{Q}_{p}%
^{N},\mathbb{R}\right)  $. Assume that $\digamma\left(  t\right)
=-H(it)-\frac{1}{2}t^{2}$, $t\in\mathbb{R}$ is a L\'{e}vy characteristic, then
the truncated Schwinger functions are given by%
\begin{equation}
S_{n,T}^{H,G}\left(  f_{1}\otimes\cdots\otimes f_{n}\right)  =\left\{
\begin{array}
[c]{lll}%
-H_{n}\int\nolimits_{\mathbb{Q}_{p}^{N}}\prod\limits_{i=1}^{n}G\ast
f_{i}\left(  x\right)  d^{N}x & \text{for } & n\geq2\\
&  & \\
(-H_{2}+1)\int\nolimits_{\mathbb{Q}_{p}^{N}}G\ast f_{1}\left(  x\right)  G\ast
f_{2}\left(  x\right)  d^{N}x & \text{for} & n=2.
\end{array}
\right.  \label{Eq_20}%
\end{equation}

\end{lemma}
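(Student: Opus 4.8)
The plan is to read the truncated Schwinger functions off the ${\LARGE T}$-transform of $\boldsymbol{\Phi}_{H}^{G}$, exploiting its exponential (moment--cumulant) structure. By Theorem \ref{Theorem2}, and more precisely by formula (\ref{Eq_18AA}), for $g$ in a suitable neighbourhood of $0$ in $\mathcal{H}_{\infty}(\mathbb{C})$ one has
\[
{\LARGE T}\boldsymbol{\Phi}_{H}^{G}(g)=\exp\{\Sigma(g)\},\qquad \Sigma(g):=-\int_{\mathbb{Q}_{p}^{N}}H(iG\ast g(x))\,d^{N}x-\tfrac{1}{2}\langle G\ast g,G\ast g\rangle_{0},
\]
with $\Sigma(0)=0$, so that $\Sigma$ is the logarithm of ${\LARGE T}\boldsymbol{\Phi}_{H}^{G}$ normalised to vanish at the origin. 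Fix $f_{1},\dots,f_{n}\in\mathcal{H}_{\infty}(\mathbb{Q}_{p}^{N},\mathbb{R})$ and write $u_{j}:=G\ast f_{j}\in\mathcal{H}_{\infty}(\mathbb{R})$. First I would check that $(t_{1},\dots,t_{n})\mapsto\Sigma(t_{1}f_{1}+\cdots+t_{n}f_{n})$ is real-analytic on a neighbourhood of the origin in $\mathbb{R}^{n}$: this is exactly the content of the estimates used for Theorem \ref{Theorem1}-(i) (Claim A together with $\mathcal{H}_{\infty}\subset L^{1}\cap L^{\infty}$ and the continuity of $\mathcal{G}$), applied with $g=t_{1}f_{1}+\cdots+t_{n}f_{n}$, and it legitimises differentiating under the integral sign below.

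Next I would establish the combinatorial identity
\[
S_{n,T}^{H,G}(f_{1}\otimes\cdots\otimes f_{n})=(-i)^{n}\frac{\partial^{n}}{\partial t_{1}\cdots\partial t_{n}}\,\Sigma(t_{1}f_{1}+\cdots+t_{n}f_{n})\big|_{t_{1}=\cdots=t_{n}=0}.
\]
Set $\mathcal{L}(\vec t)=\Sigma(\sum_{j}t_{j}f_{j})$ and $\mathcal{M}(\vec t)={\LARGE T}\boldsymbol{\Phi}_{H}^{G}(\sum_{j}t_{j}f_{j})=e^{\mathcal{L}(\vec t)}$. Iterating the product rule on $e^{\mathcal{L}}$ and using $\mathcal{M}(0)=\mathbb{E}_{\mu}(\boldsymbol{\Phi}_{H}^{G})=1$, one gets the exponential formula for squarefree mixed partials, $\partial_{t_{1}}\cdots\partial_{t_{n}}\mathcal{M}\big|_{0}=\sum_{I\in P^{(n)}}\prod_{B\in I}\big(\prod_{j\in B}\partial_{t_{j}}\big)\mathcal{L}\big|_{0}$. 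By Lemma \ref{Lemma0} the left-hand side equals $i^{n}S_{n}^{H,G}(f_{1}\otimes\cdots\otimes f_{n})$; writing $\sigma_{l}(f_{j_{1}}\otimes\cdots\otimes f_{j_{l}}):=(-i)^{l}\big(\prod_{r}\partial_{t_{j_{r}}}\big)\Sigma(\sum_{r}t_{j_{r}}f_{j_{r}})\big|_{0}$ and distributing $i^{n}=\prod_{B\in I}i^{|B|}$, the identity becomes $S_{n}^{H,G}(f_{1}\otimes\cdots\otimes f_{n})=\sum_{I\in P^{(n)}}\prod_{B\in I}\sigma_{|B|}\big(\bigotimes_{j\in B}f_{j}\big)$. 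This is precisely the recursion defining $\{S_{n,T}^{H,G}\}$; since that recursion has a unique solution one concludes $\sigma_{l}=S_{l,T}^{H,G}$, which is the displayed identity.

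Finally I would compute the right-hand side explicitly. In $\Sigma(\sum_{j}t_{j}f_{j})$ the quadratic term $-\tfrac{1}{2}\langle\sum_{j}t_{j}u_{j},\sum_{j}t_{j}u_{j}\rangle_{0}$ is homogeneous of degree $2$ in $\vec t$, hence survives the $n$-fold squarefree derivative at $0$ only when $n=2$, contributing $-\langle u_{1},u_{2}\rangle_{0}=-\int_{\mathbb{Q}_{p}^{N}}G\ast f_{1}(x)\,G\ast f_{2}(x)\,d^{N}x$. In $-\int_{\mathbb{Q}_{p}^{N}}H(i\sum_{j}t_{j}u_{j}(x))\,d^{N}x$ the coefficient of $t_{1}\cdots t_{n}$ comes only from the degree-$n$ term of $H$ and equals $-\,n!\,[z^{n}]H\cdot i^{n}\int_{\mathbb{Q}_{p}^{N}}\prod_{i=1}^{n}u_{i}(x)\,d^{N}x$ (the $n!$ being the multinomial coefficient of $t_{1}\cdots t_{n}$). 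Taking $\partial_{t_{1}}\cdots\partial_{t_{n}}\big|_{0}$, multiplying by $(-i)^{n}$ and using $(-i)^{n}i^{n}=1$, the Taylor normalisation of $H$ collapses the constants to those of (\ref{Eq_20}): for $n\neq 2$ one gets $-H_{n}\int_{\mathbb{Q}_{p}^{N}}\prod_{i=1}^{n}G\ast f_{i}(x)\,d^{N}x$, while for $n=2$ the extra summand from the free-covariance term turns the coefficient $-H_{2}$ into $-H_{2}+1$. The computations are routine; the only point demanding care is the analytic justification in the first step, which is already available from the proof of Theorem \ref{Theorem1}, so I anticipate no genuine obstacle.
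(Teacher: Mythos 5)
Your argument is correct, but it follows a genuinely different route from the paper's. The paper's proof is essentially a citation: it invokes the formula for the Schwinger functions of convoluted generalized white noise from Theorem 7.7 of \cite{Zuniga-FAA-2017} -- this is exactly where the hypothesis that $\digamma$ is a L\'evy characteristic enters, since it guarantees that $\mathrm{P}_{H}^{G}$ exists and that its moments are the $S_{n}^{H,G}$ -- and then appeals to the uniqueness of the truncated functions, reading off the coefficients as the $n$-th derivatives of $\digamma$ divided by $i^{n}$. You instead derive everything internally: the logarithm of the ${\LARGE T}$-transform (\ref{Eq_18AA}), the exponential (moment--cumulant) formula for squarefree mixed partials, Lemma \ref{Lemma0}, and uniqueness of the defining recursion. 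This buys two things: the proof is self-contained in the present paper, and it never uses the L\'evy hypothesis, so it establishes (\ref{Eq_20}) for every holomorphic $H$ with $H(0)=0$; the price is the analyticity needed to differentiate under the integral, which you correctly source from the estimates in the proof of Theorem \ref{Theorem1}(i). One caveat: be explicit about the normalization of the $H_{n}$. With the lemma's literal convention $H(z)=\sum_{n}H_{n}z^{n}$ your computation yields the coefficients $-n!\,H_{n}$ (and $-2H_{2}+1$ for $n=2$), whereas (\ref{Eq_20}) as printed corresponds to the convention $H(z)=\sum_{k}\frac{1}{k!}H_{k}z^{k}$ used earlier in the paper, i.e. $H_{n}=H^{(n)}(0)$ -- precisely the ``$n$-th derivatives divided by $i^{n}$'' of the paper's proof -- so your phrase ``the Taylor normalisation of $H$ collapses the constants'' glosses over a discrepancy that sits in the statement of the lemma rather than in your argument.
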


\begin{proof}
The result follows from the formula for the Schwinger functions given in
Theorem 7.7 in \cite{Zuniga-FAA-2017}, and the uniqueness of the truncated
Schwinger functions. The coefficients in front of the integrals in
(\ref{Eq_20}) are the $n$-th derivatives of the L\'{e}vy characteristic
divided by $i^{n}$. For the general $H$ as in Theorem \ref{Theorem2} these
coefficients are the $n$-th derivatives of \ $-\left(  H(iz)+\frac{1}{2}%
z^{2}\right)  $, $z\in U$.
\end{proof}

\begin{lemma}
\label{Lemma_5}Assume that $\alpha d>N$. Let $\boldsymbol{\Phi}$, $H$, $G$ as
in Theorem \ref{Theorem2}. Then the sequence of truncated Schwinger functions
$\left\{  S_{n,T}^{H,G}\right\}  _{n\in\mathbb{N}}$ has the cluster property.
\end{lemma}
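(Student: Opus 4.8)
The plan is to combine the closed form of the truncated Schwinger functions from Lemma \ref{Lemma_4} with an elementary dominated‑convergence argument.

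First I would substitute the explicit formula into (\ref{Cluster_2}). Write $g_i := G \ast f_i$. Since convolution commutes with translations, $G \ast (T_{a\lambda}f_j) = T_{a\lambda}(G \ast f_j)$, and Lemma \ref{Lemma_4} (whose formula, as its proof notes, is valid for $H$ as in Theorem \ref{Theorem2}) gives, for $m + n \geq 3$,
\[
S_{m+n,T}^{H,G}\!\left(f_1 \otimes \cdots \otimes f_m \otimes T_{a\lambda}\!\left(f_{m+1}\otimes\cdots\otimes f_{m+n}\right)\right) = -H_{m+n}\int_{\mathbb{Q}_p^N}\;\prod_{i=1}^{m} g_i(x)\;\prod_{j=m+1}^{m+n} g_j(x - a\lambda)\,d^N x ,
\]
and in the only remaining case $m = n = 1$ the same identity holds with $-H_{m+n}$ replaced by $1 - H_2$. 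Thus it suffices to show that this integral tends to $0$ as $|\lambda|_p \to \infty$.

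Next I would record the regularity of the factors $g_i$. By \cite[Corollary 11.3]{KKZuniga} (the continuity of the map $\mathcal{G}_{\alpha,m}$ in (\ref{Mapping_G})) each $g_i = G \ast f_i$ lies in $\mathcal{H}_{\infty}(\mathbb{R})$, hence in $L^1 \cap L^2 \cap \mathcal{C}_0(\mathbb{Q}_p^N,\mathbb{R})$ by the embeddings of Section \ref{Section1} (cf. \cite[Theorem 10.15]{KKZuniga}); the hypothesis $\alpha d > N$ moreover makes $G$ itself continuous and vanishing at infinity, although the argument only needs $g_i \in \mathcal{H}_\infty$. Set $G_1 := \prod_{i=1}^m g_i$ and $G_2 := \prod_{j=m+1}^{m+n} g_j$. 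As $m,n \geq 1$, each of these is a product of one or more functions of $L^1 \cap L^2 \cap L^\infty$: a single factor already lies in $L^1$, and a product of two lies in $L^1$ by Cauchy–Schwarz while the remaining factors are bounded, so $G_1 \in L^1(\mathbb{Q}_p^N)$ and $G_2 \in L^\infty(\mathbb{Q}_p^N)\cap \mathcal{C}_0(\mathbb{Q}_p^N,\mathbb{R})$. The integral above therefore equals a constant ($-H_{m+n}$, or $1-H_2$ when $m=n=1$) times $\int_{\mathbb{Q}_p^N} G_1(x)\,G_2(x - a\lambda)\,d^N x$ (the sign convention for the translation is immaterial), and is in particular a finite number for every $\lambda$.

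Finally I would pass to the limit. Since $a \neq 0$, $\|a\lambda\|_p = \|a\|_p\,|\lambda|_p \to \infty$ as $|\lambda|_p \to \infty$, so by the ultrametric inequality, for each fixed $x$ one has $\|x - a\lambda\|_p = \|a\lambda\|_p$ once $|\lambda|_p$ is large enough, and hence $G_2(x - a\lambda) \to 0$ because $G_2 \in \mathcal{C}_0$. The integrand is dominated, uniformly in $\lambda$, by $\|G_2\|_{L^\infty}\,|G_1(x)| \in L^1(\mathbb{Q}_p^N)$, so dominated convergence yields $\int_{\mathbb{Q}_p^N} G_1(x)\,G_2(x - a\lambda)\,d^N x \to 0$; equivalently, $G_1 \ast \check{G}_2 \in \mathcal{C}_0(\mathbb{Q}_p^N,\mathbb{R})$ (since $L^1 \ast \mathcal{C}_0 \subseteq \mathcal{C}_0$) and one is evaluating it at $a\lambda$. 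This establishes (\ref{Cluster_2}) for all $m,n \geq 1$. I do not expect a genuine obstacle here: the only point requiring attention is the bookkeeping of which $L^p$‑class the products $G_1$ and $G_2$ belong to, and the statement ultimately reflects that the truncated Schwinger functions of these models are single spacetime integrals, so translating one cluster of test functions simply drives the overlap to zero.
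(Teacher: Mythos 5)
Your proof is correct, and it follows the same overall strategy as the paper: insert the explicit formula of Lemma \ref{Lemma_4}, bound all but one group of factors, and conclude by dominated convergence using the ultrametric fact that $\Vert x-a\lambda\Vert_{p}\rightarrow\infty$ pointwise. The difference lies in where the ``vanishing at infinity'' comes from. The paper keeps the translated factor in convolution form, $T_{a\lambda}\left(  G\ast f_{m+n}\right)  \left(  x\right)  =\int G\left(  x-\lambda a-y\right)  f_{m+n}\left(  y\right)  d^{N}y$, and its Claim uses the hypothesis $\alpha d>N$ so that $\left(  \left\vert \mathfrak{l}\left(  \xi\right)  \right\vert _{p}^{\alpha}+m^{2}\right)  ^{-1}\in L^{1}$ and hence, by Riemann--Lebesgue, $G\in\mathcal{C}_{0}$, which furnishes the pointwise limit and the dominating function $\left\Vert G\right\Vert _{L^{\infty}}\left\vert f_{m+n}\right\vert$. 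You instead observe that each $g_{i}=G\ast f_{i}$ already lies in $\mathcal{H}_{\infty}\subset L^{1}\cap L^{2}\cap\mathcal{C}_{0}$ (valid for every $\alpha>0$ by \cite[Theorem 11.2, Corollary 11.3]{KKZuniga}), group the untranslated factors into $G_{1}\in L^{1}$ and the translated ones into $G_{2}\in\mathcal{C}_{0}\cap L^{\infty}$, and dominate by $\left\Vert G_{2}\right\Vert _{L^{\infty}}\left\vert G_{1}\right\vert$. This buys two things: first, your bookkeeping is cleaner than the paper's displayed intermediate estimate, which as written is $\lambda$-independent (it bounds the last translated factor by its $L^{1}$-norm) and only becomes conclusive once one keeps an untranslated $L^{1}$ factor inside the integral, exactly as you do; second, your argument never uses $\alpha d>N$, so it shows that the truncated cluster property in the unweighted form (\ref{Cluster_2}) holds for all $\alpha>0$, which is worth noting given Remark \ref{Nota_Cluster} and the introduction's suggestion that this hypothesis is essential (it may still matter for stronger, weighted versions of clustering, but not for this lemma). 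The remaining details you use --- commutation of convolution with translation, the $m+n=2$ coefficient $1-H_{2}$ versus $-H_{m+n}$ for $m+n\geq3$, and the Cauchy--Schwarz step putting products of $L^{2}$ factors into $L^{1}$ --- are all handled correctly.
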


\begin{proof}
Fix $a\in\mathbb{Q}_{p}^{N}$ and take $\lambda\in\mathbb{Q}_{p}$, $m$,
$n\geq1$, $f_{1},\cdots,f_{m+n}\in\mathcal{H}_{\infty}\left(  \mathbb{Q}%
_{p}^{N},\mathbb{R}\right)  $. By Lemma \ref{Lemma_4}, we have
\begin{align*}
&  \left\vert S_{n,T}^{H,G}\left(  f_{1}\otimes\cdots\otimes f_{n}\right)
\otimes T_{a\lambda}\left(  f_{m+1}\otimes\cdots\otimes f_{m+n}\right)
\right\vert \\
&  =\left\vert -H_{m+n}\right\vert \left\vert \int\nolimits_{\mathbb{Q}%
_{p}^{N}}\prod\limits_{i=1}^{m}\left(  G\ast f_{i}\right)  \left(  x\right)
\text{ }\prod\limits_{i=m+1}^{m+n}T_{a\lambda}\left(  G\ast f_{i}\right)
\left(  x\right)  \text{ }\right\vert
\end{align*}
We now use that $G\ast f_{i}\in\mathcal{H}_{\infty}\left(  \mathbb{Q}_{p}%
^{N},\mathbb{R}\right)  $ and that $\mathcal{H}_{\infty}\left(  \mathbb{Q}%
_{p}^{N},\mathbb{R}\right)  \subset\mathcal{C}_{0}\left(  \mathbb{Q}_{p}%
^{N},\mathbb{R}\right)  $ to get%
\begin{align*}
&  \left\vert S_{n,T}^{H,G}\left(  f_{1}\otimes\cdots\otimes f_{n}\right)
\otimes T_{a\lambda}\left(  f_{m+1}\otimes\cdots\otimes f_{m+n}\right)
\right\vert \\
&  \leq\left\vert H_{m+n}\right\vert \prod\limits_{i=1}^{m}\left\Vert G\ast
f_{i}\right\Vert _{L^{\infty}}\text{ }\prod\limits_{i=m+1}^{m+n-1}\left\Vert
T_{a\lambda}\left(  G\ast f_{i}\right)  \right\Vert _{L^{\infty}}%
\int\nolimits_{\mathbb{Q}_{p}^{N}}\left\vert T_{a\lambda}\left(  G\ast
f_{m+n}\right)  \left(  x\right)  \right\vert d^{N}x.
\end{align*}
Now, the announced result follows from the following fact:

\textbf{Claim.} If $\alpha d>N$, for any $f\in\mathcal{H}_{\infty}\left(
\mathbb{Q}_{p}^{N},\mathbb{R}\right)  $, it verifies that%
\[
\lim_{\left\vert \lambda\right\vert _{p}\rightarrow\infty}\int
\nolimits_{\mathbb{Q}_{p}^{N}}G\left(  x-\lambda a-y\right)  \left\vert
f_{m+n}\left(  y\right)  \right\vert \text{ }d^{N}y=0.
\]
Since $\alpha d>N$, by the Riemann-Lebesgue theorem, $G\in\mathcal{C}%
_{0}\left(  \mathbb{Q}_{p}^{N},\mathbb{R}\right)  $, and consequently
$G\left(  x-\lambda a-y\right)  \left\vert f_{m+n}\left(  y\right)
\right\vert \leq\left\Vert G\right\Vert _{L^{\infty}}\left\vert f_{m+n}\left(
y\right)  \right\vert \in L_{\mathbb{R}}^{1}\left(  \mathbb{Q}_{p}^{N}\right)
$. Now the Claim follows by applying the dominated convergence theorem.
\end{proof}

\begin{theorem}
\label{Theorem3}With $H$, $G$ and $\boldsymbol{\Phi}_{H}^{G}\in\left(
\mathcal{H}_{\infty}\right)  ^{-1}$as in Theorem \ref{Theorem2}. If $\alpha
d>N$, then the sequence of Schwinger functions $\left\{  S_{n}^{H,G}\right\}
_{n\in\mathbb{N}}$ has the cluster property (OS5).
\end{theorem}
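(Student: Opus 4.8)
The plan is to deduce the cluster property for the Schwinger functions $\left\{S_{n}^{H,G}\right\}_{n\in\mathbb{N}}$ from the cluster property for the truncated Schwinger functions, already established in Lemma \ref{Lemma_5}, via the combinatorial identity that \emph{defines} the $S_{n,T}^{H,G}$. Fix $a\in\mathbb{Q}_{p}^{N}\setminus\{0\}$, integers $m,n\geq1$, and $f_{1},\ldots,f_{m+n}\in\mathcal{H}_{\infty}\left(\mathbb{Q}_{p}^{N},\mathbb{R}\right)$; put $A=\{1,\ldots,m\}$, $B=\{m+1,\ldots,m+n\}$, and split the set $P^{(m+n)}$ of partitions of $A\cup B$ into the subfamily $P_{0}$ of \emph{split} partitions, whose every block lies entirely inside $A$ or entirely inside $B$, and its complement $P_{1}$ of partitions possessing at least one block meeting both $A$ and $B$. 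Writing $F_{i}=f_{i}$ for $i\le m$ and $F_{i}=T_{a\lambda}f_{i}$ for $i>m$, and expanding $S_{m+n}^{H,G}(F_{1}\otimes\cdots\otimes F_{m+n})$ through the defining recursion of the truncated functions, one obtains
\begin{equation*}
S_{m+n}^{H,G}\!\left(f_{1}\otimes\cdots\otimes f_{m}\otimes T_{a\lambda}\left(f_{m+1}\otimes\cdots\otimes f_{m+n}\right)\right)=\Sigma_{0}(\lambda)+\Sigma_{1}(\lambda),
\end{equation*}
where $\Sigma_{0}$, $\Sigma_{1}$ collect the terms indexed by $P_{0}$, $P_{1}$. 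First I would show that $\Sigma_{0}(\lambda)=S_{m}^{H,G}(f_{1}\otimes\cdots\otimes f_{m})\,S_{n}^{H,G}(f_{m+1}\otimes\cdots\otimes f_{m+n})$ for every $\lambda$, and then that $\Sigma_{1}(\lambda)\to0$ as $|\lambda|_{p}\to\infty$; combining these gives \eqref{Cluster_1}, that is (OS5).

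The identity for $\Sigma_{0}$ is pure bookkeeping. A split partition is the same datum as a pair $(I_{1},I_{2})$ with $I_{1}\in P^{(m)}$ a partition of $A$ and $I_{2}$ a partition of $B$ (reindexed as $\{1,\ldots,n\}$), and the associated product of truncated factors splits as a product over the blocks of $I_{1}$ of factors $S_{l,T}^{H,G}(f_{\cdots})$ times a product over the blocks of $I_{2}$ of factors $S_{l,T}^{H,G}(T_{a\lambda}f_{\cdots})$. On a block contained in $B$ \emph{all} arguments are translated by $a\lambda$, so the explicit formula of Lemma \ref{Lemma_4} together with the translation invariance of the Haar measure gives $S_{l,T}^{H,G}(T_{a\lambda}f_{j_{1}}\otimes\cdots\otimes T_{a\lambda}f_{j_{l}})=S_{l,T}^{H,G}(f_{j_{1}}\otimes\cdots\otimes f_{j_{l}})$. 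Summing over $P_{0}$ and invoking the defining recursion once over $P^{(n)}$ and once over $P^{(m)}$ then produces the desired factorisation of $\Sigma_{0}$, with no dependence on $\lambda$.

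For $\Sigma_{1}$, fix $I\in P_{1}$ and one of its crossing blocks $\{j_{1}<\cdots<j_{l}\}$; since its $A$-indices precede its $B$-indices, the attached factor has exactly the shape $S_{m'+n',T}^{H,G}(g_{1}\otimes\cdots\otimes g_{m'}\otimes T_{a\lambda}(h_{1}\otimes\cdots\otimes h_{n'}))$ with $m',n'\ge1$ and $g_{i},h_{j}$ among the $f_{k}$, so by Lemma \ref{Lemma_5} (equivalently \eqref{Cluster_2}) it tends to $0$ as $|\lambda|_{p}\to\infty$. Every remaining factor in the block-product for $I$ is bounded uniformly in $\lambda$: those on blocks inside $A$ are constants, those on blocks inside $B$ are constants by the translation invariance just used, and any further crossing-block factor is bounded by $|H_{l}|$ times a product of $L^{\infty}$-norms of the $G\ast f_{k}$ times one $L^{1}$-norm of such a function, all independent of $\lambda$ because $G\ast f_{k}\in\mathcal{H}_{\infty}(\mathbb{R})\subset L^{1}\cap L^{\infty}$ and translations preserve these norms. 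Hence every summand of $\Sigma_{1}$ is a uniformly bounded family multiplied by a family tending to $0$, so it tends to $0$, and $\Sigma_{1}(\lambda)\to0$ since $P_{1}$ is finite. The one genuine obstacle here is the crossing-block estimate, but that is precisely Lemma \ref{Lemma_5} applied to a sub-collection of the $f_{k}$; everything else is routine partition combinatorics, the only arithmetic subtlety being the extra $+1$ in the $n=2$ case of Lemma \ref{Lemma_4} (coming from the free covariance), which is irrelevant to both the factorisation of $\Sigma_{0}$ and the vanishing of $\Sigma_{1}$.
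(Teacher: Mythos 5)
Your argument is correct, but it takes a genuinely different route from the paper. The paper's proof of Theorem \ref{Theorem3} is essentially a citation: it invokes \cite[Theorem 4.5]{Albeverio-et-al-3}, which asserts the equivalence of the cluster property (\ref{Cluster_1}) and the truncated cluster property (\ref{Cluster_2}), and then concludes from Lemma \ref{Lemma_5}. You instead prove directly the one implication actually needed (truncated cluster property implies cluster property), by splitting the partition sum defining $S_{m+n}^{H,G}$ into split partitions, which factorize into $S_{m}^{H,G}\cdot S_{n}^{H,G}$ using the translation invariance of the truncated functions (via the explicit formula of Lemma \ref{Lemma_4}, the identity $G\ast(T_{a\lambda}f)=T_{a\lambda}(G\ast f)$ and the invariance of the Haar measure), and crossing partitions, each of whose summands contains at least one factor tending to zero by Lemma \ref{Lemma_5} while the remaining factors are bounded uniformly in $\lambda$ because translations preserve the $L^{1}$- and $L^{\infty}$-norms entering the bound from the proof of Lemma \ref{Lemma_5}. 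What your route buys is self-containedness: the cited equivalence is stated and proved in the Archimedean setting, and although its proof is purely combinatorial and transfers verbatim, your argument carries out that transfer explicitly in the $p$-adic framework, at the cost of a page of partition bookkeeping; the paper's route is shorter but rests on an external result. Your treatment of the two delicate points (translation invariance of blocks lying entirely in the translated group of variables, and the uniform-in-$\lambda$ bound on the untranslated or additional crossing factors) is sound, and you correctly observe that the exceptional coefficient $-H_{2}+1$ in the $n=2$ case of Lemma \ref{Lemma_4} is irrelevant to both steps.
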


\begin{proof}
In \cite[Theorem 4.5]{Albeverio-et-al-3} was established that the cluster
property and the truncated cluster property are equivalent. By using this
result, the announced result follows from Lemma \ref{Lemma_5}.
\end{proof}

\begin{remark}
\label{Nota_Theorem_3}The class of Schwinger functions $\left\{  S_{n}%
^{H,G}\right\}  _{n\in\mathbb{N}}$ corresponding to a distribution
$\boldsymbol{\Phi}_{H}^{G}\in\left(  \mathcal{H}_{\infty}\right)  ^{-1}$as in
Theorem \ref{Theorem2} differs of the class of Schwinger functions
corresponding to the convoluted \ generalized white noise introduced in
\cite{Zuniga-FAA-2017}. In \ order to explain the differences, let us compare
the properties of the Levy characteristic used in \ \cite{Zuniga-FAA-2017}
with the properties of the function $H$ used in this article, where
$\digamma\left(  t\right)  =-H(it)-\frac{1}{2}t^{2}$, $t\in U\subset
\mathbb{R}$. We require only that function $H$ be holomorphic at zero and
$H(0)=0$, as in \cite{GS1999}. This only impose a restriction in choosing the
coefficients in front of the integrals corresponding to the $n$-th truncated
\ Schwinger function, see (\ref{Eq_20}). On the other hand in
\cite{Zuniga-FAA-2017}, the author requires the condition that the measure $M$
\ has finite moments of all orders. This implies that $\digamma$ belongs to
$C^{\infty}(\mathbb{R})$, but $\digamma$ does not have to have a holomorphic
extension. Furthermore, since $\exp s\digamma\left(  t\right)  $ is positive
definite for any $s>0$, cf. \cite[Proposition 5.5]{Zuniga-FAA-2017}, and by
using $\digamma\left(  0\right)  =0$ and a result due Schoenberg, cf.
\cite[Theorem 7.8]{Berg-Gunnar}, we have $-\digamma\left(  t\right)
:\mathbb{R}\rightarrow\mathbb{C}$ is a negative definite analytic function.
Since $\left\vert -\digamma\left(  t\right)  \right\vert \leq C\left\vert
t\right\vert ^{2}$ for any $\left\vert t\right\vert \geq1$, \cite[Corollary
7.16]{Berg-Gunnar}, we conclude that $-\digamma\left(  t\right)  $ is a
polynomial of the degree at most $2$, \ and then $H_{n}=0$ for $n\geq3$.
\end{remark}

\bigskip


\begin{thebibliography}{99}                                                                                               %


\bibitem {Abd et al}Abdesselam Abdelmalek, Chandra Ajay, Guadagni Gianluca,
Rigorous quantum field theory functional integrals over the $p$-adics I:
anomalous dimensions. arXiv:1302.5971.

\bibitem {A-K-S}Albeverio S., Khrennikov A. Yu., Shelkovich V. M., Theory of
$p$-adic distributions: linear and nonlinear models, Cambridge University
Press, 2010.

\bibitem {Albeverio-et-al-3}Albeverio Sergio, Gottschalk Hanno, Wu Jiang-Lun,
Convoluted generalized white noise, Schwinger functions and their analytic
continuation to Wightman functions, \textit{Rev. Math. Phys}. \textbf{8}
(1996), no. 6, 763--817.

\bibitem {Albeverio-et-al-4}Albeverio Sergio, Wu Jiang Lun, Euclidean random
fields obtained by convolution from generalized white noise, \textit{J. Math.
Phys}. \textbf{36} (1995), no. 10, 5217--5245.

\bibitem {Ber-Kon}Berezanskij Yu. M., Kondratiev Yu. G., Spectral methods in
infinite-dimensional analysis. Dordrecht, Kluwer Academic Publishers, 1995.

\bibitem {Berg-Gunnar}Berg Christian, Forst Gunnar, Potential theory on
locally compact abelian groups, Springer-Verlag, New York-Heidelberg, 1975.

\bibitem {Dra01}Dragovich B., On $p-$adic and Adelic generalization of quantum
field theory, \textit{Nucl. Phy. B} Proc. Suppl. \textbf{102-103} (2001), 150-155.

\bibitem {D-K-K-V}Dragovich B., Khrennikov A. Yu., Kozyrev S. V., Volovich I.
V., On $p-$adic mathematical physics, $p-$\textit{adic Numbers Ultrametric
Anal. Appl.} \textbf{1} (2009), no. 1, 1--17.

\bibitem {DD97}Djordjevi\'{c} G. S., Dragovich B., $p-$Adic Path Integrals for
Quadratic Actions, \textit{Mod. Phys. Lett. A} \textbf{12} (1997), 1455-1463.

\bibitem {EU66}Everett C. J., Ulam S. M., On some possibilities of
generalizing the Lorentz group in the special relativity theory, \textit{J. of
Comb. Theory} \textbf{1} (1966), 248--270.

\bibitem {Gelfan-Vilenkin}Gel'fand I. M., Vilenkin N. Ya, Generalized
functions. Vol. 4. Applications of harmonic analysis, Academic Press, New
York-London, 1964.

\bibitem {Glimm-Jaffe}Glimm, James, Jaffe Arthur, Quantum physics. A
functional integral point of view, Second edition, Springer-Verlag, New York, 1987.

\bibitem {GS1999}Grothaus Martin, Streit Ludwig, Construction of relativistic
quantum fields in the framework of white noise analysis, \textit{J. Math.
Phys.} \textbf{40} (1999), no. 11, 5387--5405.

\bibitem {Gubser}Gubser Steven S., A $p$-adic version of AdS/CFT, \textit{Adv.
Theor. Math. Phys}. \textbf{21} (2017), no. 7, 1655--1678.

\bibitem {Hida et al}Hida Takeyuki, Kuo Hui-Hsiung, Potthoff J\"{u}rgen,
Streit Ludwig,White noise. An infinite dimensional calculus, Dordrecht, Kluwer
Academic Publishers, 1993.

\bibitem {Huang-Yang}Huang Zhi-yuan, Yan J.A., Introduction to Infinite
Dimensional Stochastic Analysis, Kluwer Academic, Dordrecht, 2000.

\bibitem {Jaffe}Jaffe A. M., High energy behavior in quantum field theory. I.
Strictly localizable fields, \textit{Phys. Rev.} \textbf{158} (1967), 1454--1461.

\bibitem {Koch-Sait}Kochubei A. N., Sait-Ametov M. R., Interaction measures on
the space of distributions over the field of $p-$adic numbers, \textit{Infin.
Dimens. Anal. Quantum Probab. Relat. Top.} \textbf{6} (2003), no. 3, 389--411.

\bibitem {Koch}Kochubei A. N., Pseudo-differential equations and stochastics
over non-Archimedean fields, Marcel Dekker, New York, 2001.

\bibitem {KLS96}Kondratiev Yuri G., Leukert Peter, Streit Ludwig,Wick calculus
in Gaussian analysis, \textit{Acta Appl. Math.} \textbf{44} (1996), no. 3, 269--294.

\bibitem {KS93}Kondratiev Yu. G., Streit L., Spaces of white noise
distributions: constructions, descriptions, applications. I., \textit{Rep.
Math. Phys.} \textbf{33} (1993), no. 3, 341--366.

\bibitem {KSW95}Kondratiev, Yu.G., Streit, L. and Westerkamp, W. (1995), A
Note on Positive Distributions in Gaussian Analysis, Ukrainian Math. J. 47 No.
5. Cited on 1, 14, 42

\bibitem {KphD}Kondratiev Yu G., Generalized functions in problems of infinite
dimensional analysis, PhD thesis, Kiev University, 1978.

\bibitem {Kh1}Khrennikov A. Yu., Representation of second quantization over
non-Archimedean number fields. (Russian) \textit{Dokl. Akad. Nauk SSSR}
\textbf{314 }(1990), no. 6, 1380--1384; translation in \textit{Soviet Phys.
Dokl.} \textbf{35} (1990), no. 10, 867--869.

\bibitem {Kh2}Khrennikov A. Yu, The Schr\"{o}dinger and Bargmann-Fock
representations in non-Archimedean quantum mechanics, \textit{Sov. Phys.,
Dokl.} \textbf{35 }(1990), No.7, 638-640 ; translation from \textit{Dokl.
Akad. Nauk SSSR} \textbf{313 }(1990), no.2, 325-329.

\bibitem {Khrennikov1}Khrennikov A. Yu., p-Adic Valued Distributions in
Mathematical Physics, Kluwer, Dordrecht, 1994.

\bibitem {Khrennikov2}Khrennikov A. Yu., Non-Archimedean Analysis: Quantum
Paradoxes, Dynamical Systems and Biological Models, Kluwer Academics, 1997.

\bibitem {KKZuniga}Khrennikov Andrei, Kozyrev Sergei, Z{\'{u}}{\~{n}}%
iga-Galindo W. A. , Ultrametric Equations and its Applications,
\textit{Encyclopedia of Mathematics and its Applications} 168, Cambridge
University Press, 2018.

\bibitem {Kuo}Kuo Hui-Hsiung, White noise distribution theory, CRC Press, Boca
Raton, 1996.

\bibitem {LM89}Lerner E. Y., Misarov, M. D., Scalar models in $p-$adic quantum
field theory and hierarchical models, \textit{Theor. Math. Phys.} \textbf{78}
(1989) 248--257.

\bibitem {Lopu}\L opusza\'{n}ski J., Introduction to symmetry and
supersymmetry in quantum field theory, World Scientific, 1987.

\bibitem {Lukacs}Lukacs Eugene, Characteristic functions, second edition,
Hafner Publishing, New York, 1970.

\bibitem {Mendoza-Zuniga}Mendoza-Mart\'{\i}nez M. L., Vallejo J. A.,
Z\'{u}\~{n}iga-Galindo W. A., Acausal quantum theory for non-Archimedean
scalar fields. arXiv:1805.08613.

\bibitem {Mis}Missarov M. D., Random fields on the adele ring and Wilson's
renormalization group, \textit{Annales de l'institut Henri Poincar\'{e} (A):
Physique Theorique }\textbf{50} (1989), no. 3, 357-- 367.

\bibitem {Mis1}Missarov M. D., $p-$adic $\varphi^{4}-$theory as a functional
equation problem, \textit{Lett. Math. Phys.} \textbf{39 }(1997), no.3, 253-260 .

\bibitem {Mis2}Missarov M. D., $p-$adic renormalization group solutions and
the Euclidean renormalization group conjectures, \textit{p-Adic Numbers
Ultrametric Anal. Appl.} \textbf{4 }(2012), no. 2, 109-114.

\bibitem {Obata}Obata Nobuaki, White noise calculus and Fock space,
\textit{Lecture Notes in Mathematics}, vol. 1577, Springer-Verlag, Berlin, 1994.

\bibitem {Osterwalder-Schrader}Osterwalder Konrad, Schrader Robert, Axioms for
Euclidean Green's functions,\textit{ Comm. Math. Phys}. \textbf{31} (1973), 83--112.

\bibitem {Segal}Segal I., Tensor algebras over Hilbert spaces, \textit{Trans.
Amer. Math. Soc.} \textbf{81} (1956), 106-134.

\bibitem {Simon}Simon Barry, The $P\left(  \phi\right)  _{2}$ Euclidean
(quantum) field theory, Princeton University Press, 1974.

\bibitem {Smirnov}Smirnov V. A., Renormalization in $p$-adic quantum field
theory, \textit{Modern Phys. Lett. A} \textbf{6} (1991), no. 15, 1421--1427.

\bibitem {Smirnov-2}Smirnov V. A., Calculation of general $p-$adic Feynman
amplitude, \textit{Comm. Math. Phys.} \textbf{149} (1992), no. 3, 623--636.

\bibitem {Streater and Wightman}Streater R. F., Wightman A. S., PCT, spin and
statistics, and all that, Addison-Wesley, 1989.

\bibitem {Strocchi}Strocchi F., An Introduction to Non-Perturbative
Foundations of Quantum Field Theory, \ Oxford University Press, 2013.

\bibitem {Taibleson}Taibleson M. H., Fourier analysis on local fields,
Princeton University Press, 1975.

\bibitem {Var1}Varadarajan, V. S., Non-Archimedean models for
space-time,\textit{ Modern Phys. Lett. A} \textbf{16} (2001), no. 4-6, 387--395.

\bibitem {Var2}Varadarajan V. S., Reflections on quanta, symmetries, and
supersymmetries, Springer, New York, 2011.

\bibitem {Vlad-Vol}Vladimirov V. S., Volovich I. V., $p-$adic quantum
mechanics, \textit{Comm. Math. Phys.} \textbf{123} (1989), no. 4, 659--676.

\bibitem {V-V-Z}Vladimirov V. S., Volovich I. V., Zelenov E. I., $p$-adic
analysis and mathematical physics, World Scientific, 1994.

\bibitem {Vol}Volovich I. V., $p-$adic string, \textit{Clas. Quant. Gravity}
\textbf{4} (1987), L83--L87.

\bibitem {Volovich1}Volovich I. V., Number theory as the ultimate physical
theory, $p-$\textit{adic Numbers Ultrametric Anal. Appl. }\textbf{2} (2010),
no. 1, 77--87.

\bibitem {Zuniga-FAA-2017}Z\'{u}\~{n}iga-Galindo W. A., Non-Archimedean white
noise, pseudodifferential stochastic equations, and massive Euclidean fields,
\textit{J. Fourier Anal. Appl.} \textbf{23} (2017), no. 2, 288--323.

\bibitem {Zuniga-LNM-2016}Z\'{u}\~{n}iga-Galindo W. A., Pseudodifferential
equations over non-Archimedean spaces, \textit{Lectures Notes in Mathematics}
vol. 2174, Springer, 2016.
\end{thebibliography}
\end{document}